\def\fnum@algorithm{\textbf{\algorithmname\nobreakspace\thealgorithm}}
\newtheorem{thm}{Theorem} 
\newcommand{\newjointcountertheorem}[3]{\newaliascnt{#1}{#2}\newtheorem{#1}[#1]{#3}\aliascntresetthe{#1}}
\crefname{thm}{theorem}{theorems}
\crefname{prp}{proposition}{propositions}
\crefname{lem}{lemma}{lemmas}
\crefname{cor}{corollary}{corollaries}
\crefname{dfn}{definition}{definitions}
\crefname{exl}{example}{examples}
\crefname{obsdfn}{observation}{observations}
\crefname{figure}{figure}{figures}
\crefname{algorithm}{algorithm}{algorithms}
\crefname{equation}{}{}
\newcommand{\abs}[1]{\lvert#1\rvert}
\newcommand{\RR}{{\mathbb R}}
\newcommand{\calC}{{\mathcal C}}
\newcommand{\norm}[1]{\lVert#1\rVert}
\renewcommand{\autoref}{\usecrefinsteadofautoref}
\renewcommand{\eqref}{\usecrefinsteadofeqref}
\DeclareMathOperator{\AdS}{AdS}
\title{The Holographic Entropy Cone}
\author[a,b]{Ning Bao,}
\author[c]{Sepehr Nezami,}
\author[b,d]{Hirosi Ooguri,}
\author[b]{Bogdan Stoica,}
\author[e]{James Sully,}
\author[c]{and Michael Walter}
\affiliation[a]{Institute for Quantum Information and Matter, California Institute of Technology, Pasadena, CA 91125, USA}
\affiliation[b]{Walter Burke Institute for Theoretical Physics, California Institute of Technology 452-48, Pasadena, CA 91125, USA}
\affiliation[c]{Stanford Institute for Theoretical Physics, Stanford University, Stanford, CA 94305, USA}
\affiliation[d]{Kavli Institute for the Physics and Mathematics of the Universe, University of Tokyo, Kashiwa 277-8583, Japan}
\affiliation[e]{Theory Group, SLAC National Accelerator Laboratory, Stanford University, Menlo Park, CA 94025, USA}
\emailAdd{ningbao@caltech.edu}
\emailAdd{nezami@stanford.edu}
\emailAdd{ooguri@theory.caltech.edu}
\emailAdd{bstoica@theory.caltech.edu}
\emailAdd{jsully@slac.stanford.edu}
\emailAdd{michael.walter@stanford.edu}
\abstract{%
We initiate a systematic enumeration and classification of entropy inequalities satisfied by the Ryu-Takayanagi formula for conformal field theory states with smooth holographic dual geometries.
For 2, 3, and 4 regions, we prove that the strong subadditivity and the monogamy of mutual information give the complete set of inequalities.
This is in contrast to the situation for generic quantum systems, where a complete set of entropy inequalities is not known for 4 or more regions.
We also find an infinite new family of inequalities applicable to 5 or more regions.
The set of all holographic entropy inequalities bounds the phase space of Ryu-Takayanagi entropies, defining the holographic entropy cone.
We characterize this entropy cone by reducing geometries to minimal graph models that encode the possible cutting and gluing relations of minimal surfaces.
We find that, for a fixed number of regions, there are only finitely many independent entropy inequalities.
To establish new holographic entropy inequalities, we introduce a combinatorial proof technique that may also be of independent interest in Riemannian geometry and graph theory.
}
\preprint{CALT-TH 2015-020 \\ \mbox{}~\hfill IPMU15-0074 \\ \mbox{}~\hfill SLAC-PUB-16294 \\ \mbox{}~\hfill SU-ITP-15/08}
\begin{document}
\maketitle

\section{Introduction and summary of results}

Understanding the microscopic mechanism of holography is one of the outstanding problems in quantum gravity.
In the context of the AdS/CFT correspondence, there have been hints that quantum entanglement in the boundary CFT plays a role in emergence of the bulk AdS geometry \cite{mark,swingle}.
Chief among these hints is the Ryu-Takayanagi proposal \cite{rt06,rt06long},\footnote{See \cite{2007JHEP07062H} for the covariant generalization.} which states that the entanglement entropy of a spatial region $A$ in a holographic CFT is given by
\begin{equation}
\label{eq:ryu takayanagi}
  S(A) = \min_{A'} \frac {\abs{A'}} {4 G_N},
\end{equation}
to leading order in the central charge, where the minimization is over all hypersurfaces $A'$ in the bulk time slice homologous to $A$.

There have been many profound implications of this formula.
Notably, when the formula is applied to the case of the two-sided black hole geometry, one finds that the Einstein-Rosen bridge encodes entanglement of thermofield double degrees of freedom in the boundary CFT \cite{eternal,maldacenahartman}.
Moreover, the Ryu-Takayanagi formula has been used to derive the linearized Einstein equations in AdS \cite{raamsdonketal,swingle}, as well as to obtain integrated energy conditions \cite{2014arXiv1412.1879L,Lashkari:2014kda}.
These examples all demonstrate important connections between bulk geometry and entanglement data on the boundary.

Not all CFT states have holographic duals with a smooth geometry.
Since the bulk geometry reflects boundary entanglement, we envisage that criteria for smooth bulk geometry can be partially expressed in terms of entanglement properties of the boundary CFT.
The situation is similar to the hydrodynamic description of many body systems, where the evolution of a system is approximately described in terms of emergent macroscopic degrees of freedom; the validity of the approximation can be stated in terms of properties of macroscopic observables such as density and velocity.
In this sense, holography is akin to a hydrodynamic description of the boundary entanglement with entropies as its macroscopic phase space.

We wish to identify the criteria that bound this entropic phase space for holographic states.
It has been a long-standing open question to determine whether there are any further universal entropy inequalities apart from the classical ones \cite{Pippenger86,ZhangYeung97,Pippenger03}.
For probability distributions, this question has been answered to the affirmative in the breakthrough works \cite{ZhangYeung98,Matus07}: For $n \geq 4$ random variables, there is an infinite number of independent entropy inequalities satisfied by the Shannon entropy.
In network coding theory, they give rise to tighter capacity bounds \cite{nonshannon}.
The quantum case has so far remained elusive, though there has been partial progress \cite{LindenWinter05,CadneyLindenWinter12,LindenMatusRuskaiEtAl13,GrossWalter13}.
Understanding the phase space of entanglement entropies for specific subclasses of quantum systems is also of general interest for information theorists.
Holographic states are an interesting set of quantum states, and are in many ways similar to the set of all stabilizer states.
In both cases, special restrictions on entanglement entropy 
allow for the usage of more powerful methods in attempting to characterize 
the inequalities that govern them.
For example, it is known that the Ryu-Takayanagi formula for holographic entanglement entropy satisfies the following monogamy inequality for the mutual information, which is defined by $I(A:B) = S(A) + S(B) - S(AB)$:
\begin{equation}
\label{eq:monogamy}
  I(A:BC) \geq I(A:B) + I(A:C),
\end{equation}
for any three disjoint regions $A$, $B$, $C$ \cite{hayden_headrick_maloney_2013}.
This inequality is not necessarily satisfied by generic quantum systems and therefore distinguishes those with smooth holographic duals.
It also has profound implications for the structure of holographic quantum states and their correlations.
For example, it suggests that holographically any quantum Markov chain is trivial and entropically excludes higher GHZ states \cite{hayden_headrick_maloney_2013,multiboundary_2014}.
All these observations raise the natural question of whether there are further entropic constraints required of holographic theories.
Indeed, any such entropic constraint can be understood as a necessary condition for the existence of a smooth holographic dual.

\medskip

In this paper, we initiate a systematic enumeration and classification of entropy inequalities satisfied by the Ryu-Takayanagi formula.
We will begin our study by defining the notion of the \emph{holographic entropy cone}.
For a given number of boundary regions, the holographic entropy cone parametrizes the phase space of allowed Ryu-Takayanagi entropies in arbitrary holographic CFTs.
Akin to the classical case, we show that the phase space is closed under suitably-defined addition and scaling, giving it the structure of a convex cone.
Our entropy inequalities form the facets of this cone, but it can equally be described by its extreme rays, the one-dimensional intersections of these inequalities.
These rays are extremal in the entropic phase space, and they have a privileged role in our considerations.

The framework of entropy cones allows us to study holographic entropies in a systematic fashion by using convex geometry.
We prove that, for 2, 3 and 4 regions, the monogamy of mutual information \cref{eq:monogamy} and strong subadditivity,
\[  S(AB) + S(BC) \geq S(B) + S(ABC), \]
give the complete set of inequalities.
This is in contrast to the situation for generic quantum systems, where a complete set of entropy inequalities is not known for 4 or more regions.
The pattern changes at 5 regions, where we find a new set of inequalities.
As we increase the number of regions, a new infinite family of independent entropy inequalities emerges, of which
the strong subadditivity and the monogamy of mutual information are two special cases.
This family of inequalities can be understood as a bound of the form proposed for differential entropy \cite{bartek}, but of much more general application.

The systematic study of holographic entropy has required the development of new tools, which are of considerable interest in their own right.
While a continuous geometry describes infinitely many minimal surfaces, any entropy inequality is only concerned with a finite number of boundary regions.
We show that the relevant information in the geometry can thus be reduced to a finite, weighted graph.
This \emph{graph model} is a minimal model for the geometry that faithfully reproduces all Ryu-Takayanagi entropies by a simple graph-theoretic prescription.
It also contains information on all possible ways the minimal surfaces can be cut and reassembled to generate new surfaces that bound the entanglement entropies of other bulk regions.
Graph models allow us to study holographic entropies using combinatorial methods.
One insight that follows is that, for a fixed number regions, there are only finitely many independent inequalities required by the Ryu-Takayanagi entropy formula; in other words, that the holographic entropy cones are \emph{polyhedral}.

We would also like to understand what characterizes the states that only marginally satisfy these constraints.
These states lie at the extremal boundary between the states that are realizable holographically and those that are not.
If entanglement is somehow constitutive of spacetime, then we expect extremal states to be geometries that are in some sense nearly torn apart.
It has similarly been suggested that EPR pairs create an ER bridge \cite{lenny} and that, with sufficient amounts of properly organized entanglement, these wormholes coalesce into smooth geometry.
Therefore, it may be reasonable to expect that the extremal geometries are themselves wormhole geometries.

From the perspective of entropy, the extremal geometries correspond to extreme rays of the holographic entropy cone.
The entropies for an arbitrary smooth geometry can in turn be obtained by specific convex combinations of the extreme rays.
This can be understood as making visible the individual threads of the entropy fabric.
We show that the entropies of any extreme ray can indeed be explained by certain extremal multiboundary wormhole geometries as considered in \cite{skenderis_van_rees_2011,Brill}.
In fact, we prove more generally that any graph model naturally gives rise to a multiboundary wormhole geometry that has the same entropies.
This shows that graph models provide a completely equivalent, combinatorial description of holographic entropy.

At last, graph models are also the starting point for systematically generalizing the inclusion/exclusion proofs in \cite{headrick_takayanagi_2007,hayden_headrick_maloney_2013}, which had been used to establish the strong subadditivity and monogamy of the mutual information.
To prove new inequalities, we furthermore introduce the concept of \emph{proofs by contraction}.
A proof of a holographic entropy inequality of the form $LHS \geq RHS$ is given by a cutting of the Ryu-Takayanagi surfaces of the $LHS$ and gluing of a subset of the resulting pieces into bounding surfaces for the Ryu-Takayanagi surfaces of the $RHS$.
We show that suitable maps on hypercubes describe this process in a purely combinatorial fashion; the ability to glue segments to obtain the $RHS$ is formalized as a \emph{contraction} property.
Using this criterion, we prove a new set of inequalities, revealing a hitherto unknown rich spectrum of entropic constraints in holographic theories.

\paragraph{Organization of the paper.}

In \cref{sec:entropy cone}, we define the holographic entropy cone, motivated by ideas from quantum information theory.
We discuss several examples and derive a number of general properties.
In \cref{sec:graph model}, we introduce the graph model for bulk geometries.
Together with a discrete variant of the Ryu-Takayanagi formula, we obtain a purely combinatorial description of the holographic entropy cone. 
In \cref{sec:inequalities} we introduce proofs by contraction, our method for proving holographic entropy inequalities that generalize the inclusion/exclusion arguments of \cite{headrick_takayanagi_2007,hayden_headrick_maloney_2013}.
We proceed to prove a rich new set of inequalities.
Finally, in \cref{sec:physics} we discuss the physical implications of our work.
We conclude in \cref{sec:outlook} by sketching several interesting avenues for future investigation.

\section{The holographic entropy cone}
\label{sec:entropy cone}

In this work, we are interested in the inequalities satisfied by entanglement entropies of states with holographic description
in terms of smooth bulk geometries.
Just as with Shannon and von Neumann entropy, it will be useful to define the notion of an entropy cone \cite{ZhangYeung97,Pippenger03}.

Let $A_1, \ldots, A_n$ be $n$ {\it disjoint} regions in the boundary field theory.
For any non-empty subset $I \subseteq [n]$\footnote{We employ the notation $[n]:= \{1,\dots,n\}$.}, we may compute the entropy $S(I) := S(A_I)$ of the corresponding composite boundary region $A_I = \bigcup_{i \in I} A_i$ by using the Ryu-Takayanagi prescription \cref{eq:ryu takayanagi}.
When $A_i$ itself has a boundary, its entanglement entropy is UV divergent and we need to
to choose a cut-off; geometrically, this corresponds to cutting off the bulk time slice as to obtain a compact manifold with boundary.
By listing the entropies of all these subsystems one after another, we obtain a \emph{holographic entropy vector} $(S(A_I))_{\emptyset \neq I \subseteq [n]} \in \RR^{2^n-1}$.
For example, the entropy vector of a bipartite state is given by the triple $(S(A), S(B), S(AB))$, where $A$ and $B$ denote the boundary regions (cf.~\cref{fig:cone2}).

\begin{figure}
\centering
\includegraphics[width=0.5\linewidth]{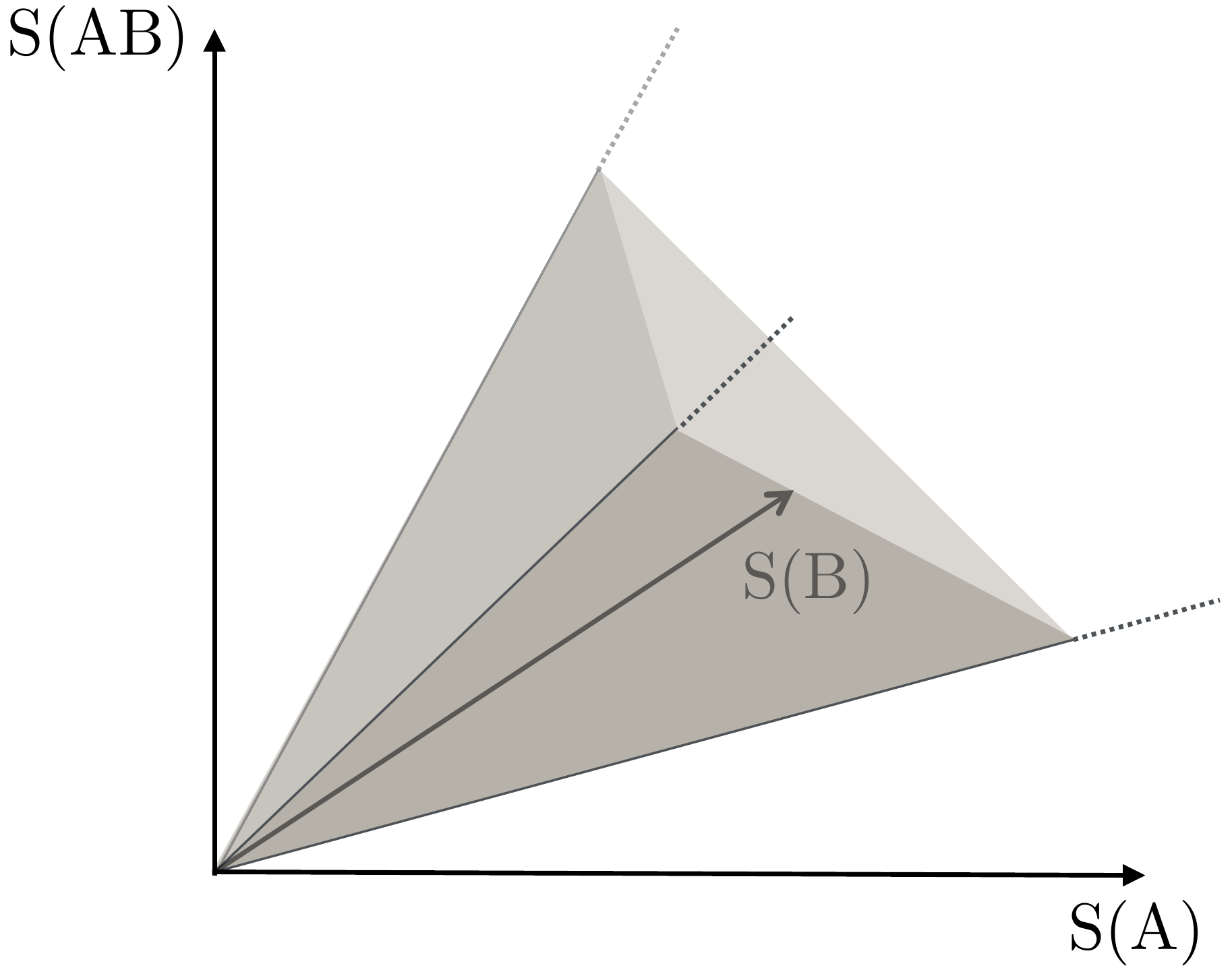}
\caption{The holographic entropy cone $\calC_2$ for two regions. See \cref{subsec:few regions} for a discussion of its facets and extreme rays.}
\label{fig:cone2}
\end{figure}

Mathematically, the Ryu-Takayanagi formula makes sense for an arbitrary bulk manifold $X$, regardless of the existence a dual CFT state on its boundary, and it will be convenient to allow for this generalization.
Thus we \emph{define} the holographic entropy of a boundary region $A \subseteq \partial X$ of an arbitrary bulk manifold $X$ by \cref{eq:ryu takayanagi}.
It is easy to see the following:

\begin{obsdfn}
\label{obs:convex cone}
The set $\calC_n$ of all holographic entropy vectors, obtained by varying the boundary regions $A_1, \dots, A_n \subseteq \partial X$ as well as the bulk manifold $X$, is a convex cone, which we will call the \textbf{holographic entropy cone} of $n$ regions.
\end{obsdfn}
\begin{proof}
  To see that $\calC_n$ is closed under multiplication by positive scalars, observe that rescaling the Riemannian metric rescales all entropies accordingly.

  To see that it is closed under addition, consider two manifolds $X$, $X'$ and boundary regions $A_1,\dots,A_n \subseteq \partial X$, $A'_1,\dots,A'_n \subseteq \partial X'$ and construct the disjoint union $X \sqcup X'$ together with the boundary regions $A''_i := A_i \sqcup A_i'$. Then the corresponding entropy vector is given by the sums $S(A''_I) = S(A_I) + S(A'_I)$.
\end{proof}

Remarkably, we will see later that this generalization to arbitrary bulk manifolds is in fact immaterial. We will show in \cref{sec:graph model} that any entropy vector obtained from an arbitrary bulk manifold can always be explained by a hyperbolic surface (\cref{thm:holographic cone}), and we will argue in \cref{sec:physics} that any hyperbolic surface -- more generally, any Riemannian manifold with constant scalar curvature and and asymptotically locally hyperbolic boundaries -- is holographically dual to a CFT state on its boundaries such that that the Ryu-Takayanagi formula \cref{eq:ryu takayanagi} gives the entanglement entropies.

\Cref{obs:convex cone} implies that, up to taking its topological closure, $\calC_n$ can be described by linear inequalities.
We will later see that $\calC_n$ is not only already closed but in fact a \emph{polyhedral} convex cone (see \cref{prp:polyhedral} below).
Therefore, $\calC_n$ is cut out by a finite number of linear entropy inequalities \[ \sum_{I \in [n]} c_I S(I) \geq 0 \] that hold for the holographic entropies defined by any $n$ boundary regions of an arbitrary bulk geometry $X$.
On a conceptual level, the convexity of the entropy cone justifies the search for \emph{linear} entropy inequalities.
Geometrically, the facets of the holographic entropy cone can be identified with a minimal set of entropy inequalities defining the cone: Any vector $(s_I) \in \RR^{2^n-1}$ that satisfies these inequalities can be realized by the Ryu-Takayanagi entropy formula for some choice of bulk geometry and boundary regions.

Like any polyhedral cone, the holographic entropy cone can alternatively be defined in terms of finitely many \emph{extreme rays}.
Those are the rays in $\calC_n$ that cannot be written as a proper convex combination of other elements in $\calC_n$.
This dual perspective will be useful in \cref{subsec:few regions} for demonstrating that a set of entropy inequalities is complete.
From a physical point of view, the extreme rays can be seen as the entropic building blocks from which an arbitrary entropy vector can be constructed by convex combination.

\subsection{Transformations and symmetries}
\label{subsec:symmetries}

Recall that any quantum state can be purified to a pure state on a larger system.
Similarly, we may always add the region $A_{n+1} = \partial X \setminus \bigcup_{i=1}^n A_i$, so that $\bigcup_{i=1}^{n+1} A_i = \partial X$.%
\footnote{$\partial X$ is the topological boundary of the bulk manifold $X$, which need not be the same as the conformal boundary; black hole horizons (if present) are also included in $\partial X$. In particular, we do not assume that $A_{n+1}$ corresponds to a region in the conformal field theory.}
We will call $A_{n+1}$ the \emph{purifying region} and oftentimes also denote it by $O$.
We note that, with the purifying region added, the entropy of any boundary region then agrees with the entropy of its complement.
This is because any Ryu-Takayanagi surface $A'$ for a boundary region $A$ is simultaneously a Ryu-Takayanagi surface for $A^c=\partial X \setminus A$:
If $a$ is a bulk region such that $\partial a = A \cup A'$ then $\partial (X\setminus a) = A^c \cup A'$.
In particular, $S(I) = S(I^c)$ for all $I \subseteq [n+1]$.
Our construction therefore reproduces the entropies of any purification of a CFT state corresponding to the bulk geometry $X$ and boundary regions $A_1, \dots, A_n$ (independent of whether this purification is geometric or not).
In this way, we obtain an \emph{embedding} $\calC_n \rightarrow \calC_{n+1}$ of holographic entropy cones.
Conversely, forgetting the region labeled $n+1$ yields a \emph{projection} $\calC_{n+1} \rightarrow \calC_n$.

It is clear that the holographic entropy cone is left invariant by relabeling the boundary regions $A_1, \dots, A_n$ of the bulk manifold $X$.
Using the above operations, we can extend this to an action of the permutation group $S_{n+1}$ by relabeling the $n+1$ boundary regions obtained by adjoining the purifying region $A_{n+1}$.
\footnote{Explicitly, if $\pi$ is a permutation of $[n+1]$ then its action on an entropy vector $s = (s_I)_{\emptyset \neq I \subseteq [n]} \in \calC_n$ is given
by replacing the entropy of subsystem $I = \{i_1, \ldots, i_k\}$ by the entropy of $\{\pi^{-1}(i_1), \dots, \pi^{-1}(i_k)\}$ or its complement in $[n+1]$, depending on which subsystem is contained in $[n]$.}
This extended \emph{permutation symmetry} is highly useful in revealing the combinatorial structure of the holographic entropy cones.
For example, it is well-known that \emph{strong subadditivity}, $S(AB) + S(BC) \geq S(B) + S(ABC)$, and \emph{weak monotonicity}, $S(AB) + S(BC) \geq S(A) + S(C)$, can be identified via this symmetry even though they correspond to different faces of the entropy cone.
The extended permutation symmetry can also be used to obtain several variants of the embedding and projection defined above:
For example, we can embed $\calC_n \rightarrow \calC_{n+1}$ by adding an empty boundary region $A_{n+1} = \emptyset$, and there exist coarse-graining projections $\calC_{n+1} \rightarrow \calC_n$ given by combining any two regions (e.g., $A'_n = A_n \cup A_{n+1}$).
We refer to \cite{majenz14} for a lucid discussion of the morphisms of the entropy cone for von Neumann entropy.

By duality, we may also apply entropy inequalities for $n$ to $n+1$ regions and conversely.
For example, the entropy inequality $I(A:B) \geq 0$ for two regions can be applied to two regions of a three-party system, but also in the form $I(A:BC) \geq 0$.
And the strong subadditivity $S(AB) + S(BC) \geq S(B) + S(ABC)$ reduces to both \emph{subadditivity} $S(A) + S(C) \geq S(AC)$ and the Araki-Lieb inequality $S(AB) \geq S(A) - S(B)$, depending on whether we set the $B$ system to be trivial or apply strong subadditivity to the purification.

\subsection{Fewer than five regions}
\label{subsec:few regions}

We will now compute the holographic entropy cones for $n \leq 4$ regions.
Recall that it had previously been shown that the Ryu-Takayanagi formula satisfies \emph{strong subadditivity} \cite{headrick_takayanagi_2007},
\begin{equation}
\label{eq:ssa}
  S(AB) + S(BC) \geq S(B) + S(ABC),
\end{equation}
and that the holographic mutual information is monogamous \cite{hayden_headrick_maloney_2013}, which can be written as
\begin{equation}
\label{eq:mmi}
  S(AB) + S(BC) + S(AC) \geq S(A) + S(B) + S(C) + S(ABC).
\end{equation}
The proofs are purely geometrical in nature and in particular hold for arbitrary bulk manifolds.
While the above inequalities refer to three boundary regions $A$, $B$ and $C$, we may use the operations described in \cref{subsec:symmetries} to obtain from \cref{eq:ssa} and \cref{eq:mmi} new inequalities for different numbers of regions.

Our strategy for computing the holographic entropy cone then is the following:
Given a number of regions $n$, we first compute the extreme rays of the convex cone $\widehat\calC_n$ cut out by all inequalities obtained in the above way.
Since all these inequalities are satisfied by holographic entropies, the cone $\widehat\calC_n$ contains the holographic entropy cone $\calC_n$.
For each extreme ray of $\widehat\calC_n$, we then try to find a bulk geometry and boundary regions such that the associated entropy vector lies on this ray.
If we succeed in doing so then convexity implies that, in fact, $\widehat\calC_n = \calC_n$. 
We have successfully implemented this strategy for $n \leq 4$ boundary regions.
That is, in each case, we find that the holographic entropy cone is cut out by strong subadditivity \cref{eq:ssa} and monogamy of the mutual information \cref{eq:mmi} (up to symmetry).
In fact, the only instances of \cref{eq:ssa} that are required are the \emph{subadditivity} inequality for single regions, $S(A) + S(C) \geq S(AC)$, and its permutations.
In particular, \emph{strong} subadditivity does not correspond to facets of $\calC_n$ but rather follows as a consequence of \cref{eq:ssa,eq:mmi}.
We now describe the extreme rays and associated bulk geometries that we have found.

\begin{figure}
\centering
\includegraphics[width=0.8\linewidth]{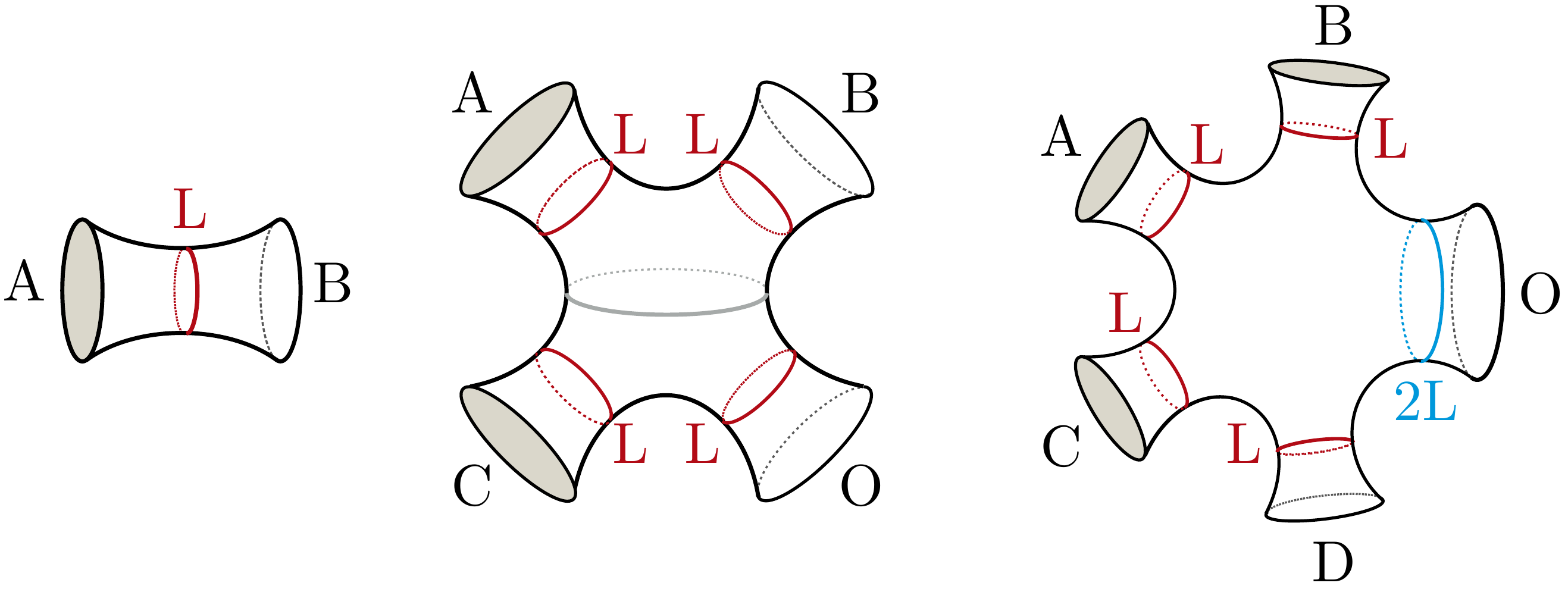}
\caption{Geometries realizing the extreme rays of the holographic entropy cones for $n \leq 4$ regions.}
\label{fig:rays234}
\end{figure}

\paragraph{Two regions.} Up to symmetry, there is a single extreme ray, corresponding to the entropies of an EPR pair shared between two systems $A$ and $B$,
\begin{equation}
\label{eq:ray2}
  (s_A, s_B; s_{AB}) = (1,1;0).
\end{equation}
In holography, this extreme ray can naturally be realized by taking the thermofield double state on two CFTs, which corresponds geometrically to the time-reflection symmetric slice of a `wormhole' or Einstein-Rosen bridge that connects two asymptotic AdS regions \cite{lenny}.
We refer to the first cartoon in \cref{fig:rays234} for an illustration of the geometry thus constructed.
The two other extreme rays can be obtained by applying the $S_3$-permutation symmetry to the ray \cref{eq:ray2}.
See \cref{fig:cone2} for an illustration of the entropy cone thus obtained.

\paragraph{Three regions.} There are two extreme rays up to symmetry. The first class of rays is inherited from $\calC_2$ and corresponds to Bell pairs shared between individual subsystems.
The second ray is new and given by
\begin{equation}
\label{eq:ray3}
  (s_A,s_B,s_C; s_{AB},s_{AC},s_{BC}; s_{ABC}) = (1,1,1;2,2,2;1).
\end{equation}
Geometrically, we can realize this ray by a hyperbolic surface with four geodesic boundaries $A$, $B$, $C$ and $O$ of equal length $L$.
Note that $O$ is the purifying region.
If we choose $L$ sufficiently small then any internal cycle is longer than twice this length. In this case, the Ryu-Takayanagi formula gives $S(A) = L$ for any single boundary, $S(AB) = 2L$ for any pair of boundaries, and $S(ABC) = S(O) = L$ since $O$ is the purifying region.
We refer to the second cartoon in \cref{fig:rays234} for an illustration of this geometry (the red cycles denote the boundary geodesics and the gray cycle is one of the many irrelevant internal cycles).
The surface thus constructed can be seen as a time slice of a multiboundary wormhole geometry.
We will see in \cref{sec:graph model,sec:physics} that this is no accident but rather reflects a general feature of holographic entropy.

\paragraph{Four regions.} There are three extreme rays up to symmetry. The first two rays are inherited from $\calC_3$ and can therefore be explained by the geometries that we have described before. A new ray is given by the following assignment of entropies:
\begin{equation}
\label{eq:ray4}
\begin{aligned}
    &(s_A,s_B,s_C; s_{AB},s_{AC},s_{AD},s_{BC},s_{BD},s_{CD}; s_{ABC},s_{ABD},s_{ACD},s_{BCD}; s_{ABCD}) \\
  =\;&(1,1,1,1; 2,2,2,2,2,2; 3;3;3;3; 2).
\end{aligned}
\end{equation}
It can be realized geometrically by considering a five-boundary wormhole geometry, where all boundaries have small length $L$ except for the purifying boundary, which has twice that length. We refer to the third cartoon in \cref{fig:rays234} for an illustration of this geometry.

\paragraph{Five and more regions.}
We will see in \cref{subsec:five} that the situation is markedly different for five and more regions.
Here, there are in general many other inequalities independent from subadditivity and monogamy, and the geometries corresponding to extreme rays are no longer of the simple form that we found above.

\paragraph{Other entropy inequalities.}
It is a highly non-trivial result that strong subadditivity \cref{eq:ssa} is valid for arbitrary density matrices \cite{LiebRuskai73}.
While strong subadditivity and its permutations characterize the quantum entropy cone for $n\leq3$ completely \cite{Pippenger03}, there are several other inequalities for $n \geq 4$ that are conjectured to hold, partly due to substantial numerical evidence \cite{ibinson_2006,walter_2014}. One example is the Zhang-Yeung inequality which is known to hold for the Shannon entropy of four random variables \cite{ZhangYeung98}:
\begin{equation}
  \label{eq:zhang yeung}
  2 I(C:D) \leq I(A:B) + I(A:CD) + 3 I(C:D|A) + I(C:D|B).
\end{equation}
It is interesting to note that \cref{eq:zhang yeung} holds for holographic entropies, as it is a trivial consequence of \cref{eq:ssa} and \cref{eq:mmi}, which can be written as $I(A:C|B) \geq 0$ and $I(A:B) \leq I(A:B|C)$, respectively.

On the other hand, it is not hard to see that the monogamy inequality \cref{eq:mmi} is not valid for general quantum states.
Another such example is the Ingleton inequality \cite{ingleton_1971}, which can similarly be seen to hold for holographic entropies but not for general quantum states:
\begin{equation}
  \label{eq:ingleton}
  I(A:B|C) + I(A:B|D) + I(C:D) \geq I(A:B)
\end{equation}
It has previously been shown that the entropies of \emph{stabilizer states}, which are an important class of quantum error correcting codes, likewise satisfy the Ingleton inequality \cite{LindenMatusRuskaiEtAl13,GrossWalter13,walter_2014}.
In particular this implies that all four-partite holographic entropies can be explained by (mixtures of) stabilizer entropies, since it is known that, for four subsystems, \cref{eq:ingleton} is the only additional linear constraint for stabilizer states \cite{LindenMatusRuskaiEtAl13}.

In fact, the extreme rays of the holographic entropy cone $\calC_4$ form a \emph{proper} subset of the extreme rays of the entropy cone spanned by stabilizer states. It is easy to write down stabilizer states corresponding to each extreme ray: The ray \cref{eq:ray2} can be realized by an EPR pair, the ray \cref{eq:ray3} by the four-qutrit stabilizer state $\sum_{i,j=0}^2 \ket{i,j,i+j,i+2k}$, and the ray \cref{eq:ray4} by the well-known five-qubit stabilizer code \cite{ibinson_2006,LindenMatusRuskaiEtAl13}.
This is rather interesting, as it indicates that the quantum error-correcting codes constructed from black holes are additionally constrained \cite{almheiri_dong_harlow_2014}.
For example, the four-partite GHZ state $\ket{0000} + \ket{1111}$ is a stabilizer state that violates \cref{eq:mmi} and so \emph{cannot} be realized holographically.
The connection between holographic entropies and stabilizer entropies is also conceptually pleasing since stabilizer states -- just like states of holographic conformal field theories -- have a classical dual description (in terms of Wigner functions on a classical phase space, see e.g. \cite{GrossWalter13}).

\paragraph{Uniqueness.}
We remark that the choice of geometry realizing an extreme ray is in general not unique.
For example, the ray \cref{eq:ray2} can also be realized by taking $A$ and $B$ as two complementary non-empty regions of the boundary of any bulk manifold.

On the other hand, spacetimes with multiple boundaries are the natural setting for extreme ray geometries.
In fact, there are extreme rays that cannot be realized as holographic entropy vectors
of geometries with a single connected boundary.
Let us consider here a specific example, namely that of the extreme ray \cref{eq:ray3}, where $S_A=S_B=S_C=S_{AB}/2=S_{AC}/2=S_{BC}/2=S_{ABC}$. While it is possible to get the finite pieces of the entanglement entropies to match in this fashion, matching the divergent pieces \emph{as well} becomes much trickier. The requirement that the 1-body entropies are the same enforces that $A$, $B$ and $C$ all require the same number of boundaries, and the requirement that the 2-body entropies are twice the 1-body entropies requires that no two intervals that are part of $A$, $B$, or $C$ are adjacent to each other. However, this immediately forces the number of boundaries of $ABC$ to be three times that of $A$, meaning that there is no possible way to match the 3-body entropy with the 1-body entropy, unless we take the limit where the cutoff is of the same scale as the interval size, allowing for unphysical competition between the finite and divergent pieces.

Even in cases where extreme rays are realizable on a single boundary, the entropy vector should be understood to be capturing the behavior of our choice of cut-off.
Accounting for the purification, these single-boundary vectors always contain adjacent regions.
Their spatial connectivity is precisely a consequence of divergences in their correlators and mutual information. 

\section{The graph model of holographic entropies}
\label{sec:graph model}

In this section, we introduce an alternative combinatorial model for holographic entropies and use it to reveal some properties of the holographic entropy cone.

\subsection{From geometries to graphs}
\label{subsec:tograph}

Let $X$ be an arbitrary bulk manifold and $A_1, \dots, A_n$ disjoint boundary regions.
In \cref{sec:entropy cone}, we have defined the associated holographic entropy vector as the collection of Ryu-Takayanagi entropies of all composite regions $A_I = \bigcup_{i \in I} A_i$, where $\emptyset \neq I \subseteq [n]$.
For each such region, let us denote a corresponding Ryu-Takayanagi surface $A'_I$, i.e., a minimizer of \cref{eq:ryu takayanagi}.
Together, these $2^{n}-1$ surfaces cut the bulk geometry into a finite number of connected pieces, as illustrated in \cref{fig:tograph-1}.
\begin{center}
  \includegraphics[width=0.55\linewidth]{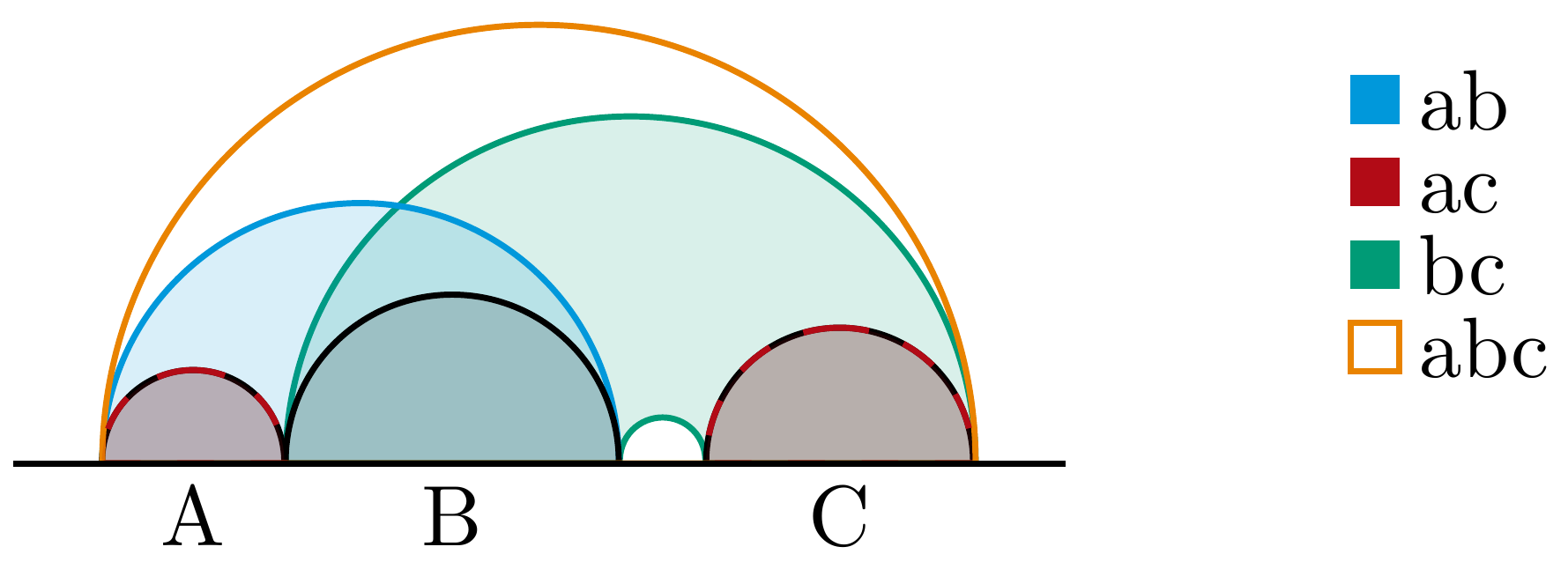}
  \captionof{figure}{The Ryu-Takayanagi surfaces of the $2^n-1$ subsystems cut the bulk geometry into a finite number of pieces.}
  \label{fig:tograph-1}
\end{center}
Formally, if $a_I$ is a bulk region that implements the cobordism between $A_I$ and its Ryu-Takayanagi surface, i.e., $\partial_I a_I = A_I \cup A'_I$, then each piece is given by a connected component of an intersection of regions $a_I$ and their complements $a_I^c$.

We now define a graph by adding one vertex for each bulk piece obtained in this way.
We color each vertex that is adjacent to a boundary region by the label of that region -- including the purifying region $O = A_{n+1} = \partial X - \bigcup_{i=1}^n A_i$.
Finally, between any two vertices we add an edge with weight the codimension-one surface area of the common boundary of the corresponding pieces' closures, divided by $4 G_N$ (unless this area is zero, in which case we may omit the edge).
See \cref{fig:tograph-2} for an illustration of this construction.
Even though we have illustrated it with two-dimensional cartoons, the prescription generalizes readily to bulk geometries of higher dimension.
\begin{center}
  \includegraphics[width=0.4\linewidth]{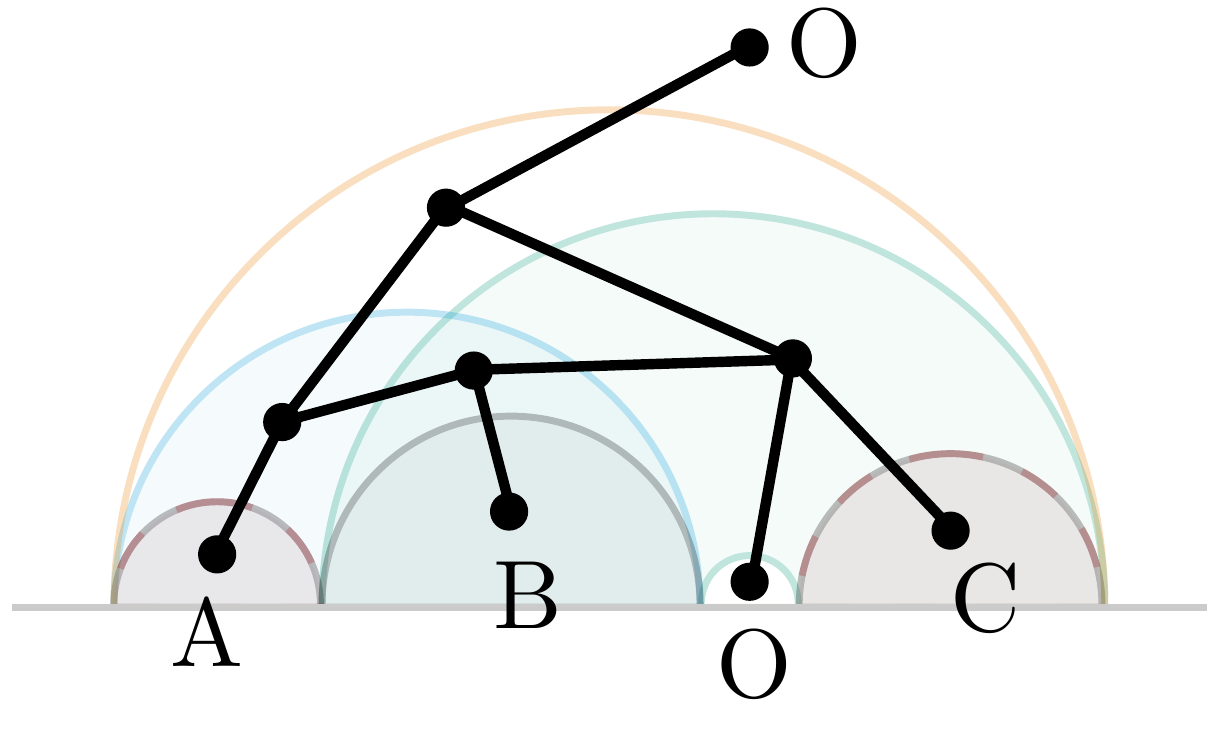}
  \captionof{figure}{The graph obtained from our construction applied to the bulk geometry and boundary regions in \cref{fig:tograph-1}.}
  \label{fig:tograph-2}
\end{center}

Before we formally describe the data thus obtained, we recall a few basic notions of graph theory:
An \emph{undirected graph} consists of a vertex set $V$ and an edge set $E \subseteq \binom V 2$.
Given \emph{edge weights} $w \colon E \rightarrow \RR$, we define the \emph{weight} of a subset of edges $F \subseteq E$ by $\abs F := \sum_{e \in F} w(e)$.
A \emph{cut} is a partition of the vertex set into two disjoint subsets, $W \cup W^c = V$.
We define $C(W) = \{ \{w,w'\} \in E : w \in W, w' \not\in W \}$ as the subset of edges that \emph{cross} the cut, i.e., with one endpoint in $W$ and the other endpoint in its complement. Thus $\abs{C(W)}$ is the sum of weights of all edges that cross the cut.
Finally, a \emph{coloring} of a subset $W \subseteq V$ of vertices by some set $C$ is simply a function $W \rightarrow C$.
We are thus led to the following definition:

\begin{dfn}
\label{dfn:graph model}
  Let $(V,E)$ be a undirected graph with non-negative edge weights $w \colon E \rightarrow \RR_{\geq 0}$.
  Let $\partial V \subseteq V$ be a subset of vertices, called the \textbf{boundary vertices}, with coloring $b \colon \partial V \rightarrow [n+1]$.
  All other vertices are called \textbf{bulk vertices}.
  This data together constitutes a \textbf{graph model}.

  \smallskip
  \noindent
  For each subset $I \subseteq [n]$, we define the \textbf{discrete entropy} by the formula
  \begin{equation}
  \label{eq:discrete entropy}
    S^*(I) = \min_{W \cup W^c = V} \abs{C(W)}
  \end{equation}
  where the optimization is over all cuts $W \cup W^c = V$ such that $W$ contains precisely those boundary vertices that are colored by $I$, i.e.,
  $W \cap \partial V = b^{-1}(I)$. We will refer to such a cut as an $I$-cut.
\end{dfn}

\Cref{dfn:graph model} can be readily extended to non-simple graphs which have loops and parallel edges.
We will see below that this does not lead to a richer spectrum of entropies.
We remark that, by the max-flow min-cut theorem from graph theory, the discrete entropy of a region $I$ can also be defined as the maximal flow from the boundary vertices colored by $I$ to those colored by $[n+1]\setminus I$ that does not exceed the capacities given by the edge weights.

Our definition of entropy in the graph model is justified by the following key lemma.
It asserts that the discrete entropy formula \cref{eq:discrete entropy} reproduces faithfully the Ryu-Takayanagi entropies \cref{eq:ryu takayanagi} in the original bulk geometry.

\begin{lem}
\label{lem:tograph}
  Given a bulk geometry $X$ and boundary regions $A_1, \dots, A_n \subseteq \partial X$, construct the associated graph model as described above.
  Then $S(I) = S^*(I)$.
\end{lem}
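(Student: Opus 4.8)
The plan is to prove the two inequalities $S(I) \le S^*(I)$ and $S^*(I) \le S(I)$ separately, both through a single structural dictionary between cuts of the graph and bulk cobordism regions assembled from the pieces. The key observation is that a subset of vertices $W \subseteq V$ is the same data as a bulk region $\hat a_W$, namely the union of the closures of the pieces indexed by $W$; conversely, the pieces are by construction the atoms out of which the cobordism regions $a_I$ (with $\partial a_I = A_I \cup A'_I$) are built. Under this dictionary I would track two quantities attached to $\hat a_W$: its trace on the conformal boundary, $\partial \hat a_W \cap \partial X$, and its interior part $\Sigma_W := \partial \hat a_W \cap \mathrm{int}(X)$.

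First I would record the weight/area dictionary. A point of $\Sigma_W$ lies on the common boundary of two adjacent pieces, one inside $W$ and one outside, so $\Sigma_W$ is precisely the union of interfaces recorded by the edges crossing the cut; since each such interface is the common boundary of exactly two pieces (triple intersections are codimension two and carry no area) and contributes its area exactly once, the edge-weight definition gives $\mathrm{area}(\Sigma_W) = 4 G_N \abs{C(W)}$. At the same time, a boundary piece colored $j$ lies in $\hat a_W$ iff its vertex lies in $W$, so $\partial \hat a_W \cap \partial X = A_J$ where $J = b(W \cap \partial V)$. In particular, if $W$ is an $I$-cut then $W \cap \partial V = b^{-1}(I)$, whence $\partial \hat a_W \cap \partial X = A_I$ and $\Sigma_W$ is homologous to $A_I$ via the cobordism $\hat a_W$.

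For $S(I) \le S^*(I)$ I would then observe that every $I$-cut produces a legal Ryu--Takayanagi competitor: $\Sigma_W$ is a hypersurface homologous to $A_I$, so minimality in \cref{eq:ryu takayanagi} gives $S(I) \le \mathrm{area}(\Sigma_W)/4 G_N = \abs{C(W)}$, and minimizing over all $I$-cuts yields $S(I) \le S^*(I)$. For the reverse inequality I would feed in the genuine Ryu--Takayanagi data: fix a minimizer $A'_I$ with cobordism region $a_I$. Because $a_I$ is one of the regions used to define the pieces, every piece lies entirely inside $a_I$ or entirely inside $a_I^c$, so $a_I = \hat a_{W_I}$ for the vertex set $W_I$ of pieces contained in it. This $W_I$ is an $I$-cut, its interior boundary is exactly $\Sigma_{W_I} = A'_I$, and therefore $\abs{C(W_I)} = \mathrm{area}(A'_I)/4 G_N = S(I)$, giving $S^*(I) \le S(I)$. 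Combining the two bounds yields $S(I) = S^*(I)$.

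The step I expect to be the main obstacle is making the weight/area dictionary fully rigorous: one must ensure that distinct crossing edges correspond to geometrically disjoint interfaces so that areas add without double counting (in particular when several Ryu--Takayanagi surfaces $A'_J$ overlap along a common segment, which is collapsed to a single weighted edge), and that the union-of-pieces construction recovers $a_I$ on the nose, so that $\Sigma_{W_I}$ equals $A'_I$ rather than a proper sub- or supersurface. The homology bookkeeping --- verifying that the coloring condition $W \cap \partial V = b^{-1}(I)$ is genuinely equivalent to $\partial \hat a_W \cap \partial X = A_I$, with the correct treatment of the purifying region $O = A_{n+1}$ --- is the other place where care is needed, though it is conceptually straightforward once the vertex colors are in hand.
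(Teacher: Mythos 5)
Your proposal is correct and follows essentially the same route as the paper's own proof: the identical dictionary between vertex subsets $W$ and bulk regions built from the pieces, the same area/weight correspondence showing every $I$-cut yields a competitor surface homologous to $A_I$ (giving $S(I) \le S^*(I)$), and the same converse step recovering the minimal Ryu--Takayanagi surface $A'_I$ from the cut $W_I$ of pieces contained in $a_I$ (giving $S^*(I) \le S(I)$). The technical caveats you flag about disjointness of interfaces and the homology bookkeeping are likewise passed over at the same level of rigor in the paper.
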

\begin{proof}
  Recall that the vertex set of the graph model corresponds to the finitely many pieces into which the bulk geometry is cut by minimal Ryu--Takayanagi surfaces.
  We shall denote these pieces by $a(v)$ for $v \in V$.
  With any cut $W \cup W^c = V$ we may therefore associate the bulk region $a(W) = \bigcup_{w \in W} a(w)$.
  Its boundary can be decomposed into two parts, the boundary and the bulk contribution,
  \begin{align*}
    \partial a(W)
  &= (\partial a(W) \cap \partial X) \cup (\partial a(W) \cap (X \setminus \partial X)) \\
  &= \bigcup_{w \in W \cap \partial V} (\partial a(w) \cap \partial X) \cup \bigcup_{\{w,w'\} \in C(W)} (a(w) \cap a(w')),
  \end{align*}
  where we have used that the bulk contribution arises from common boundaries between any two bulk pieces $a(w)$ and $a(w')$ such that $w \in W$ but $w' \not\in W$.
  Now if $W$ is an $I$-cut then the first contribution can be further simplified to
  \[
    \bigcup_{w \in W \cap \partial V} (\partial a(w) \cap \partial X)
  = \bigcup_{w \in b^{-1}(I)} (\partial a(w) \cap \partial X)
  = \bigcup_{i \in I} A_i = A_I.
   \]
  We conclude that the bulk surface $A'(W) = \bigcup_{\{w,w'\} \in C(W)} (a(w) \cap a(w'))$ is homologous to the boundary region $A_I$.

  We have thus obtained a bijection that associates to any $I$-cut $W \cup W^c = V$ a certain bulk surface $A'(W)$ that is homologous to $A_I$.
  Since the weight of an edge $e = \{w,w'\}$ was precisely defined in terms of the surface area of the codimension-one piece of $a(w) \cap a(w')$, it is immediate that the weight of the cut $W$ agrees with the surface area of the surface $A'(W)$, divided by $4 G_N$,
  \[ \abs{\partial W} = \frac{\abs{A'(W)}}{4 G_N}. \]
  This shows that $S^*(I) \geq S(I)$.
  But note that we can always obtain the minimal Ryu-Takayanagi surfaces $A'_I$ that were used in the construction of the bulk decomposition as some $A'(W)$ (take $W$ to be the set of all vertices $w$ such that $a(w) \subseteq a_I$, where $a_I$ is the bulk region with $\partial{a_I} = A_I \cup A'_I$).
  Thus $S(I) = S^*(I)$, as was asserted in the lemma.
\end{proof}

It is easy to write down graph models for the extreme rays \cref{eq:ray2,eq:ray3,eq:ray4} of the holographic entropy cone for $n \leq 4$ regions.
The result, which can be obtained by following the general construction outlined above, is displayed in \cref{fig:raygraphs234}.

\begin{figure}
\centering
  \includegraphics[width=0.8\linewidth]{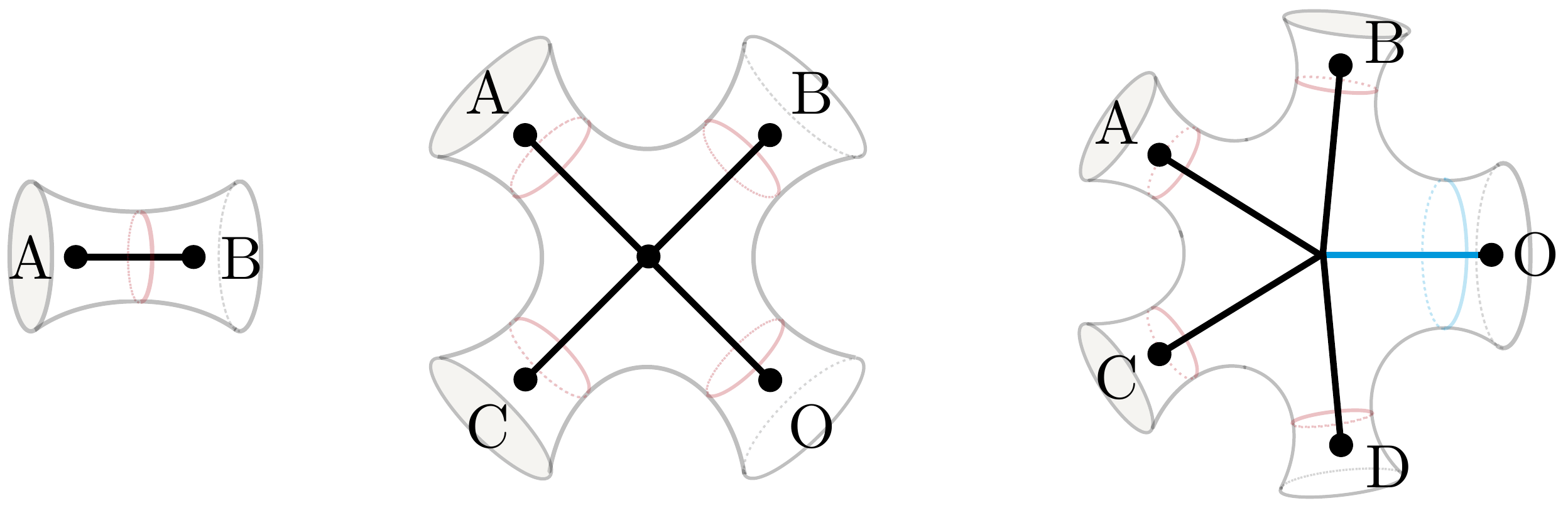}
  \captionof{figure}{Graph models of the extreme rays of the holographic entropy cone for $n \leq 4$ regions. All black edges have the same weight; the blue edge has twice that weight.}
  \label{fig:raygraphs234}
\end{figure}

In \cref{subsec:universal graph models}, we sketch an alternative way of constructing graph models based on discretizing the given geometry.
In contrast to the procedure described above, it is only approximate but does work without \emph{a priori} choice of boundary regions.

\paragraph{Graph transformations.}
It is easy to find graph transformations that preserve the discrete entropies.
We list several such transformations, which will prove useful in \cref{subsec:fromgraph} below to bring graph models into a canonical form:
\begin{itemize}
  \item Removing edges of zero weight as well as loops.
  \item Removing isolated bulk and boundary vertices, i.e., vertices that are not connected to any edge.
  \item Unifying boundary vertices (\cref{fig:graphtrafos}, (a)).
  \item Removing bulk vertices of degree one and two  (\cref{fig:graphtrafos}, (b) and (c)).
  \item Unifying parallel edges (\cref{fig:graphtrafos}, (d)).
  \item Splitting up bulk vertices of degree larger than three  (\cref{fig:graphtrafos}, (e)).
\end{itemize}
In each case, it is straightforward to verify that all discrete entropies are preserved.
We note that transformation (c) in \cref{fig:graphtrafos} can introduce parallel edges or loops.
However, loops can always be removed and parallel edges can be consolidated by using transformation (d).

\begin{center}
\includegraphics[width=0.8\linewidth]{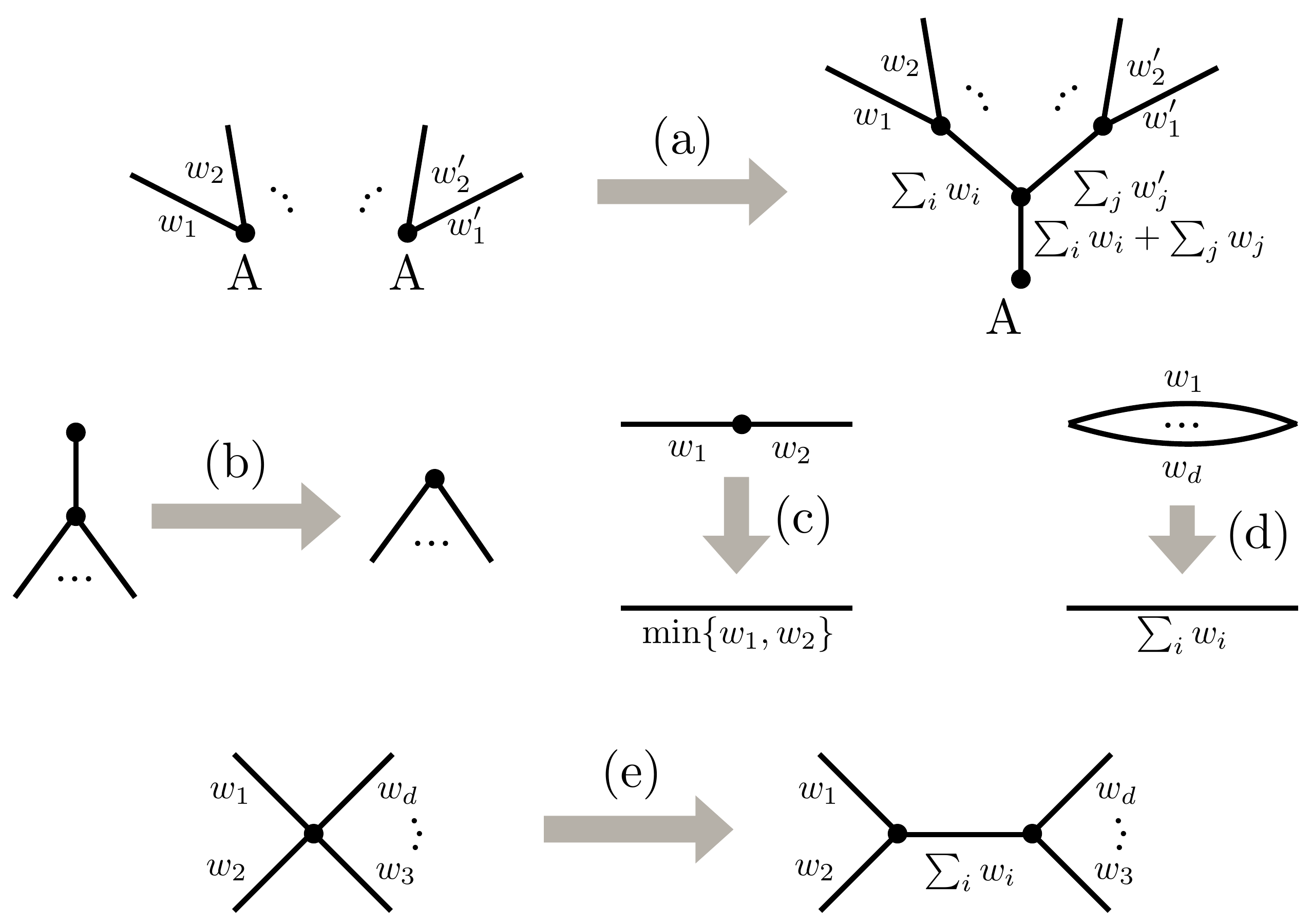}
\captionof{figure}{Entropy-preserving graph transformations: (a) Unifying boundary vertices; (b) and (c) removing bulk vertices of degree one and two; (d) unifying parallel edges; (e) splitting up bulk vertices of degree $d>3$.}
\label{fig:graphtrafos}
\end{center}

\subsection{From graphs to geometries}
\label{subsec:fromgraph}

In this section we shall establish a converse to \cref{lem:tograph}:
For any graph model we will describe how to construct a bulk geometry and boundary regions such that the Ryu-Takayanagi formula reproduces the discrete entropies.
From this we will be able to conclude that the graph model gives a completely equivalent combinatorial description of holographic entropy (see \cref{thm:holographic cone} below).
The bulk geometry that we will construct will be a two-dimensional manifold, i.e.\ a surface, and we will require the bulk geometry to have constant negative curvature since this is a sufficient condition for the geometry to be holographically dual to a CFT state for which the Ryu-Takayanagi formula is applicable (see \cref{sec:physics} below).
However, the construction can be easily generalized to higher dimension and we will comment on this in \cref{subsec:higherdim} below.

The construction goes as follows.
Given a graph model, we first subject it to a number of entropy-preserving graph transformations of the kind described in the previous section
until we obtain a graph model with the following properties:
\begin{itemize}
\item Each edge has positive weight.
\item Each bulk vertex is trivalent (i.e., connected to precisely three edges).
\item Each boundary vertex is connected to a single edge and each boundary color appears at most once.
\end{itemize}
%
%
%
For each bulk vertex, we now insert a hyperbolic `pair of pants' of constant scalar curvature $K < 0$ to be determined later; we take each geodesic boundary cycle to be of length $4 G_N w(e)$, where $w(e)$ is the weight of the corresponding edge $e$ and $G_N$ the Newton constant (in 2+1 dimensions).
For each boundary vertex, we insert a `half-collar' (of arbitrary width) whose geodesic boundary cycle likewise has length $4 G_N$ times the weight of the corresponding edge.
Finally, we glue together each pair of boundary cycles that corresponds to an edge in the graph (with arbitrary twists).

We have thus constructed a hyperbolic surface $X$ of constant negative curvature.
The connected components of its boundary $\partial X$ are in one-to-one correspondence with the set of boundary vertices in the graph model.
We shall thus define the boundary regions $A_1, \dots, A_n$ accordingly (and set $A_i=\emptyset$ if there is no corresponding boundary vertex).
\Cref{fig:fromgraph} illustrates the construction.
\begin{figure}
\centering
\includegraphics[width=0.6\linewidth]{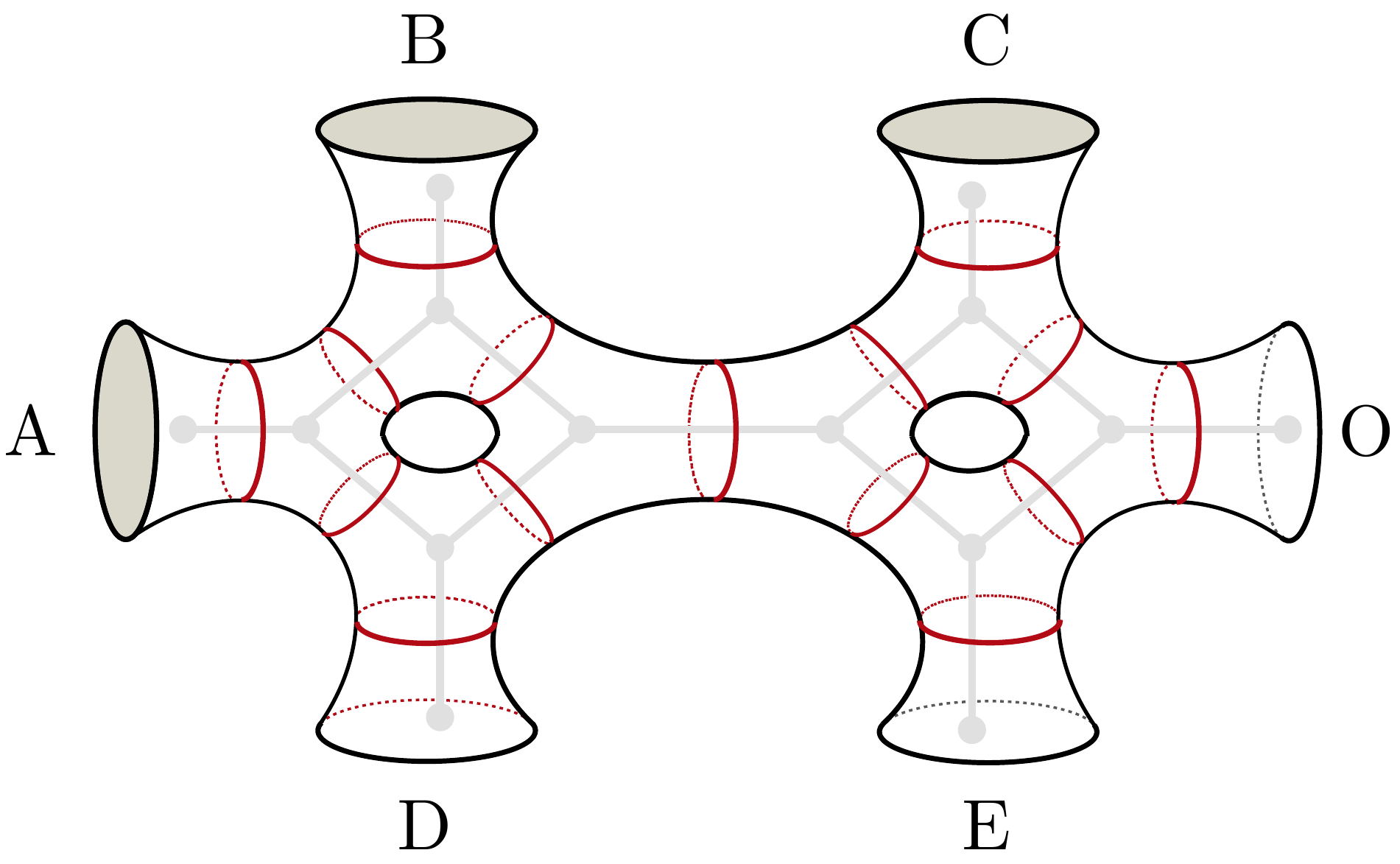}
\captionof{figure}{Construction of a hyperbolic surface from a graph model.}
\label{fig:fromgraph}
\end{figure}
Here we have drawn in red the geodesics along which we have glued the pairs of pants and half-collars.
The following lemma now provides the desired converse to \cref{lem:tograph}:

\begin{lem}
\label{lem:fromgraph}
  Given a graph model, construct an associated hyperbolic surface $X$ and boundary regions $A_1, \dots, A_n \subseteq \partial X$ as described above.
  If the scalar curvature $K < 0$ is chosen to be sufficiently negative, $S(I) = S^*(I)$.
\end{lem}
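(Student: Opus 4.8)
The plan is to prove the two inequalities $S(I)\le S^{*}(I)$ and $S(I)\ge S^{*}(I)$ separately, exploiting that the red gluing geodesics $\{\gamma_e\}_{e\in E}$ constitute a pants decomposition of $X$: cutting $X$ along them recovers exactly the pairs of pants and half-collars indexed by the bulk and boundary vertices, and each $\gamma_e$ is a closed geodesic of length precisely $4G_N w(e)$. The point of the construction is that cuts of the graph are in bijection with unions of these gluing geodesics. Concretely, given an $I$-cut $W\cup W^{c}=V$, let $a(W)$ be the union of the pieces attached to the vertices in $W$. As in the proof of \cref{lem:tograph}, $\partial a(W)=A_{I}\cup\bigcup_{e\in C(W)}\gamma_{e}$, so the multicurve $\Gamma_{W}:=\bigcup_{e\in C(W)}\gamma_{e}$ is homologous to $A_{I}$ and has length $\sum_{e\in C(W)}4G_N w(e)=4G_N\abs{C(W)}$. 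Taking $W$ to be a minimizing $I$-cut and feeding $\Gamma_{W}$ into \cref{eq:ryu takayanagi} as a (not necessarily minimal) competitor surface immediately gives the upper bound $S(I)\le S^{*}(I)$; crucially, this competitor has length \emph{exactly} $4G_N S^{*}(I)$ for every value of $K$.

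The content of the lemma is therefore the reverse inequality $S(I)\ge S^{*}(I)$, that no surface homologous to $A_{I}$ can be shorter than the best cuff-union, and this is where the hypothesis that $K$ is sufficiently negative is essential. I would take a genuine Ryu-Takayanagi minimizer $\Gamma$ for $A_{I}$, which is a geodesic multicurve, and put it in minimal position with respect to the pants decomposition. The goal is to show that, once $K$ is negative enough, $\Gamma$ can be taken disjoint from every cuff $\gamma_{e}$; each component of $\Gamma$ then lies inside a single pair of pants or half-collar, where the only length-minimizing closed geodesics are the cuffs themselves. Consequently $\Gamma$ is a cuff-union $\Gamma_{W'}$ for some $I$-cut $W'$, whence $\mathrm{length}(\Gamma)=4G_N\abs{C(W')}\ge 4G_N S^{*}(I)$. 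Equivalently, one may argue dually: by the max-flow--min-cut theorem the graph carries a flow from $b^{-1}(I)$ to $b^{-1}([n+1]\setminus I)$ of value $S^{*}(I)$, and lifting this flow to a divergence-free vector field $\Phi$ on $X$ with $\norm{\Phi}\le 1$ and total flux $4G_N S^{*}(I)$ through $A_{I}$ calibrates the bound: for any $\Gamma$ homologous to $A_{I}$, Stokes' theorem gives $\mathrm{length}(\Gamma)\ge\int_{\Gamma}\norm{\Phi}\ge 4G_N S^{*}(I)$.

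The hard part is controlling the geometry of the pairs of pants as $K\to-\infty$, and this is where I expect the main obstacle to lie. With the cuff lengths pinned at $4G_N w(e)$, sending $K\to-\infty$ makes each cuff long in units of the curvature radius $1/\sqrt{-K}$; by the collar lemma its embedded collar becomes long and thin, and the right-angled hexagon identities show that the seams joining distinct cuffs shrink, so that in actual units $X$ degenerates (in the Gromov--Hausdorff sense) onto the metric realization of the graph, with each edge $e$ carrying length $4G_N w(e)$. I would use this collapse in two ways: to show that a minimizer which crossed some cuff transversally could be shortened (its crossings run through the thin collar and seam region, while the homology it must carry forces it to run nearly parallel to whole cuffs, so a cut-and-reglue along $\gamma_{e}$ strictly decreases length), thereby justifying the ``disjoint from all cuffs'' claim; and, in the dual formulation, to confine the calibrating flow $\Phi$ to thin collars perpendicular to the cuffs where the unit-norm constraint is easily met.

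Making either step quantitative --- producing an explicit threshold for $K$ below which crossings are never advantageous, uniformly over the finitely many relevant homology classes and over the arbitrary twists allowed in the gluing --- is the crux, and is the step I expect to require the most care. The clean feature I would lean on to pass from an approximate to an \emph{exact} statement is that the cuff-union competitors have length independent of $K$, while the set of combinatorial cut values is finite: once the geometric correction to any crossing configuration is smaller than the gap between the minimal cut and the next realizable value, the minimizer must be an honest cuff-union, and the two entropies coincide exactly.
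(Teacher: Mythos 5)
Your first half matches the paper: the competitor bound $S(I)\le S^*(I)$ via cuff-unions, the dictionary ``union of gluing geodesics homologous to $A_I$ $\Leftrightarrow$ $I$-cut'', and the reduction of the converse to showing that a minimizer contains no geodesic crossing a cuff transversally. The genuine gap is in the mechanism you invoke for that exclusion, and it is not a fixable technicality: your degeneration runs in exactly the wrong direction. Keeping the cuff lengths fixed at $4G_N w(e)$ and shrinking the curvature radius makes the cuffs \emph{long} in curvature units, so their collars become \emph{thin} and the seams collapse; crossing a cuff then costs almost nothing, and the surface degenerates onto a metric graph containing \emph{new short cycles} that are not cuff-unions. (Your description of the limit is also off: each trivalent pair of pants collapses onto a theta-graph with two junctions and three edges of lengths $(L_i+L_j-L_k)/2$, not onto a single vertex with incident edges of length $4G_Nw(e)$.) Concretely, take two trivalent bulk vertices joined by an edge $e$, each joined to two boundary vertices, all weights equal, so every cuff has length $L=4G_Nw$; the graph gives $S^*(\{A,C\})=2w$ for two boundary regions $A$, $C$ on opposite sides of $e$. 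But the simple closed geodesic separating $A\cup C$ from $B\cup D$ crosses $e$ twice, and in your collapsed limit its length tends to a value between $L$ and $\tfrac{3}{2}L$ (two arcs of length $L/2$ through the collapsed pants, plus at most $L/2$ of sliding along $e$), uniformly in the twist. Since it is homologous to $A\cup C$, the Ryu-Takayanagi minimum obeys $S(\{A,C\})\le\tfrac{3}{2}w<2w=S^*(\{A,C\})$: in your regime the conclusion of the lemma is simply false. In particular, crossing curves are not forced to ``run nearly parallel to whole cuffs'' --- cut-and-regluing this geodesic along $e$ produces the \emph{longer} competitor $A\cup C$ --- and your calibration variant dies for the same reason: a unit-norm divergence-free field cannot push flux of order $L$ through the collapsing seam necks, so the graph max-flow does not lift.

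The collar theorem does the job only in the opposite regime, and that is how the paper uses it: one needs the cuffs to be \emph{short} relative to the collar scale. Then Buser's inequality, $\sinh(\hat\ell_\gamma/2)\,\sinh(\hat\ell_{\max}/2)>1$ in curvature-normalized lengths with $\hat\ell_{\max}$ small, forces any simple closed geodesic $\gamma$ crossing some cuff transversally to be longer than the \emph{total} cuff length $4G_N\sum_e w(e)$; hence no minimizer contains such a $\gamma$ (the union of the cuffs bounding the $A_I$ half-collars is always a better competitor), minimizers are cuff-unions, and your remaining steps close the proof with no degeneration analysis, no regluing argument, and no need for your final ``gap between cut values'' consideration. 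Whether this regime is phrased as a condition on $K$ at fixed weights or, equivalently after rescaling, as fixed curvature $-1$ with sufficiently small weights (as in the paper's three-boundary example) is pure normalization; but the geometric picture the proof requires --- fat collars around short cuffs --- is precisely the opposite of the thin-collar, collapsing picture on which your argument is built.
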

\begin{proof}
  Since each boundary region $A_I$ is a closed manifold, the corresponding minimization in the Ryu-Takayanagi formula \cref{eq:ryu takayanagi} is over closed cycles $A'$ homologous to the boundary.
  Any minimizing cycle $A'$ is necessarily given by a disjoint union of simple closed geodesics.
  The construction of $X$ gives us an ample supply of simple closed geodesics, namely those along which we had glued the pairs of pants and half-collars.
  Let us call these the \emph{gluing geodesics}.

  We will now show that for sufficiently negative scalar curvature, any minimizing cycle $A'$ is only composed of gluing geodesics.
  To see this, let $\gamma$ be a simple closed geodesic that is not a gluing geodesic.
  Since neither a hyperbolic pair of pants nor a half-collar contain any simple closed geodesics apart from their boundaries, $\gamma$ will necessarily intersect one of the gluing geodesics transversally. 
  The lengths of the latter are by construction no larger than $\ell_{\max} := 4 G_N \max_e w(e)$.
  Thus the hyperbolic collar theorem (e.g., \cite[Corollary 4.1.2]{buser}) asserts that the length $\ell_\gamma$ of $\gamma$ satisfies
  \[ \sinh \frac {\ell_\gamma} {2\sqrt{-K}} \sinh \frac {\ell_{\max}} {2\sqrt{-K}} > 1. \]
  By choosing $K$ sufficiently negative we can thus ensure that $\ell_\gamma$ is larger than $4 G_N \sum_e w(e)$, the total length of all gluing geodesics.
  It follows at once that $\gamma$ will never appear in a minimizer of \cref{eq:ryu takayanagi}, since we can always do better, e.g., by choosing $A'$ as the union of the gluing geodesics that bound the half-collars corresponding to $A_I$.

  The above discussion shows that we may restrict the minimization in the Ryu-Takayanagi formula for $S(I)$ to cycles $A'$ that are composed of gluing geodesics.
  It is not hard to see that such a cycle $A'$ is homologous to $A_I$ if and only if the corresponding set of edges in the graph model is induced by a cut $W \cup W^c = V$.
  Since by construction the length of a gluing cycle is equal to $4 G_N$ times the weight of the corresponding edge, we conclude that the Ryu-Takayanagi entropy $S(I)$ indeed agrees with the discrete entropy $S^*(I)$.
\end{proof}

While our graphs do not specify the surface uniquely due to the arbitrary choice of twists, \cref{lem:fromgraph} guarantees that the holographic entropies do not depend on these choices.
Similar reasoning based on the collar theorem also enters in the construction of certain limit points of Teichm\"uller space.
However, there the focus is on the behavior of \emph{large} (transversal) cycles, while we are interested in the \emph{smallest} cycles due to the minimization in the Ryu-Takayanagi formula.

As an immediate consequence of \cref{lem:tograph,lem:fromgraph}, the graph model gives a completely equivalent, combinatorial characterization of the holographic entropy cone. Thus we obtain the following theorem:

\begin{thm}
\label{thm:holographic cone}
  The holographic entropy cone $\calC_n$ can be equivalently defined in terms of
  (1) compact Riemannian manifolds (our original definition~\ref{obs:convex cone}) and 
  (2) graph models.
  Moreover, (1) can be restricted to two-dimensional manifolds of constant negative curvature.
\end{thm}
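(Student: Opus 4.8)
The plan is to establish the theorem as an immediate corollary of the two lemmas already proved, by showing that the three descriptions of $\calC_n$ all yield the same convex cone. I would phrase the argument as a chain of inclusions of cones. Let me write $\calC_n^{\mathrm{geom}}$ for the cone defined by arbitrary compact Riemannian manifolds (our original \cref{obs:convex cone}), $\calC_n^{\mathrm{graph}}$ for the set of discrete entropy vectors $(S^*(I))_{\emptyset\neq I\subseteq[n]}$ arising from graph models, and $\calC_n^{\mathrm{hyp}}$ for the cone defined by two-dimensional manifolds of constant negative curvature. The goal is to show $\calC_n^{\mathrm{geom}} = \calC_n^{\mathrm{graph}} = \calC_n^{\mathrm{hyp}}$, and since $\calC_n^{\mathrm{hyp}} \subseteq \calC_n^{\mathrm{geom}}$ trivially (constant-negative-curvature surfaces are a special case of compact Riemannian manifolds), collapsing all three establishes both claims of the theorem at once.

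The first inclusion $\calC_n^{\mathrm{geom}} \subseteq \calC_n^{\mathrm{graph}}$ is furnished directly by \cref{lem:tograph}: given any bulk geometry $X$ with boundary regions $A_1,\dots,A_n$, the associated graph model has discrete entropies $S^*(I)$ that equal the Ryu-Takayanagi entropies $S(I)$, so every geometric entropy vector is realized by a graph model. For the reverse, \cref{lem:fromgraph} supplies the converse: starting from an \emph{arbitrary} graph model, one first applies the entropy-preserving graph transformations of \cref{subsec:tograph} to bring it into the canonical trivalent form required by the construction (positive weights, trivalent bulk vertices, boundary vertices of degree one with distinct colors), and then builds a hyperbolic surface whose Ryu-Takayanagi entropies reproduce $S^*(I)$ provided the curvature is sufficiently negative. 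This gives $\calC_n^{\mathrm{graph}} \subseteq \calC_n^{\mathrm{hyp}}$. Chaining the three inclusions $\calC_n^{\mathrm{hyp}} \subseteq \calC_n^{\mathrm{geom}} \subseteq \calC_n^{\mathrm{graph}} \subseteq \calC_n^{\mathrm{hyp}}$ forces equality throughout.

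The one point requiring care, which I would flag explicitly, is that \cref{lem:fromgraph} is stated for a \emph{single} graph model with its own choice of sufficiently negative $K$, whereas the cone $\calC_n^{\mathrm{hyp}}$ (like $\calC_n^{\mathrm{geom}}$) is a \emph{fixed} set independent of any curvature normalization. This is not an obstacle, however: each individual entropy vector in $\calC_n^{\mathrm{graph}}$ is realized by \emph{some} hyperbolic surface (with $K$ chosen adaptively for that graph), and that is all that membership in $\calC_n^{\mathrm{hyp}}$ demands, since the cone is the set of \emph{all} achievable vectors ranging over all admissible surfaces and all curvatures. I would also remark that the graph transformations used in the canonicalization step genuinely preserve all discrete entropies — this is asserted in \cref{subsec:tograph} — so the vector we feed into \cref{lem:fromgraph} is unchanged, and no entropy data is lost in passing to canonical form. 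With these remarks, the proof is essentially a bookkeeping of which lemma supplies which inclusion; the substantive content lives entirely in \cref{lem:tograph,lem:fromgraph}.
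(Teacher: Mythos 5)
Your proposal is correct and follows essentially the same route as the paper, which also derives the theorem as an immediate consequence of \cref{lem:tograph,lem:fromgraph}; your chain of inclusions $\calC_n^{\mathrm{hyp}} \subseteq \calC_n^{\mathrm{geom}} \subseteq \calC_n^{\mathrm{graph}} \subseteq \calC_n^{\mathrm{hyp}}$ simply makes the bookkeeping explicit. The two points you flag (canonicalization via entropy-preserving transformations, and the adaptive choice of curvature $K$ per graph model) are handled implicitly in the paper and are resolved exactly as you argue.
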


We remark that any Riemannian manifold with constant scalar curvature and with asymptotically hyperbolic boundaries can be understood as the time slice of a multiboundary wormhole geometry (see \cref{sec:physics} below).

\subsection{Higher dimensions}
\label{subsec:higherdim}

Though the two-dimensional construction in \cref{subsec:fromgraph} above is sufficient to prove the existence of the bulk geometry for each graph model,
we note that a similar construction exists in higher dimensions.
In fact, it is easier to construct examples in higher dimensions since the requirement of constant scalar curvature is less restrictive;
in two dimensions, this requirement demands that the surface is locally hyperbolic and indeed completely determines its local geometry; this is not the case in higher dimensions.

To give an explicit example, take the Riemann surface $\Sigma$ constructed in \cref{lem:fromgraph} and tensor it with the $(d-2)$-dimensional sphere $S^{d-2}$ of scalar curvature $K'$.
The total space $\Sigma \times S^{d-2}$ is a $d$-dimensional manifold with constant scalar curvature $K + K'$. Let us choose $K'$ so that $K+ K' < 0$.
The space has boundaries with the topology of $S^1 \times S^{d-2}$.
These boundaries are not asymptotically hyperbolic, but we can repair them as follows.
Near each boundary of $\Sigma \times S^{d-2}$, we can choose the metric as
\begin{equation*}
  ds^2 = \frac{1}{|K|} \left( \frac{dr^2}{1+r^2} + (1+r^2) d\tau^2 \right) + \frac{1}{K'} d\Omega^2_{d-2},
\end{equation*}
where $\tau$ is periodic as $\tau \sim \tau + 2\pi$, representing the $S^1$ direction on the boundary, and where $d\Omega_{d-2}^2$ is the metric of the $(d-2)$-dimensional unit sphere.
On the other hand, the metric on the $d$-dimensional hyperbolic space can be chosen as
\begin{equation*}
  ds^2 = \frac{1}{|K+ K'|} \left( \frac{dr^2}{1+r^2} + (1+r^2) d\tau^2+ r^2 d\Omega^2_{d-2}\right),
\end{equation*}
where we assumed that the space has the same scalar curvature, $K' + K$. These two geometries can be interpolated by
\begin{equation}
  ds^2 = \frac{f(r)}{|K|} \left( \frac{dr^2}{1+r^2} + (1+r^2) d\tau^2 \right) + \frac{g(r)}{K'} d\Omega^2_{d-2},
\label{eq:ansatz}
\end{equation}
where $f(r) = g(r) = 1$ for $r < r_0$ for some $r_0$, and $f(r ) \sim |K|/|K+K'|$ and $g(r) \sim r^2 K'/|K+K'|$ for $r \rightarrow \infty$.
The constant scalar curvature condition for this metric is a second-order nonlinear differential equation on $f(r)$ and $g(r)$.
Since we can choose these two functions arbitrarily as far as they are positive and obey the boundary conditions at $r = r_0$ and $r \rightarrow \infty$, we can always find a solution for the single differential equation.  In fact, there are infinitely many solutions. The transitional region \cref{eq:ansatz} does not add new minimal surfaces homologous to the boundary and the holographic entropy vector associated with the $d$-dimensional bulk geometry $\Sigma \times S^{d-2}$ remains the same as the one for $\Sigma$ (up to overall rescaling) provided that we can choose
\begin{equation}
r_0 > \left( \frac{|K + K'|^{d-1}}{|K| K'^{d-2}} \right)^\frac{1}{2(d-2)}, \qquad \partial_r \left[ \left( 1 + r^2 \right) f g^{d-2} \right] > 0.
\end{equation}
Ansatz \cref{eq:ansatz} is by no means unique. The purpose of this example is to illustrate how it is easy to find a higher-dimensional geometry corresponding to each graph model.
It also shows that the holographic entropy cones do not change when we restrict to fixed bulk dimension $d > 2$, constant negative curvature, and asymptotically hyperbolic boundaries (generalizing \cref{thm:holographic cone}).

\subsection{Polyhedrality}
\label{subsec:properties}

We will now use the characterization of the holographic entropy cone in terms of graph models to gain further structural insight non-obvious from its original definition. We start with a basic lemma:

\begin{lem}
\label{lem:fixed graph model}
  Any entropy vector in $\calC_n$ can be explained by a complete graph on $2^{2^n-1}$ vertices and fixed boundary coloring.
\end{lem}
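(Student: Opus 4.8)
The plan is to start from the graph-model description of $\calC_n$ furnished by \cref{thm:holographic cone}: any entropy vector $s \in \calC_n$ arises as the discrete entropy vector $S^* = (S^*(I))_{\emptyset \neq I \subseteq [n]}$ of some graph model $G = (V,E,w)$ with boundary coloring $b\colon \partial V \to [n+1]$. The goal is to replace $G$ by a complete graph whose vertex set and boundary coloring are fixed once and for all (independently of $s$), at the cost of allowing zero-weight edges. The natural universal vertex set is the set of bit patterns $P = \{0,1\}^{2^n-1}$ indexed by the nonempty subsets $I \subseteq [n]$; since there are $2^n-1$ such subsets, $P$ has exactly $2^{2^n-1}$ elements, matching the claim.

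First I would, for each nonempty $I \subseteq [n]$, fix one minimizing $I$-cut $W_I$ of $G$ (a minimizer in \cref{eq:discrete entropy}), and use these cuts to define a labelling map $\phi \colon V \to P$ by $\phi(v)_I = 1$ if $v \in W_I$ and $0$ otherwise. I would then define the target graph $G'$ on vertex set $P$ by pushing forward the weights: the weight of $\{p,q\}$ in $G'$ is the total weight of all $G$-edges whose endpoints are mapped to $p$ and $q$ respectively (edges within a single fibre of $\phi$ become loops and are discarded). The fixed boundary vertices of $G'$ are the $n+1$ canonical patterns $p^{(i)}$, $i \in [n+1]$, defined by $p^{(i)}_I = 1$ iff $i \in I$; these are pairwise distinct and are colored by $i$, and \emph{this coloring does not depend on $s$}. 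The boundary condition $W_I \cap \partial V = b^{-1}(I)$ guarantees that every boundary vertex of $G$ colored $i$ is sent by $\phi$ to exactly $p^{(i)}$, so $\phi$ is compatible with the colorings.

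The heart of the argument is to verify $S^*_{G'}(I) = S^*_{G}(I)$ for every $I$, and I expect this matching to be the main obstacle. For the inequality ``$\le$'', I would observe that the distinguished set $W'_I := \{p \in P : p_I = 1\}$ is an $I$-cut of $G'$ whose $\phi$-preimage is exactly $W_I$; since the push-forward construction makes the weight of $W'_I$ in $G'$ equal to $\abs{C(W_I)}$ in $G$, this single cut already achieves $S^*_G(I)$. For the converse ``$\ge$'', I would pull back an arbitrary $I$-cut $W'$ of $G'$ to $W := \phi^{-1}(W')$ in $G$; the compatibility of $\phi$ with the colorings shows that $W$ is again an $I$-cut, and the push-forward construction again yields $\abs{C(W)} = \abs{C(W')}$, so no cut of $G'$ can undercut $S^*_G(I)$. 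The key point is that packing the data of all $2^n-1$ minimizing cuts into $\phi$ ensures that cuts of $G'$ and cuts of $G$ correspond with matching weights, so no spurious improvements appear. Combining the two inequalities gives equality for all $I$, i.e.\ $G'$ reproduces the entropy vector $s$.

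Finally I would complete $G'$ to a complete graph on all $2^{2^n-1}$ vertices of $P$ by adjoining the missing edges with weight zero; by \cref{dfn:graph model} such edges never contribute to the weight of any cut and hence leave all discrete entropies unchanged. The resulting object is a complete graph on $2^{2^n-1}$ vertices carrying the fixed boundary coloring $b(p^{(i)}) = i$, and it explains $s$, as required.
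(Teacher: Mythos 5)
Your proposal is correct and follows essentially the same construction as the paper: the universal vertex set of bitstrings indexed by nonempty subsets, the partition of the original vertex set via the chosen minimizing cuts (your map $\phi$ is exactly the indicator of the paper's pieces $W(x) = \bigcap_I W_I^{x_I}$), and the push-forward of edge weights. The only difference is that you spell out the two-sided verification that the discrete entropies are preserved (pushing forward the minimizing cuts and pulling back arbitrary cuts), which the paper leaves as "not hard to verify," and your argument there is sound.
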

\begin{proof}
  Let $\mathcal I_n = \{ \emptyset \neq I \subseteq [n] \}$ be the set of non-empty subsets of $[n]$.
  We define the vertex set of our complete graph to be $V := \{0,1\}^{\mathcal I_n}$, the set of bitstrings indexed by $\mathcal I_n$. Clearly, $V$ is of cardinality $2^{2^n-1}$.
  For each $i \in [n+1]$, we define a bitstring $x_i \in V$ by $(x_i)_I = 1$ if and only if $i \in I$ (note that $x_{n+1}$ is always the bitstring that is all zeros).
  These bitstrings are our boundary vertices: we color each $x_i$ by $i$.

  We now describe the transformation of an arbitrary graph model into one that uses the universal vertex set and boundary coloring.
  We will proceed by what is in essence an algebraic version of the construction given in \cref{subsec:tograph}.
  To start, we choose for each boundary region $I \in \mathcal I_n$ an $I$-cut $W_I$ that realizes the discrete entropy $S^*(I)$ in the original graph.
  For each bitstring $x \in V$, we define $W(x) := \bigcap_{I \in \mathcal I_n} W_I^{x_I}$, where we denote $W^1_I := W_I$ and $W^0_I := W_I^c$.
  The regions $W(x)$ partition the vertex set of the original graph into $2^{2^n-1}$ disjoint subsets (of which some may be empty).
  We note that each $W(x_i)$ contains all boundary vertices of the original graph that are colored by $i$.
  Finally, let $E(x,y)$ denote the set of edges in the original graph with one endpoint in $W(x)$ and one endpoint in $W(y)$.
  We define the weight of the edge between bitstrings $x$ and $y$ to be $w(x, y) = \sum_{e \in E(x,y)} w(e)$, the total weight of all edges between $W(x)$ and $W(y)$.
  Note that $w(x,y) = 0$ if $E(x,y) = \emptyset$.
  Given these definitions, it is not hard to verify that all discrete entropies $S^*(I)$ are preserved in the graph thus obtained.
\end{proof}


\Cref{thm:holographic cone,lem:fixed graph model} imply that we may characterize each holographic entropy vector in $\calC_n$ by a graph model on a fixed graph $(V,E)$ with fixed boundary coloring. Thus the only varying data are the edge weights $w \colon E \rightarrow \RR_{\geq0}$. It follows that the holographic entropy cone is given by
\begin{equation*}
  \calC_n = \{ s^*(w) \,:\, w \colon E \rightarrow \RR_{\geq0} \}
\end{equation*}
where we write $s^*(w) = (S^*_w(I))_{\emptyset \neq I \subseteq [n]}$ for the discrete entropy vector corresponding to the choice of edge weights $w$ in the otherwise fixed graph model.

\begin{prp}
\label{prp:polyhedral}
  The holographic entropy cone $\calC_n$ is a rational polyhedral cone.
\end{prp}
\begin{proof}
  According to the preceding discussion, we may work with a fixed graph and boundary coloring.
  This also fixes the $I$-cuts for each boundary region $I$; let us denote them by $W_I^{(1)}, \dots, W_I^{(n_I)}$.
  Then the discrete entropy \cref{eq:discrete entropy} is given by the minimization
  \begin{equation}
  \label{eq:finite minimization}
  S^*_w(I) = \min \{ \abs{C(W_I^{(1)})}, \dots, \abs{C(W_I^{(n_I)})} \}.
  \end{equation}
  where each $\abs{C(W_I^{(k)})} = \sum_{e \in C(W_I^{(k)})} w(e)$ is a linear function of the edge weights $w$, which we may think of as elements of the orthant $\RR^E_{\geq 0}$.
  Now consider the linear hyperplanes
  \[ \mathcal H_I^{(k,l)} = \{ w \in \RR^E_{\geq0} : \abs{C(W_I^{(k)})} = \abs{C(W_I^{(l)})} \} \]
  for all $I$ and $1 \leq k \neq l \leq n_I$.
  They partition the orthant of edge weights into finitely many rational polyhedral subcones,
  $\RR^E_{\geq 0} = \bigsqcup_{\mathcal W} \mathcal W$, each of which is defined by a maximal number of inequalities of the form
  $\abs{C(W_I^{(k)})} \geq \abs{C(W_I^{(l)})}$.
  On each such subcone $\mathcal W$, the discrete entropy \cref{eq:finite minimization} of any boundary region $I$ -- and therefore the entropy vector $s(w)$ itself -- is given by a linear function of $w$ with integer coefficients.
  As the image of a rational polyhedral cone under such a linear map is again a rational polyhedral cone, we conclude that the holographic entropy cone is a union of finitely many rational polyhedral cones, $\calC_n = \bigcup_{\mathcal W} s(\mathcal W)$, and therefore itself a rational polyhedral cone (since we already know that $\calC_n$ is a convex cone).
\end{proof}

Remarkably, \cref{prp:polyhedral} implies that, for each fixed number of regions $n$, there is a \emph{finite} number of independent linear entropy inequalities, corresponding to the facets of the holographic entropy cone $\calC_n$.
Equivalently, each holographic entropy cone is spanned by a finite number of extreme rays, which we have seen can be represented, e.g., by hyperbolic surfaces.
Such a hyperbolic surface can in turn be understood as the time slice of a multiboundary wormhole geometry \cite{skenderis_van_rees_2011,Brill}.
This is in agreement with and generalizes our findings for $n \leq 4$ regions in \cref{subsec:few regions}.

The holographic situation is also in stark contrast to the Shannon and von Neumann entropy.
For these, not only are there extreme rays for $n \geq 3$ that can only be attained approximately, so that the corresponding entropy cones are not closed, but there are in fact linear inequalities that constrain some of the lower-dimensional faces when $n \geq 4$ \cite{ZhangYeung97,LindenWinter05,CadneyLindenWinter12}.
It is moreover known for the Shannon entropy (and likewise conjectured for the von Neumann entropy) that the cones are not polyhedral for $n \geq 4$ \cite{Matus07}.


\section{New constraints on holographic entropies}
\label{sec:inequalities}

In this section, we describe a new combinatorial method for establishing holographic entropy inequalities.
As we shall explain below, it is based on exhibiting a certain contraction map of Hamming cubes as a certificate for the correctness of a given entropy inequality.
We then use this method of \emph{proof by contraction} to establish an infinite family of hitherto unknown holographic entropy inequalities, and we comment on the novel features of holographic entropy for five or more regions.
We also give a greedy algorithm for finding proofs by contraction for a given entropy inequality.

\medskip

Before we describe the method, we first recall the holographic proof of strong subadditivity from \cite{headrick_takayanagi_2007}:
Let $AB'$ and $BC'$ denote the minimal Ryu-Takayanagi hypersurfaces for $S(AB)$ and $S(BC)$, respectively, and $ab$ and $bc$ the corresponding bulk regions.
That is, $\partial ab = AB \cup AB'$ and $\partial bc = BC \cup BC'$.
Now define two new bulk regions by $b := ab \cap bc$ and $abc := ab \cup bc$.
Clearly, their boundaries decompose as $\partial b = B \cup B'$ and $\partial abc = ABC \cup ABC'$, where $ABC'$ and $B'$ denote the respective bulk parts.
One now argues that $B'$ and $ABC'$ can be assembled from $AB'$ and $BC'$ by cutting and pasting (see \cref{fig:classical ssa proof} for an illustration).
This then shows the first inequality in
\[ S(AB) + S(BC) = \frac1{4G_N} \left( \abs{AB'} + \abs{BC'} \right) \geq \frac1{4G_N} \left( \abs{B'} + \abs{ABC'} \right) \geq S(B) + S(ABC), \]
since, in general, $AB'$ and $BC'$ may not be completely exhausted by $B'$ and $ABC'$ (unlike suggested in \cref{fig:classical ssa proof}).
The second inequality is due to the minimization in the Ryu-Takayanagi formula, since the bulk surfaces constructed will in general not be minimal in their homology class.
This concludes the holographic proof of strong subadditivity \cref{eq:ssa}.
The proof of the monogamy inequality \cref{eq:mmi} in \cite{hayden_headrick_maloney_2013} proceeds similarly by reassembling bulk surfaces and regions.

\begin{figure}
\centering
\includegraphics[width=0.8\linewidth]{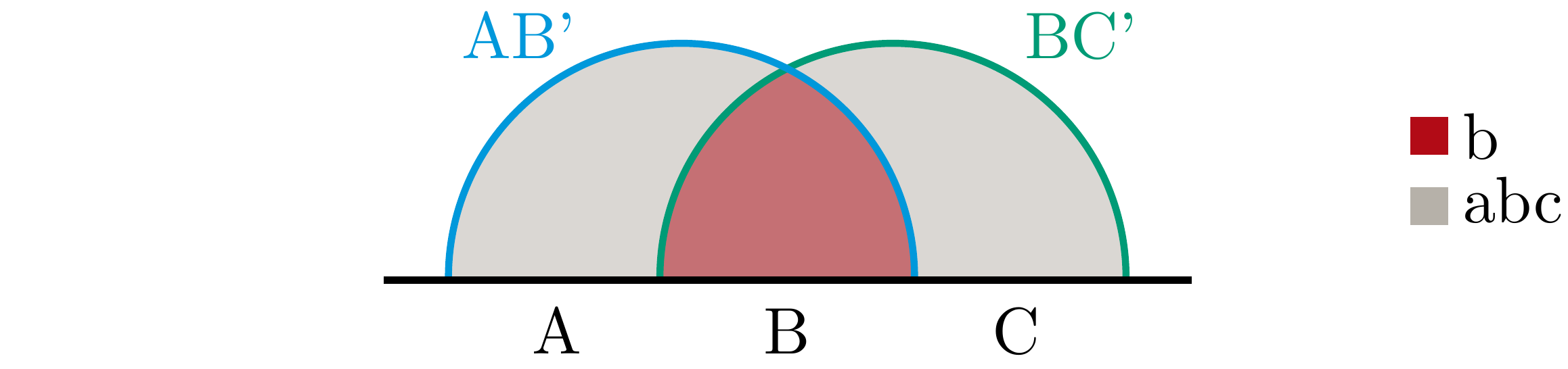}
\caption{Illustration of the holographic proof of strong subadditivity from \cite{headrick_takayanagi_2007}.}
\label{fig:classical ssa proof}
\end{figure}

\subsection{Proofs by contraction}
\label{subsec:contractions}

We now show that the proofs in \cite{headrick_takayanagi_2007,hayden_headrick_maloney_2013} are instances of a general combinatorial method.
For this, we consider a general entropy inequality on $n$ parties given in the form
\begin{equation}
\label{eq:entropy inequality}
  \sum_{l=1}^L \alpha_l S(I_l) \geq \sum_{r=1}^R \beta_r S(J_r),
\end{equation}
where the $\alpha_1, \dots, \beta_R > 0$ are positive coefficients and $I_1, \dots, J_R \subseteq \{1,\dots,n\}$ the corresponding subsystems.
We can conveniently encode the latter in terms of the following \emph{occurrence vectors}:
\begin{equation}
\label{eq:occurrence vectors}
\begin{aligned}
  x_i &:= (i \in I_l)_{l=1}^L \in \{0,1\}^L, \\
  y_i &:= (i \in J_r)_{r=1}^R \in \{0,1\}^R.
\end{aligned}
\end{equation}
For the purifying region, we accordingly define $x_{n+1} \equiv 0$ and $y_{n+1} \equiv 0$ to be zero bitstrings.
Finally, we define the \emph{weighted Hamming norms} $\norm{v}_\alpha := \sum_{l=1}^L \alpha_l \abs{v_l}$.

\begin{thm}[`Proof by contraction']
\label{thm:proof by contraction}
  Let $f \colon \{0,1\}^L \rightarrow \{0,1\}^R$ be a $\norm{\cdot}_\alpha$-$\norm{\cdot}_\beta$-contraction, i.e.,
  \begin{equation}
  \label{eq:contraction}
    \norm{f(x) - f(x')}_\beta \leq \norm{x-x'}_\alpha \qquad (\forall x,x' \in \{0,1\}^L).
  \end{equation}
  If $f(x_i) = y_i$ for all $i=1,\dots,n+1$ then \cref{eq:entropy inequality} is a valid entropy inequality.
\end{thm}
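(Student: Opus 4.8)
The plan is to reduce the continuous/geometric statement to the purely combinatorial graph-model picture established in \cref{lem:tograph,lem:fromgraph}, and then to show that a contraction map $f$ furnishes, for every cut realizing the left-hand side, a collection of cuts whose total weight bounds the right-hand side from below. The key conceptual point is that the occurrence vectors $x_i$ encode, for each boundary color $i$, on which side of each $I_l$-cut that color must sit, and $f$ translates this into the required side assignments for the $J_r$-cuts. By \cref{thm:holographic cone} it suffices to prove \cref{eq:entropy inequality} for discrete entropies $S^*$ of an arbitrary graph model, so I work entirely with cuts in a fixed weighted graph $(V,E)$ with boundary coloring $b\colon\partial V\to[n+1]$.

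First I would set up the following dictionary. For each term $S^*(I_l)$ on the left, choose a minimizing $I_l$-cut $W_l\subseteq V$, so that $S^*(I_l)=\abs{C(W_l)}$. Collecting these, every vertex $v\in V$ acquires a left-profile $\xi(v):=\bigl(\mathbbm 1[v\in W_l]\bigr)_{l=1}^L\in\{0,1\}^L$; note that for a boundary vertex of color $i$ the minimization constraint forces $\xi(v)=x_i$. Now define $R$ new cuts on the right by pushing the profiles through $f$: set $U_r:=\{v\in V:\ f(\xi(v))_r=1\}$. The hypothesis $f(x_i)=y_i$ guarantees precisely that $U_r$ is a $J_r$-cut, i.e. $U_r\cap\partial V=b^{-1}(J_r)$, because a boundary vertex of color $i$ lands in $U_r$ iff $(y_i)_r=1$ iff $i\in J_r$. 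Consequently $S^*(J_r)\le\abs{C(U_r)}$ by the very definition of discrete entropy as a minimum.

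The heart of the argument is then a single edge-by-edge inequality. Writing everything as a sum over edges $e=\{v,v'\}\in E$ with weight $w(e)$, I have
\[
  \sum_{l=1}^L \alpha_l\,\abs{C(W_l)} \;=\; \sum_{e=\{v,v'\}} w(e)\,\norm{\xi(v)-\xi(v')}_\alpha,
\]
since the edge $e$ contributes $\alpha_l w(e)$ to the $l$-th cut exactly when $v,v'$ lie on opposite sides of $W_l$, which is recorded by the $l$-th coordinate of $\xi(v)-\xi(v')$. The analogous identity holds on the right with $\xi$ replaced by $f\circ\xi$ and $\norm\cdot_\alpha$ by $\norm\cdot_\beta$. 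The contraction property \cref{eq:contraction} applied to $x=\xi(v)$, $x'=\xi(v')$ gives $\norm{f(\xi(v))-f(\xi(v'))}_\beta\le\norm{\xi(v)-\xi(v')}_\alpha$ for every edge, and summing against the non-negative weights $w(e)$ yields $\sum_r\beta_r\abs{C(U_r)}\le\sum_l\alpha_l\abs{C(W_l)}$. Chaining this with $S^*(J_r)\le\abs{C(U_r)}$ and $\abs{C(W_l)}=S^*(I_l)$ completes the proof.

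The main obstacle to watch is bookkeeping rather than any deep step: one must verify carefully that the minimizing left-hand cuts impose the correct profiles on boundary vertices so that $U_r$ really is an admissible $J_r$-cut, and that the purifying color $n+1$ (with $x_{n+1}=y_{n+1}=0$) is handled consistently — this is exactly why $x_{n+1}\equiv0$, $y_{n+1}\equiv0$ were fixed. A secondary subtlety is that the reduction to graph models must be invoked in the correct direction: I only need the \emph{lower} bound $S^*(J_r)\le\abs{C(U_r)}$ on the right (minimization over cuts) and the \emph{exact} value on the left from the chosen minimizers, so no tightness on the right is required and the inequality never reverses.
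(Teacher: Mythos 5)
Your proof is correct and takes essentially the same route as the paper's: reduction to discrete entropies of graph models via \cref{thm:holographic cone}, minimal cuts $W_l$ realizing the left-hand side, reassembled cuts $U_r$ obtained by pushing vertex data through $f$, the edge-by-edge application of the contraction property to obtain the bulk inequality, and the final chain using that each $U_r$ is an admissible (but not necessarily minimal) $J_r$-cut. Your per-vertex profile $\xi(v)$ is simply a repackaging of the paper's partition $W(x)=\bigcap_{l}W_l^{x_l}$ (namely $W(x)=\xi^{-1}(x)$), so the two arguments coincide.
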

\begin{proof}
  By \cref{thm:holographic cone}, it suffices to show that the entropy inequality holds for the discrete entropy of an arbitrary graph model.
  %
  We first choose a minimal cut $W_l$ for each of the regions $I_l$ appearing in the left-hand side of \cref{eq:entropy inequality}. That is, $W_l$ is an $I_l$-cut and $S^*(I_l) = \abs{C(W_l)}$.
  For each bitstring $x \in \{0,1\}^L$, we now define a subset $W(x) := \bigcap_{l=1}^L W_l^{x_l}$ by inclusion/exclusion, where we denote $W_I^1 := W_I$ and $W_I^0 := W_I^c$.
  Then the $W(x)$ form a partition of the vertex set of the graph. 
  Each $W(x_i)$ contains all boundary vertices colored by $i$,
  and we have that
  \begin{equation}
  \label{eq:original cuts}
    W_l  = \bigcup \{ W(x) : x \in \{0,1\}^L \text{ with } x_l = 1 \}.
  \end{equation}
  %
  We now define a cut for each of the regions $J_r$ that appear on the right-hand side of \cref{eq:entropy inequality} by reassembling the pieces $W(x)$ according to the function $f$:
  \begin{equation}
  \label{eq:reassembled cuts}
    U_r := \bigcup \{ W(x) : x \in \{0,1\}^L \text{ with } f(x)_r = 1 \}
  \end{equation}
  Since we assume that $f(x_i) = y_i$ for all $i=1,\dots,n+1$, each $U_r$ is a $J_r$-cut. Indeed, this follows from
  \[
    W(x_i) \subseteq U_r
    \Leftrightarrow
    f(x_i)_r = 1
    \Leftrightarrow
    (y_i)_r = 1
    \Leftrightarrow
    i \in J_r.
  \]
  %
  The crucial step now is to establish the following inequality, which in fact holds for \emph{arbitrary} cuts $W_l$:
  \begin{equation}
  \label{eq:bulk inequality in proof}
    \sum_{l=1}^L \alpha_l \abs{C(W_l)} \geq \sum_{r=1}^R \beta_r \abs{C(U_r)}
  \end{equation}
  To see that \cref{eq:bulk inequality in proof} is correct, it is useful to introduce for any pair of bitstrings $x, x' \in \{0,1\}^L$ the set $E(x,x')$ of edges between $W(x)$ and $W(x')$.
  The $E(x,x')$ form a partition of the edge set of the graph model under consideration.
  By \cref{eq:original cuts}, the edges in $E(x,x')$ cross $W_l$ if and only if $x_l \neq x'_l$, 
  so that $C(W_l) = \bigcup_{x_l \neq x'_l} E(x,x')$.
  Therefore,
  \begin{align*}
      &\sum_{l=1}^L \alpha_l \abs{C(W_l)}
    = \sum_{l=1}^L \alpha_l \sum_{\{x,x'\} : x_l \neq x'_l} \abs{E(x,x')}
    = \sum_{l=1}^L \alpha_l \sum_{\{x,x'\}} \abs{x_l - x'_l} \, \abs{E(x,x')} \\
    = &\sum_{\{x,x'\}} \abs{E(x,x')} \sum_{l=1}^L \alpha_l \abs{x_l - x'_l}
    = \sum_{\{x,x'\}} \abs{E(x,x')} \, \norm{x - x'}_\alpha.
  \end{align*}
  Likewise, \cref{eq:reassembled cuts} implies that
  \begin{equation*}
    \sum_{r=1}^R \beta_r \abs{C(U_r)} = \sum_{\{x,x'\}} \abs{E(x,x')} \, \norm{f(x) - f(x')}_\beta.
  \end{equation*}
  Thus the inequality \cref{eq:bulk inequality in proof} is a consequence of the contraction property \cref{eq:contraction}.
  %
  It is now straightforward to conclude the proof of the entropy inequality \cref{eq:entropy inequality}:
  \begin{equation}
  \label{eq:final argument}
      \sum_{l=1}^L \alpha_l S^*(I_l)
    = \sum_{l=1}^L \alpha_l \abs{C(W_l)}
    \geq \sum_{r=1}^R \beta_r \abs{C(U_r)}
    \geq \sum_{r=1}^R \beta_r S^*(J_r),
  \end{equation}
  where the first inequality is \cref{eq:bulk inequality in proof}; the second inequality comes from the fact that each $U_r$ is a cut for the boundary region $J_r$ but not necessarily a minimal cut.
\end{proof}

In view of \cref{subsec:tograph} we may directly identify any cut $W$ that appears in the proof with some bulk region $w$, the set of edges $C(W)$ that cross a cut with the bulk part $W'$ of the boundary $\partial w$, and the total weight $\lvert C(w) \rvert$ of the edges with the surface area $\lvert W' \rvert$.
It is not hard use this dictionary to translate the above proof of \cref{thm:proof by contraction} into a proof that does not explicitly rely on the graph model and \cref{thm:holographic cone}.

\paragraph{Bulk inequalities.}
As is apparent from the proof and in particular from \cref{eq:reassembled cuts}, the map $f \colon \{0,1\}^L \rightarrow \{0,1\}^R$ provides an efficient way of encoding the way the cuts (or bulk regions) are reassembled.
The contraction property \cref{eq:contraction} implies that the total weight of the right-hand side cuts is no larger than that of the left-hand side cuts (weighted appropriately with the coefficients $\alpha$ and $\beta$).
Thus the resulting bulk inequality \cref{eq:bulk inequality in proof} holds for arbitrary cuts $W_1, \dots, W_L$ and not only for minimal cuts that achieve the left-hand side discrete entropies. We record this general fact:
\begin{quote}
  Let $f \colon \{0,1\}^L \rightarrow \{0,1\}^R$ be a $\norm{\cdot}_\alpha$-$\norm{\cdot}_\beta$-contraction.
  Then the \emph{bulk inequality}
  \begin{equation}
  \label{eq:bulk inequality quote}
    \sum_{l=1}^L \alpha_l \abs{C(W_l)} \geq \sum_{r=1}^R \beta_r \abs{C(U_r)}
  \end{equation}
  holds for arbitrary cuts $W_1, \dots, W_L$ in any graph, where $U_1, \dots, U_R$ are defined according to \cref{eq:reassembled cuts}. We remark that $c(W) := \abs{C(W)}$ is known as the \emph{cut function} in graph theory.
\end{quote}
Thus the role of the initial conditions $f(x_i) = y_i$ in \cref{thm:proof by contraction} is only to ensure that the reassembled cuts $U_r$ are cuts for the appropriate boundary subsystems. This allows the bulk inequality to be lifted to a holographic entropy inequality.



\paragraph{Contractions and the hypercube.}
The contracting condition in \cref{thm:proof by contraction} can also be understood in graph-theoretical terms.
For simplicity, let us assume that all coefficients $\alpha_1 = \dots = \beta_R = 1$ (otherwise, appropriate weights have to be inserted into the following discussion).
Then the set of bitstrings $\{0,1\}^L$ can be interpreted as the set of vertices of the \emph{hypercube graph} or \emph{Hamming cube}, which we shall denote by $Q_L$.
The Hamming distance agrees with the graph distance and it is not hard to see that it suffices to check \cref{eq:contraction} for the edges of the hypercube only, i.e., only for those bitstrings $x$ and $x'$ for which $x_l \neq x'_l$ for a single component $l \in [L]$.
Let us write $x \sim x'$ if $\{x,x'\}$ form an edge of the hypercube.
Then \cref{thm:proof by contraction} admits the following graph-theoretical reformulation:
\begin{quote}
  Let $\varphi \colon Q_L \rightarrow Q_R$ be a \emph{weak graph homomorphism}, i.e., a map between the vertex sets of the hypercube graphs such that
  \[ x \sim x' \Rightarrow \varphi(x) \sim \varphi(x') \text{ or } \varphi(x) = \varphi(x'). \]
  If $\varphi(x_i) = y_i$ for all $i=1,\dots,n+1$ then \cref{eq:entropy inequality} is a valid entropy inequality.
\end{quote}
A weak graph homomorphism $Q_L \rightarrow Q_R$ is the same as an ordinary graph homomorphism into the graph $Q'_R$ obtained by adding loops at each vertex of $Q_R$. Such a graph homomorphism is also known as a \emph{coloring} of $Q_L$ by $Q'_R$.
Thus to prove a holographic entropy inequality using our method it suffices to argue that the partial coloring defined by the occurrence vectors can be extended to a full coloring of $Q_L$ by $Q'_R$.


The contraction $f \colon \{0,1\}^L \rightarrow \{0,1\}^R$ can also be represented geometrically by the subsets
\begin{equation}
\label{eq:hypercube cuts}
  C_r = \{ x \in \{0,1\}^L : f(x)_r = 1 \}
\end{equation}
of the hypercube, each of which encodes a component of the function $f$.
The conditions $f(x_i) = y_i$ then amount to requiring that each $C_r$ is a cut that contains precisely those occurrence vectors $x_i$ with $i \in J_r$, and the contraction property states that all edges of the hypercube cross at most one of the cuts $C_r$.
We summarize this last reformulation of \cref{thm:proof by contraction}:
\begin{quote}
  Let $C_1, \dots, C_R$ denote subsets of the hypercube graph $Q_L = \{0,1\}^L$ such that
  (1) each hypercube edge crosses at most one of the cuts $C_1, \dots, C_r$, and
  (2) each $C_r$ contains precisely those occurrence vectors $x_i$ that correspond to regions that appear in the $r$-th right-hand side term of the entropy inequality (i.e., $i \in J_r$).
  Then \cref{eq:entropy inequality} is a valid entropy inequality.
\end{quote}

\paragraph{Basic Examples.}
We shall now illustrate our method by giving succinct `proofs by contraction' of all hitherto known holographic entropy inequalities.
For strong subadditivity, $S(AB) + S(BC) \geq S(B) + S(ABC)$, the occurrence vectors are given in \cref{tab:ssa-contraction}:
\begin{center}
\begin{tabular}{c cc cc}
  \toprule
  & \multicolumn{2}{c}{$x$} & \multicolumn{2}{c}{$y = f(x)$} \\
  \cmidrule(lr){2-3} \cmidrule(lr){4-5}
  & AB & BC & B & ABC \\
  O & 0 & 0 & 0 & 0 \\
  C & 0 & 1 & 0 & 1 \\
  A & 1 & 0 & 0 & 1 \\
  B & 1 & 1 & 1 & 1 \\
  \bottomrule
\end{tabular}
\captionof{table}{Proof by contraction of strong subadditivity.}
\label{tab:ssa-contraction}
\end{center}
Here, we have denoted the purifying region by $O$; each row of the table lists a boundary region $i$ and the corresponding occurrence vectors $x_i$ and $y_i$.
In this case, the function $f \colon \{0,1\}^2 \rightarrow \{0,1\}^2$ is already fully defined by the condition that the occurrence vectors are mapped onto each other, and it is not hard to verify that it is a contraction.
Thus \cref{thm:proof by contraction} implies at once that holographic entropies are strongly subadditive.

Let us discuss this `proof by contraction' in some more detail.
Given minimal cuts $W_{AB}$ and $W_{BC}$ for the left-hand side of the strong subadditivity inequality, the proof proceeds by constructing the following cuts in \cref{eq:reassembled cuts}:
\begin{align*}
U_B &= W(11) = W_{AB} \cap W_{BC} \\
U_{ABC} &= W(01) \cup W(10) \cup W(11) = W_{AB} \cup W_{BC}
\end{align*}
Thus we recover precisely the same construction as in the holographic proof of strong subadditivity \cite{headrick_takayanagi_2007} that we sketched at the beginning of this section (cf.~\cref{fig:classical ssa proof}).
This is no surprise, in fact the very same construction is well-known in graph theory, where it is used to establish submodularity of the cut function.

For completeness, we also describe the corresponding hypercube picture (\cref{fig:hypercube-ssa}):
The vertices describe the two-dimensional hypercube corresponding to the left-hand side of the inequality.
Each vertex is labeled in red by a boundary region according to the occurrence vectors $x_A, \dots, x_O$.
The hypercube cuts \cref{eq:hypercube cuts} amount to $C_B = \{11\}$ and $C_{ABC} = \{01,10,11\}$ and they have been indicated in blue in the above figure.
It is immediately apparent that each hypercube edge crosses precisely one of the cuts.

\begin{center}
  \includegraphics[height=4cm]{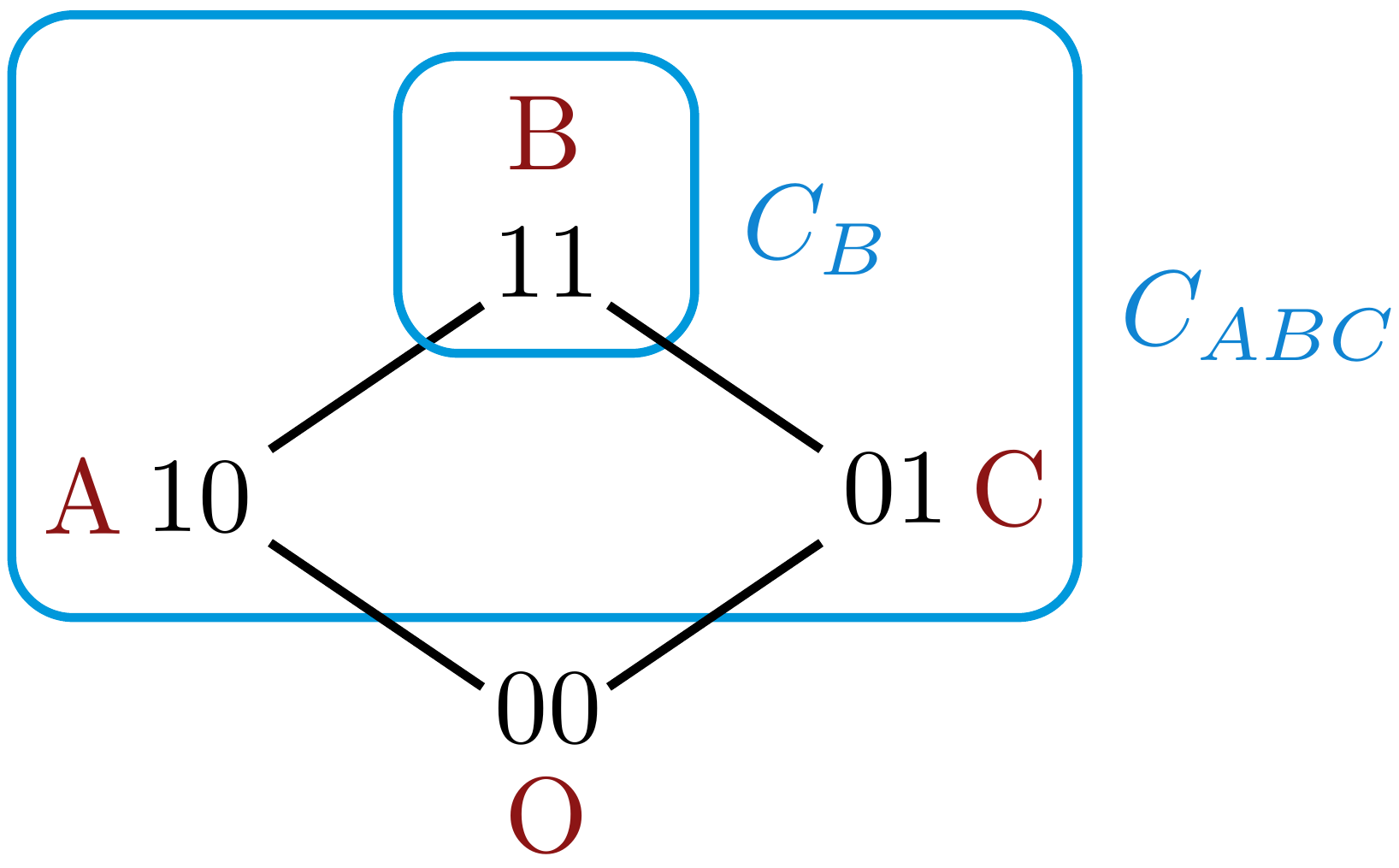}
  \captionof{figure}{Strong subadditivity of holographic entropy in the hypercube picture.}
  \label{fig:hypercube-ssa}
\end{center}

\medskip

We now sketch the corresponding construction for the monogamy of the mutual information, $S(AB) + S(BC) + S(AC) \geq S(A) + S(B) + S(C) + S(ABC)$.
In this case, there are four pairs of occurrence vectors.
\Cref{tab:mmi-contraction} lists the unique extension of this initial data to a contraction $f \colon \{0,1\}^3 \rightarrow \{0,1\}^4$.
\begin{table}
\centering
\begin{tabular}{c ccc cccc}
  \toprule
  & \multicolumn{3}{c}{$x$} & \multicolumn{4}{c}{$y = f(x)$} \\
  \cmidrule(lr){2-4} \cmidrule(lr){5-8}
  & AB & BC & AC & A & B & C & ABC \\
  O & 0 & 0 & 0 & 0 & 0 & 0 & 0 \\
    & 0 & 0 & 1 & 0 & 0 & 0 & 1 \\
    & 0 & 1 & 0 & 0 & 0 & 0 & 1 \\
  C & 0 & 1 & 1 & 0 & 0 & 1 & 1 \\
    & 1 & 0 & 0 & 0 & 0 & 0 & 1 \\
  A & 1 & 0 & 1 & 1 & 0 & 0 & 1 \\
  B & 1 & 1 & 0 & 0 & 1 & 0 & 1 \\
    & 1 & 1 & 1 & 0 & 0 & 0 & 1 \\
  \bottomrule
\end{tabular}
\captionof{table}{Proof by contraction of monogamy of the holographic mutual information.}
\label{tab:mmi-contraction}
\end{table}
This constitutes a `proof by contraction' of the monogamy inequality \cref{eq:mmi}.
The corresponding cuts are
\begin{align*}
  U_A &= W(101) = W_{AB} \cap W_{BC}^c \cap W_{AC} \\
  U_B &= W(110) = W_{AB} \cap W_{BC} \cap W_{AC}^c \\
  U_C &= W(011) = W_{AB}^c \cap W_{BC} \cap W_{AC} \\
  U_{ABC} &= \bigcup_{x \neq 000} W(x) = W_{AB} \cup W_{BC} \cup W_{AC}
\end{align*}
which is precisely the construction in \cite{hayden_headrick_maloney_2013}.
The corresponding hypercube picture is displayed below in \cref{fig:hypercube-mmi}. Again, we observe that all edges are cut precisely once.
\begin{center}
  \includegraphics[height=4cm]{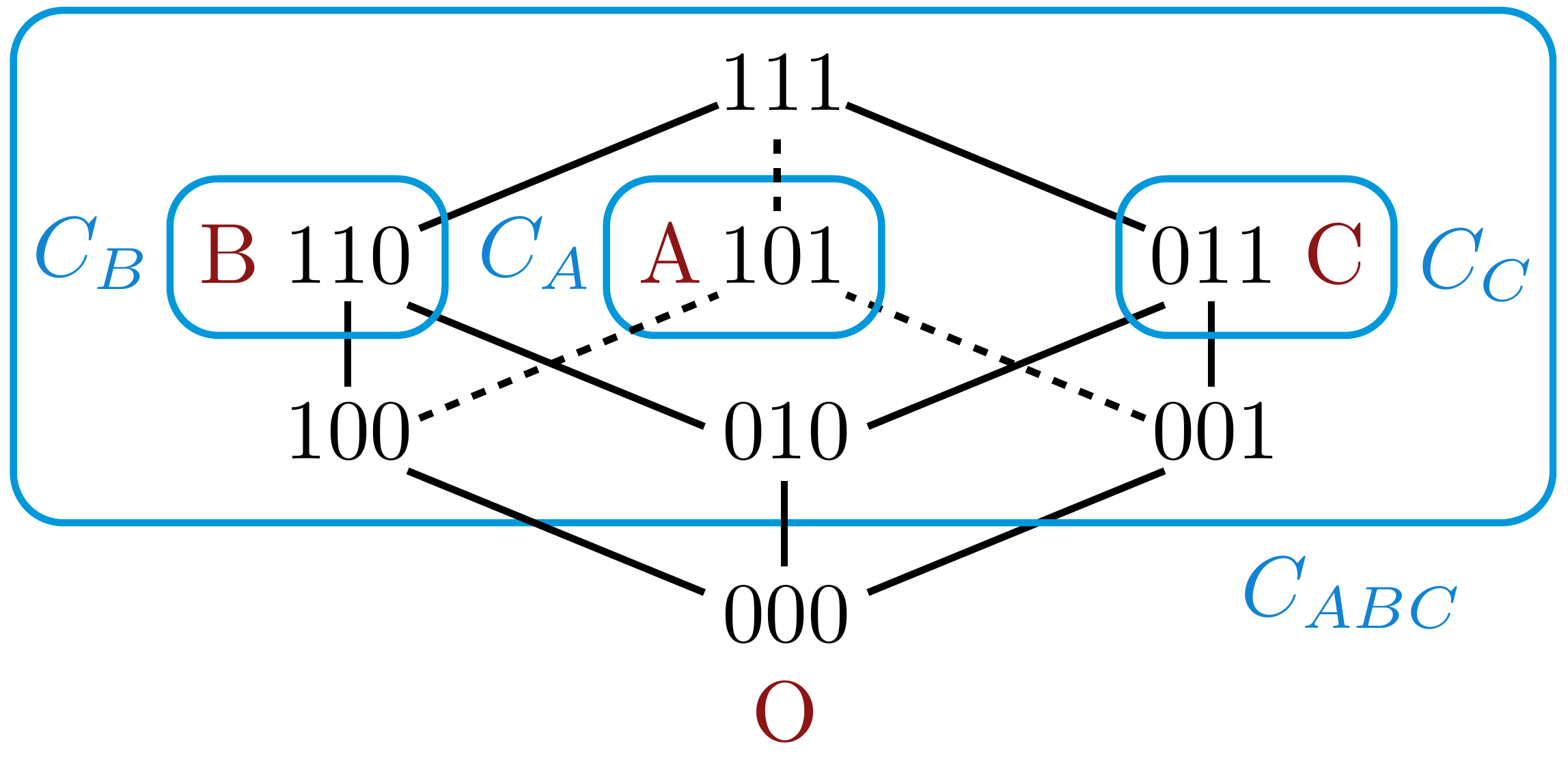}
  \captionof{figure}{Monogamy of the holographic mutual information in the hypercube picture.}
  \label{fig:hypercube-mmi}
\end{center}

\paragraph{Equality conditions.}
The proof of \cref{thm:proof by contraction} also tells us about the conditions under which a given entropy inequality holds with equality.
Indeed, it is clear from \cref{eq:final argument} that \cref{eq:entropy inequality} can only hold with equality if both (1) the cuts \cref{eq:reassembled cuts} are crossed by all edges that cross the original cuts (including weights and multiplicities) and (2) these cuts are in fact minimal.
This gives interesting geometrical information about the graph model and the underlying bulk geometry.
For example, in the case of strong subadditivity the first equality condition asserts there are no edges between $W_{AB} \setminus W_{BC}$ and $W_{BC} \setminus W_{AC}$, which is well-known from graph theory.
It would be interesting to investigate equality conditions for other holographic entropy inequalities in more detail.

\paragraph{A greedy algorithm for finding proofs by contraction.}

\begin{algorithm}
\centering
\begin{itemize}
  \item Construct the occurrence vectors $x_i$ and $y_i$ according to \cref{eq:occurrence vectors} for all $i \in [n+1]$.
  \item Start with the partially defined function $f$ that sends $f(x_i) = y_i$ for all $i \in [n+1]$.
  \item Verify that $f$ is a contraction on its domain.
    If this is not the case then the entropy inequality \cref{eq:entropy inequality} is violated for a Bell pair.
  \item Successively try to extend $f$:
    For each $x \in \{0,1\}^L$ for which the function is not yet defined, we know that any choice of $f(x)$ has to satisfy the conditions
    \[ \norm{f(x) - f(x')}_\beta \leq \norm{x-x'}_\alpha \]
    for all $x'$ for which $f(x')$ is already defined.
    If this system of constraints has no solution $f(x)$ for some $x$, fail.
    If there is a \emph{unique} solution $f(x)$ for some $x$, extend the function accordingly.
    Otherwise, pick an arbitrary $f(x)$ that satisfies the constraints.
  \item Repeat this process until the function is fully defined on the entire unit cube $\{0,1\}^L$.
\end{itemize}
\caption{Greedy algorithm for finding a `proof by contraction'.}
\label{alg:greedy}
\end{algorithm}

\Cref{thm:proof by contraction} can be readily turned into a greedy algorithm for establishing a given entropy inequality \cref{eq:entropy inequality}. 
Indeed, it is not hard to see that, upon successful termination, \cref{alg:greedy} constructs a function $f$ that satisfies the assumptions of \cref{thm:proof by contraction}.

In practice, we have found \cref{alg:greedy} to be quite computationally effective.
For all inequalities that we discuss in this paper, the algorithm always terminates successfully by finding a contraction $f$ (even when choices had to be made in the process).
In fact, the analytic proof of the cyclic family of inequalities described in \cref{subsec:cyclic} below has been guided by computer experiments up to $n=11$ regions.
We can provide a computer implementation of \cref{alg:greedy} upon request. 

We remark that there are in general many equivalent ways of expressing an entropy inequality in the form \cref{eq:entropy inequality}.
For example, rescaling a given entropy inequality and/or replacing a right-hand side term $\beta_r S(J_r)$ by a sum $S(J_r) + \dots + S(J_r)$ can often be useful, as it gives more flexibility to the extension process.
From the hypercube perspective discussed above, this amounts to allowing for `finer' cuts $C_r$.

\subsection{A cyclic family of new entropy inequalities}
\label{subsec:cyclic}

We now describe a family of new holographic entropy inequalities.
We have the following infinite family of \emph{cyclic entropy inequalities} for $n \geq 2k+l$ regions,
\begin{equation}
\label{eq:general cyclic}
  \sum_{i=1}^n S(A_i \dots A_{i+l-1}|A_{i+l} \dots A_{i+k+l-1}) \geq S(A_1 \dots A_n),
\end{equation}
where $S(X|Y) := S(XY) - S(Y)$ denotes the \emph{conditional entropy} and where all indices are taken modulo $n$.
We note that the case $(n,k,l)=(2,0,1)$ is subadditivity, $S(A) + S(B) \geq S(AB)$, while the monogamy of the mutual information \cref{eq:mmi} corresponds to choosing $(n,k,l)=(3,1,1)$.
As strong subadditivity can be obtained as a consequence of subadditivity and monogamy, this shows that our family generalizes all previously known holographic entropy inequalities.
The case where $n=2k+1$ and $l=1$ is of particular interest:
\begin{equation}
\label{eq:strongest cyclic}
  \sum_{i=1}^n S(A_i|A_{i+1} \dots A_{i+k}) \geq S(A_1 \dots A_n).
\end{equation}
These are the strongest inequalities in the family, since all other instances of our family can be reduced to \cref{eq:strongest cyclic} and subadditivity. 
Just as the monogamy inequality \cref{eq:mmi} excludes the four-party GHZ state, we can find entangled quantum states $\rho_n$ for each odd $n$ that are excluded by the cyclic inequality \cref{eq:strongest cyclic} for $n$ regions.
Moreover, these states do not violate any of the cyclic inequalities for fewer regions.
This shows that each instance of \cref{eq:strongest cyclic} is independent from the instances for fewer regions. 
Thus it comprises a truly infinite family of cyclic inequalities.
We remark that \cref{eq:strongest cyclic} is in general different from the $n$-partite information, which for $n>3$ is known to not have a definite sign \cite{hayden_headrick_maloney_2013}.

In \cref{sec:cyclic appendix}, we give a detailed proof of our new entropy inequalities and of the properties advertised above.
By exploiting the cyclic structure of the inequalities, we construct an explicit `proof by contraction' $f \colon \{0,1\}^n \rightarrow \{0,1\}^{n+1}$ that extends the initial data given by the occurrence vectors.
While the function has a reasonably simple algebraic description, the cuts \cref{eq:reassembled cuts} produced by it are highly nontrivial (see \cref{sec:cyclic appendix}, where we provide a fully worked-out example).

When all of the regions are adjacent regions on a single boundary, the left-hand side of \cref{eq:strongest cyclic} can be seen as a discrete version of differential entropy \cite{bartek}.
In the $n\rightarrow \infty$ limit, the inequality dictates that the length of the bulk curve tangent to all of the Ryu-Takayanagi surfaces for half the boundary is greater than the entropy of the entire state.
More generally, \cref{eq:general cyclic} together with subadditivity bounds the length an arbitrary convex bulk curve to be greater than the entropy of the state.
In previous work \cite{bartek}, this was shown to follow from strong subadditivity alone, but the proof does not generalize to arbitrary boundary regions.
In contrast, we have shown the general inequality to be true and independent from strong subadditivity.
Our result generalizes an idea such as differential entropy beyond the basic geometric reasoning.

\subsection{The holographic entropy cone for five regions}
\label{subsec:five}

In \cref{subsec:few regions} we had seen that the holographic entropy cone for four regions is completely determined by strong subadditivity and the monogamy of the mutual information alone.
Moreover, we found that all extreme rays could be explained by geometries with sufficiently small boundary cycles (\cref{fig:rays234}) or by star graphs (\cref{fig:raygraphs234}), for which
$S(I) = \min \{ \sum_{i \in I} S(i), \sum_{i \in [n+1] \setminus I} S(i) \}$,
where we set $S(A_{n+1}) := S(A_1 \dots A_n)$ according to purification.
Such entropies are also realized by random pure states with dimension $\log d_i \propto S(i)$ \cite{multiboundary_2014}.

For five or more regions, the situation is markedly different:
There are in general many further entropy inequalities that are independent from strong subadditivity and the monogamy inequality.
One of them is the cyclic inequality \cref{eq:strongest cyclic} for $n=5$ that we have discussed in the preceding section,
\begin{equation}
\label{eq:cyclic five}
  S(A|BC) + S(B|CD) + S(C|DE) + S(D|EA) + S(E|AB) \geq S(ABCDE).
\end{equation}
For example, the permutation-symmetric ray with $S(I) = 1$ for $\abs I \in \{1,5\}$ and $S(I) = 2$ otherwise, which is an extreme ray for $\widehat\calC_5$, is excluded by the cyclic inequality.
Therefore this ray does not correspond to a holographic entropy vector.
The cyclic inequality together with subadditivity and monogamy also implies all stabilizer inequalities for five-partite quantum states derived in
\cite{fivepartystabs}. 
This is in line with our findings in \cref{subsec:few regions} for four regions and it gives non-trivial additional evidence that all holographic entropies are stabilizer entropies.

Apart from \cref{eq:cyclic five}, we have found several other holographic entropy inequalities to be valid, including the following: 
\begin{itemize}
\item $2S(ABC) + S(ABD) + S(ABE) + S(ACD) + S(ADE) + S(BCE) + S(BDE) \geq S(AB) + S(ABCD) + S(ABCE) + S(ABDE) + S(AC) + S(AD) + S(BC) + S(BE) + S(DE)$
\item $S(ABE) + S(ABC) + S(ABD) + S(ACD) + S(ACE) + S(ADE) + S(BCE) + S(BDE) + S(CDE) \geq S(AB) + S(ABCE) + S(ABDE) + S(AC) + S(ACDE) + S(AD) + S(BCD) + S(BE) + S(CE) + S(DE)$
\item $S(ABC) + S(ABD) + S(ABE) + S(ACD) + S(ACE) + S(BC) + S(DE) \geq S(AB) + S(ABCD) + S(ABCE) + S(AC) + S(ADE) + S(B) + S(C) + S(D) + S(E)$
\item $3S(ABC) + 3S(ABD) + 3S(ACE) + S(ABE) + S(ACD) + S(ADE) + S(BCD) + S(BCE) + S(BDE) + S(CDE) \geq 2S(AB) + 2S(ABCD) + 2S(ABCE) + 2S(AC) + 2S(BD) + 2S(CE) + S(ABDE) + S(ACDE) + S(AD) + S(AE) + S(BC) + S(DE)$
\end{itemize}
All the above inequalities, including \cref{eq:cyclic five}, correspond to different facets of the entropy cone and are therefore independent. A feature of these inequalities (and the cyclic family of inequalities) is that they satisfy the following \emph{Bell condition}: they are saturated for Bell pairs shared between any two regions $A_i$ and $A_j$ (including the purifying region).
In particular, this implies that the inequalities are balanced, i.e., each region appears the same number of times on both sides of the inequality.

Another interesting observation is that, in contrast to the situation for four and fewer regions, there are numerous extreme rays of 
$\calC_5$ which cannot be explained by star graphs (see \cref{fig:fromgraph} for one such an example).
Thus the bulk geometry becomes increasingly important; the interior cycles can no longer be ignored for the purposes of minimization in the Ryu-Takayanagi formula \cref{eq:ryu takayanagi}.
We will give a detailed discussion of the holographic entropy cone for five regions in forthcoming work.

\section{The holographic entropy cone and CFT}
\label{sec:physics}

In semi-classical gravity, there is a semi-classical quantum state for each solution of Einstein's equations, possibly coupled to matter fields.
More precisely, there is one such state for each phase space volume measured in units of the Planck constant, and the phase space of the gravity theory is the space of solutions near the initial value surface.

The spacetime metric on the initial value surface cannot be chosen arbitrarily; the metric should satisfy the constraint equations.
Moreover, if we want the Lorentzian-signature geometry to have a smooth analytic continuation to a Euclidean-signature geometry then the extrinsic curvature on the initial value surface should vanish \cite{Halliwell:1989dy}.
This condition ensures that the entanglement entropy for a region on the boundary of the initial value surface is given by the Ryu-Takayanagi formula as opposed to the more general Hubeny-Rangamani-Takayanagi formula \cite{2007JHEP07062H}.
The vanishing of the extrinsic curvature also simplifies the constraint equations.
In particular, if we ignore matter fields and use the vacuum Einstein equations with cosmological constant, the constraint equations reduce to the condition that the scalar curvature of the induced metric on the initial value surface is constant.
Therefore, any $d$-dimensional Riemannian manifold $X$ with constant scalar curvature gives a consistent initial value condition for the vacuum Einstein equations in $d+1$ dimensions.
The resulting solution is time-reversal symmetric in the neighborhood of the initial value surface and thus can be analytically continued to a Euclidean-signature metric, so that the Ryu-Takayanagi formula can be used to evaluate entanglement entropies.
This is why in \cref{subsec:fromgraph} we constructed geometries with constant scalar curvature to realize the extreme rays of our holographic entropy cone.
We remark that constant scalar curvature (together with asymptotically AdS boundary conditions) is sufficient but not necessary for a holographic state to exist; it can be relaxed if we allow matter fields to have non-zero energy-momentum tensors.

Let us examine the properties of such semi-classical quantum states in more detail in the case of $d=2$.
A Riemann surface with a constant negative curvature metric can be lifted to a locally $\AdS_3$ Lorentzian manifold \cite{Brill}.
Specifically, let $\Sigma$ be a Riemann surface with constant curvature metric $\mathrm{d}\Sigma^2$ and curvature scale $\ell$.
Then there exists a locally $\AdS_3$ space $\mathcal{M}_\Sigma$ having a coordinate patch in a neighborhood of $t=0$ that looks as follows:
\begin{equation}
\mathrm{d}s^2_{\mathcal{M}_\Sigma} = - \ell^2 \mathrm{d}t^2 + \cos^2{t} \, \mathrm{d} \Sigma^2 .
\label{eq:metric}
\end{equation}
Note that the manifold  $\mathcal{M}_\Sigma$ includes the entirety of the Riemann surface on a time-reflection symmetric slice at $t=0$.
Such locally $\AdS_3$ solutions to Einstein's equations are realized -- after removing certain regions -- by quotients of $\AdS_3$ by Fuchsian groups of the second kind \cite{skenderis_van_rees_2011,Brill}.
The desired group action can be uniquely identified by its action on the $t=0$ slice alone.
It is precisely the group action on the hyperbolic disk $\mathbb H^2$ whose quotient is our Riemann surface $\Sigma$.
A thorough discussion of these solutions can be found in \cite{skenderis_van_rees_2011} and a concise, but very readable treatment in \cite{multiboundary_2014}, which mirrors our present interests.

If our Riemann surface $\Sigma$ has $b$ boundaries, the Lorentzian geometry will also have $b$ asymptotic locally $\AdS_3$ regions.
Each exterior region has the exact geometry of a BTZ black hole.
These asymptotic BTZ regions are connected by a wormhole hidden behind the collective BTZ horizons.
Each asymptotic boundary corresponds to a separate CFT, not directly coupled to the other CFTs.
A wormhole with $b$ boundaries is dual to a state $\ket\Sigma$ in the tensor product $\mathcal{H}_{CFT}^{\otimes b}$ of the Hilbert spaces of the individual CFTs.
The interior regions of the Riemann surface encode the $t=0$ slice of the wormhole as well as the entanglement structure of the state $\ket\Sigma$. 

The Lorentzian geometry that we have obtained from our Riemann surface has a time-reversal symmetry.
Thus there is a well-defined Euclidean continuation of the metric \cref{eq:metric}, which in our case is given by
\begin{equation}
\mathrm{d}s^2 =  \ell^2 \mathrm{d}\tau^2 + \cosh^2{\tau} \, \mathrm{d} \Sigma^2  .
\label{eq:eumetric}
\end{equation}
Because the time-symmetric surface is precisely the Riemann surface $\Sigma$, the corresponding entanglement entropies can be obtained by the Ryu-Takayanagi formula from the minimal geodesics in $\Sigma$ \cite{2007JHEP07062H,multiboundary_2014}.
Thus, our holographic entropy vectors are a true measure of the boundary entanglement entropies.

\begin{figure}
\centering
\includegraphics[width=0.4\linewidth]{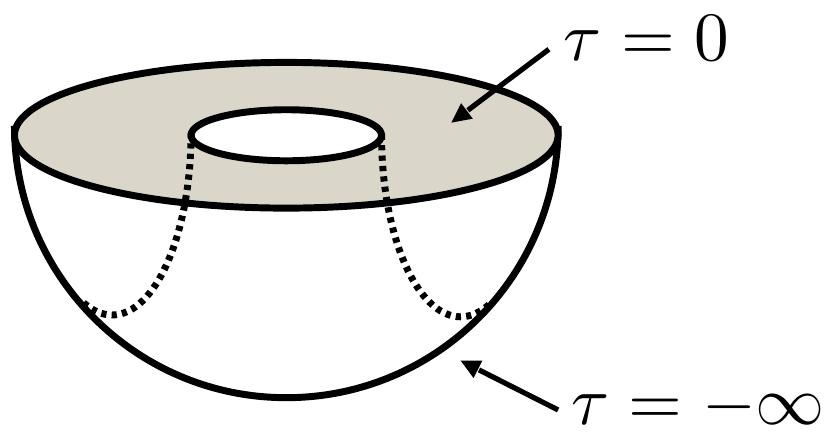}
\caption{Euclidean geometry connecting two CFT states. The shaded region is the time-symmetric surface $\Sigma$
at $\tau=0$. Note that the asymptotic boundary at $\tau=-\infty$ is also a copy of $\Sigma$.}
\label{fig:wormhole1}
\end{figure}

Since the Euclidean geometry \cref{eq:eumetric} matches onto the Lorentzian
geometry at $\tau = 0$, where the extrinsic curvature vanishes,
the Euclidean geometry can be regarded as a saddle point of the Euclidean
path integral of the bulk gravity and defines the state $|\Sigma \rangle$
at $\tau=0$ in the semi-classical gravity approximation.
On the other hand, the metric \cref{eq:eumetric} becomes
$\mathrm{d}s^2 \sim \ell^2 \mathrm{d}\tau^2 + e^{2|\tau|} \, \mathrm{d} \Sigma^2$
as $\tau \rightarrow -\infty$, and therefore the asymptotic boundary
is  also a copy of $\Sigma$ with $e^{|\tau|}$ as its RG scale, as shown in \cref{fig:wormhole1}. Thus,
the Riemann surface $\Sigma$ is a natural home for the dual CFT at $\tau = -\infty$ as well
as the time symmetric slice at $\tau = 0$ of the bulk geometry.
Based on this observation,
it was argued in \cite{multiboundary_2014} that the state $|\Sigma \rangle$ is given by
a functional integral of the $2d$ CFT on $\Sigma$.

States of two-dimensional CFTs on a Riemann surface with boundaries were considered
in the 1980s in \cite{OperatorOne, OperatorTwo, OperatorThree} to formulate the
operator formalism for conformal field theory on Riemann surfaces of higher genus.
For example, for the $c=1$ CFT with complex free fermions $\psi(z), \bar{\psi}(z)$,
we consider a $(-1/2, 0)$ form $\nu(z)$ that is holomorphic everywhere on
a Riemann surface $\Sigma$ with $b$ boundaries $\gamma_i$ ($i=1,...,b$)
obeying the same spin structure as $\psi(z)$. For simplicity, let us assume
NS spin structure around each boundary $\gamma_i$.
We then require that, for any such $\nu(z)$, the state
$|\Sigma \rangle$ in the $b$ tensor product of the NS fermion Fock spaces satisfies
\begin{equation}
 \sum_{i=1}^b \oint_{\gamma_i} dz\ \nu(z) \psi^{(i)} (z) |\Sigma \rangle = 0,  \quad
 \sum_{i=1}^b \oint_{\gamma_i} dz\ \nu(z) \bar{\psi}^{(i)} (z) |\Sigma \rangle = 0,
\label{eq:grassmann}
\end{equation}
where $\psi^{(i)} (z)$ is the free fermion operator
acting on the $i$-th Hilbert space
of the CFT. It turns out that, for each spin structure,
there is a unique solution to these equations.
Moreover, the resulting state $|\Sigma \rangle$ can be used to compute
correlation functions of these fermions as
\begin{align*}
&\braket{\psi(z_1) \cdots \psi(z_N) \bar{\psi}(w_1) \cdots \bar{\psi}(w_N)}_{\Sigma_0} \\
=\;&\prescript{}{1}{{\bra 0}} \otimes \cdots \otimes \prescript{}{b}{{\bra 0}} \prod_{\alpha=1}^N \psi^{(i_\alpha)}(z_\alpha) \prod_{\beta=1}^N \bar{\psi}^{(i_\beta)}(w_\beta)
 \ \ket\Sigma,
\end{align*}
where the left-hand side is the correlation function computed on a closed
Riemann surface $\Sigma_0$ obtained by capping off the $b$ boundaries of $\Sigma$
by attaching a disk to each of them. The conditions \cref{eq:grassmann} ensure
that fermions are well-defined and holomorphic on $\Sigma$.
The state $\ket\Sigma$ associated to the Riemann surface $\Sigma$ plays important
roles in soliton theory \cite{soliton} and in topological string theory \cite{topological_vertex}.

It is an interesting question to ask if a given entangled state in  $\mathcal{H}_{CFT}^{\otimes b}$ is interpolated through a smooth wormhole geometry.
In the case at hand, it is possible to diagnose whether a given state $\ket{\Psi^{(b)}}$ is associated to a smooth Riemann surface with $b$ boundaries.
For $b=1$ the procedure is as follows.
Consider the fermion current $J(z) = : \psi(z) \bar{\psi}(z) :=\sum_n J_n z^{-n-1}$ and define the following wave-function
of infinitely many variables $t_1, t_2, \dots$,
\[ \Psi(t) = \bra 0 \exp\left(\sum_{n=1}^\infty t_n J_n  \right) \ket\Psi. \]
If $\ket\Psi = \ket\Sigma$ for some Riemann surface $\Sigma$ of genus $g$ with one boundary,
we can compute $\Psi(t_1, t_2, \cdots)$ explicitly as
\[ \Psi(t) = \exp\left(\sum_{m,n} Q_{mn} t_m t_n\right)\vartheta\left( \sum_n \vec{A}_n t_n + \vec{c} \right), \]
where $\vartheta$ is the Riemann theta function for $\Sigma$. The matrix $Q_{mn}$ is related to the current-current correlation function,
and the $g$-dimensional vectors $\vec{A}_n$ and $\vec{c}$ are related to the holomorphic one-forms and spin structure on $\Sigma$, respectively.
Note that modulo the quadratic form $\sum_{m,n} Q_{mn} t_m t_n$ in the exponent, $\Psi$ depends on the $t_n$ only through the $g$ linear combinations
$\sum_n \vec{A}_n t_n$
(equivalently, we can say that $\partial^3\log \Psi/\partial t_k \partial t_m \partial t_n$ depends only on these combinations of $t_n$).
Conversely, it was proven in  \cite{mulase1984} that if $\Psi$ depends on the $t_n$ only through $g$ linear combinations,
modulo a quadratic form in the exponent, there is a Riemann surface $\Sigma$ with one boundary such that
$\ket\Psi = \ket\Sigma$. To generalize this for $b \geq 1$ we can consider
\[ \Psi(t) = \prod_{i=1}^b \prescript{}{i}{{\bra 0}} \exp\left(\sum_{n=1}^\infty t_n^{(i)} J_n^{(i)}  \right) \ket\Psi, \]
where the $J_n^{(i)}$ are the current operators acting on the $i$-th Hilbert space.
The state $\ket\Psi$ is associated to a Riemann surface of genus $g$ with $b$ boundaries if and only if the wave-function $\Psi(t)$ depends on the $t_n^{(i)}$ only through $g$ linear combinations modulo a quadratic form in the exponent.
A natural generalization of \cref{eq:grassmann} for a general
CFT would be
\begin{equation}
 \sum_{i=1}^b \oint_{\gamma_i} dz\ \xi(z) T^{(i)} (z) |\Sigma \rangle = 0
\label{eq:virasoro}
\end{equation}
and its complex conjugation,
where $\xi(z)$ is any $(-1,0)$ form holomorphic on $\Sigma$ and
$T^{(i)} (z)$ is the energy-momentum tensor
acting on the $i$-th Hilbert space
of the CFT. These conditions ensure that the energy-momentum tensor
is well-defined and holomorphic on $\Sigma$.
Unlike in the case of free fermions,
a solution to these equations is not unique in general.
In fact, there is a solution for each conformal block of the CFT on $\Sigma$.
Additional conditions are required to specify the state $\ket\Sigma$ uniquely.

To illustrate these points, let us consider as $\Sigma$ a sphere with two circular holes.
By Weyl rescaling and diffeomorphisms, we can turn $\Sigma$ into a cylinder of length $\beta/2$.
The two boundary circles can be twisted relative to each other by $\theta/2$.
Since $\ket\Sigma$ is a state in $\mathcal{H}_{CFT}\otimes \mathcal{H}_{CFT}$, one can also think of it as a map from $\mathcal{H}_{CFT}$ to $\mathcal{H}_{CFT}$.
The functional integral of the two-dimensional CFT on $\Sigma$ gives the Euclidean time translation and rotation, $\exp(-\frac{\beta}{2} \hat{H} -i\frac{\theta}{2}\hat{J})$, where $\hat{H}$ is the Hamiltonian and $\hat{J}$ is the angular momentum around the cylinder.
The corresponding state $\ket\Sigma$ is then given by
\begin{equation}
  \ket\Sigma = \sum_{\Psi} e^{-\frac{\beta}{2} E_\Psi -i\frac{\theta}{2}J_\Psi}
  \ket\Psi \otimes \ket\Psi,
\label{eq:thermo_field_double}
\end{equation}
where the $\ket\Psi$ denote the joint energy and angular momentum eigenstates in $\mathcal{H}_{CFT}$, with $E_\Psi$ and $J_\Psi$ the corresponding eigenvalues.
This is the thermofield double state.
However, this is not the only solution to \cref{eq:virasoro}. In fact, any projection of $\exp(-\frac{\beta}{2} \hat{H} -i\frac{\theta}{2}\hat{J})$ to
a Virasoro primary state $\phi$ and its descendants satisfies the condition. The corresponding state is given by
\begin{equation}
  \ket\Sigma_\phi = \sum_{i_L, i_R} e^{-\frac{\beta}{2} (h_\phi + n_{i_L} + \bar{h}_\phi+ n_{i_R} -\frac{c}{12}) -i\frac{\theta}{2}(h_\phi + n_{i_L}-\bar{h}_\phi - n_{i_R})}
  \ket{\phi; i_L, i_R} \otimes \ket{\phi; i_L, i_R},
\label{eq:projected_to_descendants}
\end{equation}
where the $\ket{\phi; i_L, i_R}$ are normalized states for the left and right Virasoro descendants of $\phi$; their conformal weights are $(h_\phi + n_{i_L} ,  \bar{h}_\phi+ n_{i_R})$.
We included the Casimir energy $-\frac{c}{12}$ of the CFT on the circle.
These states form a complete basis of solutions for
\cref{eq:virasoro}. For example, the
thermofield double state \cref{eq:thermo_field_double} can be expressed as the
sum over all primary states with uniform weight,
\begin{equation}
\ket\Sigma = \sum_{\phi : \text{primary}}
\ket\Sigma_\phi .
\end{equation}

\begin{figure}
\centering
\includegraphics[width=0.3\linewidth]{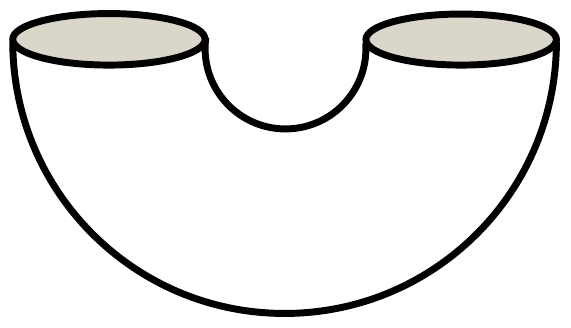}
\caption{Thermal $\AdS_3$, which is another saddle-point geometry we can use to define a state in  $\mathcal{H}_{CFT}\otimes \mathcal{H}_{CFT}$ solving
\cref{eq:virasoro}.}
\label{fig:wormhole2}
\end{figure}

Returning to (2+1)-dimensional gravity, there are infinitely many Euclidean saddle-point
geometries that can connect $\mathcal{H}_{CFT}$ to $\mathcal{H}_{CFT}$. The geometry \cref{eq:eumetric}, with $\Sigma$ being the cylinder
as shown in \cref{fig:wormhole1},
is one possibility. But, we may also consider
a solid cylinder in which the circle direction is contractible and fill in with a portion of the Euclidean $\AdS_3$, as shown
in \cref{fig:wormhole2}. There are infinitely many other possibilities as well. Each saddle-point geometry $\mathcal M$ such that
$\partial {\mathcal M} = \Sigma$ gives a semi-classical state
$\ket{\mathcal M}\rangle$ in $\mathcal{H}_{CFT}\otimes \mathcal{H}_{CFT}$ which
satisfies \cref{eq:virasoro} and therefore is a linear combination of
$\ket{\Sigma}_\phi$ given by \cref{eq:projected_to_descendants},
\begin{equation}
 \ket{\mathcal M}\rangle = \sum_{\phi:\text{primary}} C_\phi({\mathcal M}) \ \ket\Sigma_\phi,
 \label{eq:primary_weights}
\end{equation}
where the weights $C_\phi({\mathcal M})$ depend on the choice of the Euclidean geometry $\mathcal M$.
We can determine the weights by computing the norm of $\ket{\mathcal M}$,
\begin{align*}
  \braket{\mathcal M | \mathcal M}
=\;&\sum_{\phi:\text{primary}} \abs{C_\phi({\mathcal M})}^2 \ {}_\phi\braket{\Sigma|\Sigma}_\phi \\
=\;& q^{-\frac{c}{24}}\bar{q}^{-\frac{c}{24}} \prod_{n=1}^\infty \abs{1-q^n}^{-2} \times
\left( \abs{1-q}^2 \abs{C_{{\rm vac}}({\mathcal M})}^2 +
\sum_{\phi\neq {\rm vac}}q^{h_\phi}\bar{q}^{\bar{h}_\phi}
\abs{C_\phi({\mathcal M})}^2 \right),\nonumber
\end{align*}
where $q = e^{-\beta + i \theta}$ and ${\rm vac}$ denotes the ground state.
We used the fact that descendants of the ground state contribute to the
partition function as
$q^{-\frac{c}{24}}\bar{q}^{-\frac{c}{24}}\prod_{n=2}^\infty \abs{1-q^n}^{-2}$
while those for a generic primary state $\phi$ contribute as
$q^{h_\phi-\frac{c}{24}}\bar{q}^{\bar{h}_\phi-\frac{c}{24}}
\prod_{n=1}^\infty \abs{1-q^n}^{-2}$.
From the point of view of (2+1)-dimensional semi-classical gravity,
the norm $\braket{\mathcal M | \mathcal M}$ is the partition function
computed at the Euclidean saddle geometry obtained by gluing $\mathcal{M}$
with itself. The asymptotic boundary of the resulting geometry is a torus, which
is the Schottky double of the cylinder. We can then find the weights $C_\phi$
by comparing the semi-classical gravity
partition function with the CFT norm in the above.

For example, if $\mathcal M$ is the solid cylinder, the Euclidean saddle for
$\braket{\mathcal M | \mathcal M}$ is thermal $\AdS_3$. In this case,
the primary states contributing to the partition function are the ground state, a few primary
states corresponding to low energy particles in $\AdS_3$, and their Virasoro descendants.
Thus, we have $C_{{\rm vac}}=1$ for the vacuum,
and $C_{\phi}=1$ when $\phi$
corresponds to a light particle in $\AdS_3$. Otherwise, $C_{\phi}=0$.

For the wormhole geometry \cref{eq:eumetric} we are interested in,
the Euclidean saddle $\braket{\mathcal M | \mathcal M}$ is 
given by the metric \cref{eq:eumetric} with $-\infty < \tau < \infty$.
It is the Euclidean black hole with temperature $1/\beta$ and chemical potential $\theta$ for the
angular momentum, which is the modular transform of thermal $\AdS_3$. Thus,
the weights $C_\phi({\mathcal M})$ for this geometry should satisfy
\begin{align*}
&q^{-\frac{c}{24}}\bar{q}^{-\frac{c}{24}} \prod_{n=1}^\infty \abs{1-q^n}^{-2} \times
\left( \abs{1-q}^2 \abs{C_{{\rm vac}}({\mathcal M})}^2 +
\sum_{\phi\neq {\rm vac}}q^{h_\phi}\bar{q}^{\bar{h}_\phi}
\abs{C_\phi({\mathcal M})}^2 \right) \nonumber \\
=\;&\tilde{q}^{-\frac{c}{24}}\bar{\tilde{q}}^{-\frac{c}{24}}
\prod_{n=1}^\infty \abs{1-\tilde{q}^n}^{-2} \times
\left( \abs{1-\tilde{q}}^2 +
\sum_{\phi:\text{particles}}\tilde{q}^{h_\phi}\bar{\tilde{q}}^{\bar{h}_\phi} \right),
\end{align*}
where $\tilde{q}=e^{\frac{(2\pi)^2}{-\beta + i\theta}}$. Since we are in the regime of
semi-classical gravity, the central charge $c$ is large. In this case,
an approximate solution to this equation is given by the Cardy formula:
\[
\abs{C_\phi({\mathcal M})}^2 \simeq
\exp\left[2\pi \sqrt{\frac{c}{6}\left(h_\phi - \frac{c}{24}\right)}+2\pi \sqrt{\frac{c}{6}\left(\bar{h}_\phi - \frac{c}{24}\right)}\right]
\]
The CFT state dual to the wormhole with two boundaries is then given by \cref{eq:primary_weights} with these weights.

It would be interesting to similarly identify the CFT states for multiboundary wormholes,
and we leave this for future investigation.

\paragraph{CFT states and the entropy cone.}

To prove that the holographic entropy vectors live in a convex cone we argued that the space of geometries was closed under scaling by a positive constant and under disjoint union of the geometries.
We briefly elaborate on those operations in terms of the dual CFT.
The entanglement entropy is related to the area of the Ryu-Takayanagi surface by $S=A/4G_N$.
It is thus, more precisely, the ratio of the area over the gravitational coupling, $A/G_N$, that we wish to scale.
For the CFT state, rescaling the bulk metric by a positive constant will rescale the asymptotic value of the AdS radius $\ell/G_N$, which corresponds to rescaling the CFT central charge $c= 3\ell/2G_N$.
Such a rescaling is a natural operation and is generically allowed for large $c$ CFTs with bulk dual.
The disjoint union operation is even simpler.
Taking a disjoint union of geometries $\Sigma' = \Sigma_1 \cup \Sigma_2$ corresponds to identifying $\mathcal{H}^{i}_{CFT'}=\mathcal{H}^{i}_{CFT_1} \otimes \mathcal{H}^{i}_{CFT_2}$ and considering the tensor-product state $\ket{\Sigma'} = \ket{\Sigma_1} \otimes \ket{\Sigma_2}$.


\section{Outlook}
\label{sec:outlook}

We conclude by sketching a number of interesting avenues for future investigation motivated by our results.

\paragraph{ER$=$EPR.}

While there is ample evidence that there are important connections between entanglement and geometry, many of these connections are currently only vaguely understood.
Recent work by Susskind and Maldacena \cite{lenny} has suggested that there is an equivalence between Einstein-Rosen wormholes (ER) and
entangled Einstein-Rosen-Podolsky pairs (EPR).
While this seems to be a sufficient language to discuss two-sided black holes, there is some evidence that a more general framework is needed to discuss multi-partite entanglement.
Previous work by Balasubramanian et al.\ \cite{multiboundary_2014} has argued that there exist $n$-boundary black holes dual to CFT states with `intrinsically $n$-partite' entanglement.
That is, the states 
cannot be prepared by taking a tensor product of states that are strictly less than $n$-partite, suggesting that geometric wormholes are not best understood in terms of bipartite EPR pairs alone.
Entropy cones provide a systematic framework for characterizing the entropies of multipartite entangled states.
We may stratify the holographic entropy cones according to the minimal entanglement required to explain a given entropy vector.
Thus the $k$-th stratum contains those holographic entropy vectors that can only be explained by $k$-partite (or higher) entanglement.
By embedding quantum entropy cones for fewer regions as described in \cref{subsec:symmetries}, we may then obtain entropy inequalities for each $k$, whose violation demonstrates that the entanglement is at least $(k+1)$-partite.
The inequalities proved in \cite{multiboundary_2014} are one such class of inequalities, though they are not the strongest possible.
Similar ideas have been proposed previously on the level of entanglement spectra \cite{walter2013entanglement}.

There are also intrinsically $n$-partite entangled states that cannot be detected entropically.
This invites a more refined study of multipartite holographic correlations and their decomposition into fundamental building blocks.


\paragraph{Quantum gravity.}

It would be interesting to verify which of the holographic entropy inequalities continue to hold beyond the classical level when subleading corrections are taken into account. Indeed, while the Ryu-Takayanagi formula describes the entropy to leading order in the central charge, there are situations where it is known that the entanglement entropy is non-zero only to subleading order. More generally, there exist geometries for which all leading order terms cancel in an entropy inequality; these correspond to points on the facets of our holographic entropy cones. A natural question would thus be to determine how subleading entropy corrections modify the facets of the cones (see, e.g., \cite{barrella2013holographic,2013JHEP11074FLM}).

\paragraph{Covariant holographic entropy.}
It would also be interesting to extend the methodologies developed in this work to covariant descriptions of boundary entanglement entropies, such as the Hubeny-Rangamani-Takayanagi proposal \cite{2007JHEP07062H}.
The work \cite{wall2014maximin} implies that at least under certain conditions on the structure of an inequality, it is possible to reduce to individual time slices, so that our proofs by contraction become applicable.
More generally, it would be of physical interest to determine whether the constraints imposed on covariant entropies by the Hubeny-Rangami-Takayanagi formula are strictly weaker than those imposed by the Ryu-Takayanagi formula.

\paragraph{Tensor networks.}

The entropy inequalities that we consider in this paper hold for any physical system that satisfies a version of the Ryu-Takayanagi formula or the discrete entropy formula -- such as MERA states built from generic tensors \cite{evenbly2011tensor} and the tensor networks constructed recently in \cite{happy}.
They also serve as additional checks for tensor network proposals for the gauge/gravity correspondence. 
In \cref{subsec:few regions,subsec:five} we have found that any holographic entropy vector is compatible with the known constraints on the entropies of stabilizer states for $n \leq 5$ subsystems.
This is rather suggestive in view of the recently proposed connections between holography and quantum error correcting codes \cite{verlinde_verlinde_2013,almheiri_dong_harlow_2014}, and it is likely that this observation can be proved for arbitrary $n$ by generalizing the construction of \cite{happy}.

Likewise, bulk inequalities \cref{eq:bulk inequality quote} naturally hold for all physical systems that saturate an area law, $S(A) \sim \abs{\partial A}$. It would be interesting to explore the implications of our new inequalities for such systems, in particular since in this context the tripartite information (i.e., difference of the left- and right-hand side of the monogamy inequality) specializes to the topological entanglement entropy \cite{kitaev2006topological}.

\paragraph{Geometry, combinatorics, and complexity.}

In differential geometry, it has been of long-standing interest to characterize the properties of geodesic length functions. This has been fully worked out in \cite{luo1998geodesic}, where it was shown that the constraints amount to a set of polynomial inequalities.
Our results show that the properties of geodesic length functions are much simpler when they are optimized over homology classes; in this case there is a finite number of \emph{linear} constraints. It would be desirable to extend this to a complete characterization along the lines of \cite{luo1998geodesic}.

In graph theory, the cut function, which assigns to a cut the total weight of all edges that go across the cut, has long been known to be \emph{submodular}, i.e., to satisfy strong subadditivity.
As discussed in \cref{subsec:contractions}, this is in a precise technical sense the discrete version of strong subadditivity of the Ryu-Takayanagi formula.
Submodular functions can be regarded as discrete analogues of convex functions and they play a natural role in combinatorial optimization.
Our bulk inequalities \cref{eq:bulk inequality quote} show that the cut function satisfies many further linear inequalities than were previously known.
In contrast, not even monogamy appears to be known in this context, and it should be interesting to study the consequences.

Finally, we recall from \cref{subsec:contractions} that the problem of finding proofs by contraction is equivalent to finding extensions of certain partially defined graph homomorphisms of hypercube graphs.
To decide whether a given partially defined function can be extended to a graph homomorphism is a natural problem in graph theory and it would be desirable to determine the computational complexity of this decision problem.

\begin{acknowledgments}
We thank Mario Berta, Venkat Chandrasekaran, Bartek Czech, Patrick Hayden, Shaun Maguire, Alexander Maloney, Donald Marolf, Ingmar Saberi, and Adam Sheffer for pleasant discussions.
M.W.\ acknowledges funding provided by the Simons Foundation and FQXi. N.B., H.O., and B.S.\ are supported in part by U.S.\ DOE grant DE-SC0011632 and by the Walter Burke Institute for Theoretical Physics (Burke Institute) at Caltech. The work of H.O.\ is also supported in part by the Simons
Investigator Award, by the WPI Initiative of MEXT of Japan, and by JSPS Grant-in-Aid for Scientific Research C-26400240. N.B.\ is a DuBridge Fellow of the Burke Institute. S.N.\ is supported by a Stanford Graduate Fellowship.
N.B.\ and B.S.\ thank the Stanford Institute for Theoretical Physics for hospitality. S.N., J.S., and M.W.\ thank the Burke
Institute at Caltech for hospitality.
H.O.\ thanks the hospitality of the Simons Foundation at the Simons Symposium on Quantum Entanglement.
\end{acknowledgments}

\appendix

\section{Universal graph models for holographic entropies}
\label{subsec:universal graph models}

In this section, we discuss an alternative approach to constructing graph models of holographic entropies.
Previously, in \cref{subsec:tograph}, we had described a method for constructing a graph for any geometry and assignment of boundary regions.
Our approach in this section is somewhat different.
Here, we discretize the given geometry to obtain a fine-grained graph model such that, for \emph{arbitrary} boundary regions in the geometry and corresponding boundary regions in the graph model, the discrete entropy in the graph model (defined as in \cref{dfn:graph model} to be the weight of a minimum cut separating boundary region and its complement) approximates the Ryu-Takayanagi entropy in the geometry.

The fact that the whole geometry can be converted to large graph with the same entropic structure is conceptually interesting, since it can be thought as a step towards constructing more advanced models of the bulk/boundary correspondence that capture the whole system at once (without requiring an a priori choice of boundary regions).
For example, one can think of our graph model as a skeleton that a tensor network model can be put on top of.
Furthermore, we will see that this approach naturally leads to a description of the geometry as a finely-weaved network of multiboundary wormholes and thereby supports the picture laid out by the ER$=$EPR proposal \cite{lenny} (cf.\ \cref{sec:outlook}).

In the following we sketch the method for two-dimensional surfaces (however, we believe that the method can be extended to higher dimensions and made mathematically rigorous).
To start, consider a Riemannian manifold with boundary (see \cref{fig:fromgraph-universal}, (a)).
Any triangulation of the bulk manifold gives rise to a weighted graph, which we call a \emph{discretization} of the geometry, where the weight of an edge in the triangulation is defined to be the geodesic distance of its endpoints in the geometry.
Let us choose a very fine-grained triangulation of the bulk manifold.
This triangulation should be such that any tangent vector of the manifold can be well-approximated by an edge of the graph.
Then we can approximate the geodesic distance between any two points on the geometry by the weight of a shortest path connecting the corresponding two points on the graph.
For example, the length of the minimal geodesic for region $A$ in \cref{fig:fromgraph-universal}, (a) is well-approximated by the length of the path highlighted in the graph in \cref{fig:fromgraph-universal}, (b).%
\footnote{Although $A$ is a full boundary in \cref{fig:fromgraph-universal}, (a), in general there is no need to assume so.}
The final step is to construct the dual graph of the discretization.
This is the graph obtained by adding vertices for all triangles of the original graph and edges between any two neighboring triangles.
The weight of an edge in the dual graph is defined as the weight of the edge separating the triangles in the original graph, divided by $4 G_N$ (see \cref{fig:dualgraph} and \cref{fig:fromgraph-universal}, (c)).
Indeed, while in the discretization the edge weights carry units of length, the edge weights in the dual graph represent entropy.
We remark that the obtained graph is automatically trivalent at its bulk vertices.
Now observe that any path on the original discretization of the geometry cuts a number of edges of the dual graph along its way.
The total weight of these edges is by definition proportional to the length of the path.
As an example, in \cref{fig:fromgraph-universal}, (c) we have highlighted the edges of the dual graph that are cut by the path in \cref{fig:fromgraph-universal}, (b). The total length of the former, divided by $4 G_N$, equals the weight of the latter, which in turn approximates the entropy of region $A$.
Thus it is easy to see that in general the weight of a minimal cut separating an arbitrary boundary region $A$ from its complement is equal to the length of a shortest path on the original graph that is homologous to $A$, and therefore approximately equal to the entropy of $A$.
We conclude that the dual graph constructed in this way approximates the holographic entropies of the given bulk geometry for any choice of boundary region.

\begin{figure}
  \centering
  \includegraphics[width=0.7\linewidth]{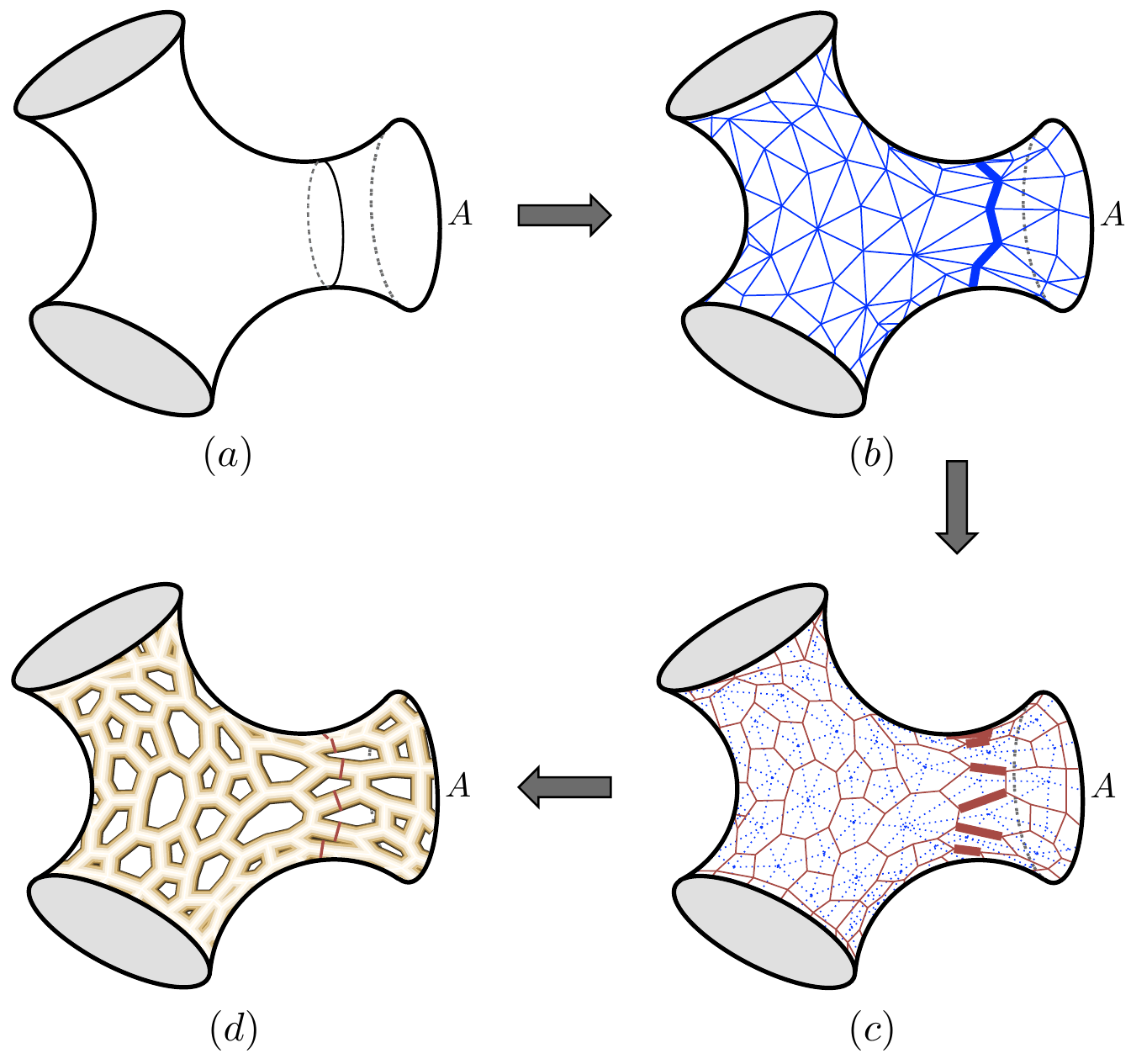}
  \caption{(a)--(c) Construction of a universal graph model obtained by discretizing the given geometry; (d) corresponding multiboundary wormhole geometry.}
\label{fig:fromgraph-universal}
\end{figure}

\begin{figure}
  \centering
  \includegraphics[width=0.7\linewidth]{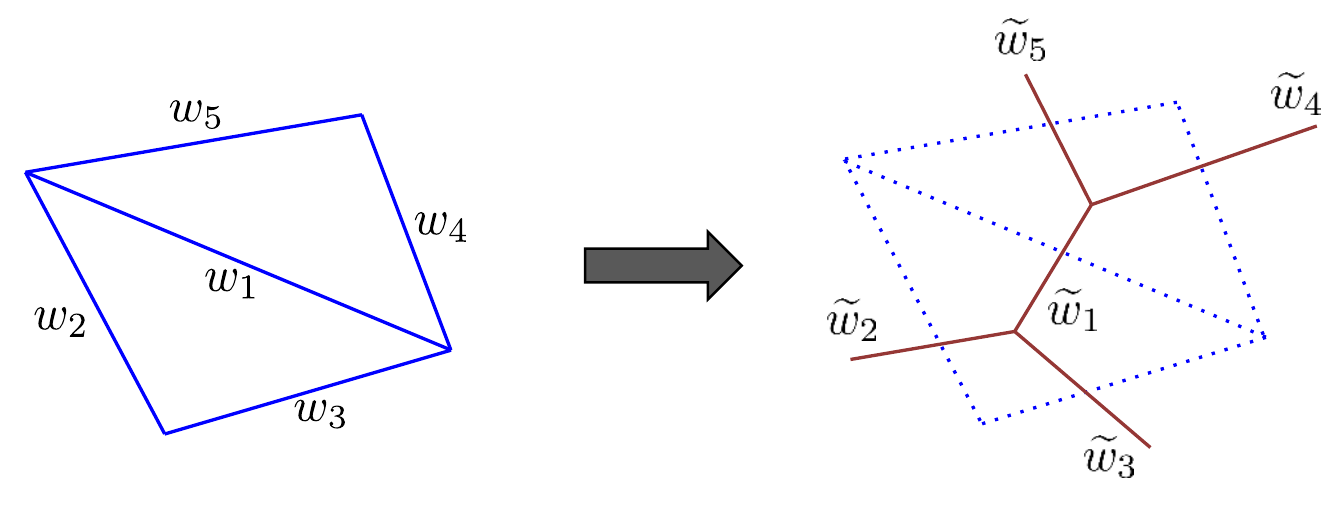}
  \caption{Construction of the dual graph. The weight $\widetilde w$ of any dual edge is defined as the weight $w$ of the edge it cuts in the original graph divided by $4 G_N$.}
\label{fig:dualgraph}
\end{figure}

We remark that a similarly fine-grained discretization can also be obtained by the procedure of \cref{lem:tograph} when applied to a very fine decomposition of the boundary manifold, $\partial X = A_1 \cup \dots \cup A_n$ (which also works for dimensions higher than two).

\medskip

It is interesting to go further and convert this graph model back into a hyperbolic surface as described in \cref{subsec:fromgraph}.
In this way we obtain a finely-weaved network of multiboundary wormhole geometries that has the same entropic structure of the original bulk surface.
See \cref{fig:fromgraph-universal}, (d) for an illustration, where we have also highlighted the minimal geodesic for region $A$.
Its length approximates the length of the minimal geodesic in the original geometry, \cref{fig:fromgraph-universal}, (a), and so the holographic entropies are approximately equal.

\section{Proof of the cyclic entropy inequalities}
\label{sec:cyclic appendix}

In this section we give a detailed proof of the cyclic family of holographic entropy inequalities \cref{eq:general cyclic} that we discussed in \cref{subsec:cyclic}.
For this, it will be convenient to introduce the notation $C_n(k) := \sum_{i=1}^n S(A_{i+1} \dots A_{i+k})$ for a cyclic sum of $k$-body entropies of an $n$-partite system.
Then \cref{eq:general cyclic} can be written in the following way:
\begin{equation}
\label{eq:general cyclic appendix}
  C_n(a, b) := C_n(a) - C_n(b) - S (A_1 \dots A_n) \geq 0
\end{equation}
where $a = k + l$ and $b = k$ in terms of the parametrization of \cref{eq:general cyclic}.
By evaluating \cref{eq:general cyclic appendix} on Bell pairs, we obtain the following necessary conditions on the parameters $a$, $b$ and $n$ for \cref{eq:general cyclic appendix} to be valid holographic entropy inequality:

\begin{lem}
\label{lem:cyclic necessary}
  The cyclic inequality \cref{eq:general cyclic appendix} can only be valid if $a > b$ and $a+b \leq n$.
\end{lem}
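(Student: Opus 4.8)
The plan is to test the candidate inequality \cref{eq:general cyclic appendix} against the holographic entropy vectors of \emph{Bell pairs}, which are the simplest extreme rays of $\calC_n$ and whose entropies can be read off directly. Recall that a Bell pair shared between two regions $A_p$ and $A_q$, where $p,q \in [n+1]$ and $A_{n+1}=O$ is the purifying region, contributes $S(I)=1$ precisely when the cut $I$ separates the two parties, i.e.\ when exactly one of $p,q$ lies in $I$, and $S(I)=0$ otherwise. Evaluating $C_n(a,b)$ on such a vector therefore reduces to a purely combinatorial count of how many of the $n$ cyclic windows $A_{i+1}\dots A_{i+k}$ of a given length $k$ separate the chosen pair. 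The two claimed conditions will come from two well-chosen Bell pairs.

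To obtain $a>b$, I would evaluate on the Bell pair shared between the single region $A_1$ and the purifying region $O$. Since $O$ is never contained in any window $A_{i+1}\dots A_{i+k}\subseteq[n]$, a length-$k$ window contributes $1$ iff it contains $A_1$; there are exactly $k$ such windows for every $1\le k\le n$, so $C_n(k)=k$. As the full system contains $A_1$ but not $O$, we also have $S(A_1\dots A_n)=1$. Substituting into \cref{eq:general cyclic appendix} gives $C_n(a,b)=a-b-1$, and non-negativity forces $a-b\ge 1$, i.e.\ $a>b$ since $a,b$ are integers.

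For $a+b\le n$, I would evaluate on a Bell pair between two genuine regions $A_i,A_j\in[n]$ whose cyclic positions are separated by a gap $g$ (with $1\le g\le n-1$ and complementary gap $n-g$). Now $S(A_1\dots A_n)=0$, since the full region contains both parties. The key step is the window count: a length-$k$ cyclic interval separates $\{i,j\}$ unless it contains both or neither, and counting the windows that bridge each of the two arcs (of $g+1$ and $n-g+1$ vertices) yields, after combining, the clean closed form
\[ C_n(k)=2\,\min\{\,k,\ n-k,\ g,\ n-g\,\}. \]
I expect this count to be the main technical obstacle: one must enumerate the windows containing both endpoints as a piecewise function of $k$ relative to $g$ and $n-g$, taking care of the plateau regime and the degenerate wrap-around case $k=n$, though the resulting function is symmetric under $k\mapsto n-k$ and hence manageable.

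With the formula in hand the conclusion is immediate. I would choose the gap so that $\min\{g,n-g\}=\min\{b,n-b\}$, which is always possible for $1\le b\le n-1$; then the $b$-term of $C_n(a,b)$ equals $\min\{b,n-b\}$. Assuming for contradiction that $a+b>n$, then since $a>b$ we get $n-a<b<a$, so $\min\{a,n-a\}=n-a<\min\{b,n-b\}$, whence
\[ C_n(a,b)=2\bigl(\min\{a,n-a\}-\min\{b,n-b\}\bigr)=2\bigl((n-a)-\min\{b,n-b\}\bigr)<0, \]
violating \cref{eq:general cyclic appendix}. Hence $a+b\le n$ is necessary, completing the proof. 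The edge case $b=0$ is handled separately and trivially, where $a+b\le n$ reduces to the requirement $a\le n$ that the windows be well-defined.
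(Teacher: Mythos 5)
Your proposal is correct and takes essentially the same approach as the paper: both conditions are extracted by evaluating the candidate inequality on Bell-pair entropy vectors, with the pair between $A_1$ and the purifying region giving $a>b$ exactly as in the paper's proof. The only difference is in the second step, where the paper fixes the second party at $A_{\lfloor n/2\rfloor+1}$ so that the gap terms drop out, $C_n(k)=2\min\{k,n-k\}$, and then case-splits on whether $a\le n/2$, while you keep a general gap $g$ and tune $\min\{g,n-g\}=\min\{b,n-b\}$ before arguing by contradiction; this is a difference of bookkeeping only, not of substance.
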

\begin{proof}
  For the first assertion, consider a geometry where all regions except $A_1$ are trivial.
  Entropically, this amounts to a Bell pair shared between $A_1$ and the purifying region $A_{n+1}$.
  Thus $C_n(a,b) \geq 0$ reduces to
  \[ a S(A_1) - b S(A_1) - S(A_1) \geq 0, \]
  which implies that $a - b - 1 \geq 0$, or $a > b$.

  For the second assertion, consider the entropies of a Bell pair shared between $A_1$ and $A_k$ (cf.~\cref{subsec:few regions}, where we had seen that these entropies are geometric). If we choose $k = \lfloor\frac n2\rfloor + 1$ then it is not hard to verify that
  $C_n(k) = 2 \min \{ k, n-k \}$.
  Thus $C_n(a,b) \geq 0$ amounts to
  \[ 2 \min \{ a, n-a \} - 2 \min \{ b, n-b \} \geq 0. \]
  If $a \leq \frac n2$ then $a + b < 2a \leq n$ by the first part, which is what we wanted to show.
  Otherwise, if $a > \frac n2$, then together with the first part we obtain the condition
  \[ n - b > n - a \geq \min \{ b, n-b \}. \]
  This implies that $n - a \geq b$, or $a + b \leq n$, as we set out to show.
\end{proof}
In the remainder of this section, we will show that the necessary conditions established in \cref{lem:cyclic necessary} are also sufficient.

\paragraph{Reduction.}
As a first step, we will apply a series of reductions. For this, we need the following lemma: 

\begin{lem}
\label{lem:reduction}
  The following inequalities are a direct consequence of strong subadditivity and weak monotonicity of the holographic entropy:
  \begin{itemize}
  \item For all $a \geq b \geq 1$, we have that $C_n(a-1,b-1) \geq C_n(a,b)$.
  \item For all $a + b \leq n$, we have that $C_n(a+1, b-1) \geq C_n(a, b)$.
  \item For all $a < \frac n2$, we have that $C_n(a+1) \geq C_n(a)$.
  \end{itemize}
\end{lem}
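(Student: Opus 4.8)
The plan is to reformulate all three inequalities as statements about the discrete derivative $D_n(k) := C_n(k) - C_n(k-1)$ of the cyclic $k$-body entropy sum, and then to establish each by summing a single instance of strong subadditivity or weak monotonicity over all $n$ cyclic shifts. First I would record the elementary algebraic reductions coming from $C_n(a,b) = C_n(a) - C_n(b) - S(A_1 \dots A_n)$: the first inequality is equivalent to $D_n(b) \geq D_n(a)$ for $a \geq b$, the second to $D_n(a+1) + D_n(b) \geq 0$ (equivalently $C_n(a+1) + C_n(b) \geq C_n(a) + C_n(b-1)$) for $a + b \leq n$, and the third to $D_n(a+1) \geq 0$ for $a < n/2$.

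For the first inequality it suffices to prove that $C_n$ is concave in $k$, i.e.\ $C_n(k+1) + C_n(k-1) \leq 2 C_n(k)$; the general statement then follows by chaining this one-step concavity, which makes $D_n$ non-increasing so that $a \geq b$ forces $D_n(b) \geq D_n(a)$. Concavity I would obtain by applying strong subadditivity $S(XY) + S(YZ) \geq S(Y) + S(XYZ)$ with $X = A_{i+1}$, $Y = A_{i+2} \dots A_{i+k}$ and $Z = A_{i+k+1}$, which reads $S(A_{i+1} \dots A_{i+k}) + S(A_{i+2} \dots A_{i+k+1}) \geq S(A_{i+2} \dots A_{i+k}) + S(A_{i+1} \dots A_{i+k+1})$. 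Summing over $i = 1, \dots, n$ with indices mod $n$, the two left-hand terms each contribute a full cyclic sum of $k$-windows, giving $2 C_n(k)$, while the right-hand terms give $C_n(k-1) + C_n(k+1)$, exactly the desired concavity.

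For the second inequality I would apply weak monotonicity $S(XY) + S(YZ) \geq S(X) + S(Z)$ with $Y = A_j$ a single region, $X = A_{j-a} \dots A_{j-1}$ an $a$-window on one side, and $Z = A_{j+1} \dots A_{j+b-1}$ a $(b-1)$-window on the other. This yields $S(A_{j-a} \dots A_j) + S(A_j \dots A_{j+b-1}) \geq S(A_{j-a} \dots A_{j-1}) + S(A_{j+1} \dots A_{j+b-1})$, an $(a+1)$-window plus a $b$-window on the left and an $a$-window plus a $(b-1)$-window on the right. Summing over all cyclic shifts $j$ gives $C_n(a+1) + C_n(b) \geq C_n(a) + C_n(b-1)$, which is the second claim. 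The third inequality is then the special case $b = a+1$: its disjointness condition $a + b \leq n$ becomes $2a + 1 \leq n$, i.e.\ $a < n/2$, and the conclusion collapses to $2 C_n(a+1) \geq 2 C_n(a)$.

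The main obstacle is not the entropy input, since each step consumes only one instance of strong subadditivity or weak monotonicity, but rather the combinatorial bookkeeping of the cyclic windows, and in particular checking that the three subsystems entering weak monotonicity are genuinely disjoint. This is exactly where the hypothesis $a + b \leq n$ is used: the regions $A_{j-a}, \dots, A_{j+b-1}$ number $a + b$, so they stay distinct with no wrap-around overlap around the cycle precisely when $a + b \leq n$, which is what makes the weak monotonicity instance legitimate. Some care is also needed in the degenerate case $b = 1$, where $Z$ is empty and weak monotonicity degenerates to the Araki--Lieb inequality $S(XY) \geq S(X) - S(Y)$, and in verifying that each cyclic sum reindexes correctly so that the shifted windows recombine into the intended $C_n(\cdot)$.
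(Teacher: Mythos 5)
Your proposal is correct and follows essentially the same route as the paper: each inequality is obtained by summing one instance of strong subadditivity (first claim) or weak monotonicity (second claim) over all cyclic shifts, with the third claim recovered as the case $b = a+1$ of the second, and with the hypothesis $a+b \leq n$ entering exactly where you say, to guarantee disjointness of the regions in the weak monotonicity instance. The only cosmetic difference is in the first claim, where you telescope the unit-step concavity $C_n(k+1) + C_n(k-1) \leq 2\,C_n(k)$ to show $D_n$ is non-increasing, whereas the paper applies a single strong-subadditivity instance with nested windows of sizes $a$ and $b$ directly; the two arguments are interchangeable.
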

\begin{proof}
  For the first claim, we use strong subadditivity, which asserts that
  \[ S(A_1 \dots A_b) + S(A_2 \dots A_a) \geq S(A_2 \dots A_b) + S(A_1 \dots A_a), \]
  or
  \[ S(A_2 \dots A_a) - S(A_2 \dots A_b) \geq S(A_1 \dots A_a) - S(A_1 \dots A_b). \]
  By summing over all the cyclic shifts and subtracting $S(A_1 \dots A_n)$ from both sides of the resulting inequality we obtain that $C_n(a-1,b-1) \geq C_n(a,b)$.

  The second claim is likewise a direct consequence of weak monotonicity, which shows that
  \[ S(A_1 \dots A_{a+1}) + S(A_{a+1} \dots S_{a+b}) \geq S(A_1 \dots A_a) + S(A_{a+2} \dots S_{a+b}), \]
  or
  \[ S(A_1 \dots A_{a+1}) - S(A_{a+2} \dots A_{a+b}) \geq S(A_1 \dots A_a) - S(A_{a+1} \dots S_{a+b}). \]

  The third claim is a reformulation of the second for the case $b = a + 1$.
\end{proof}

Now consider a general instance of \cref{eq:general cyclic appendix} that satisfies the necessary conditions of \cref{lem:cyclic necessary}.
Then we may apply the first part of \cref{lem:reduction} to reduce to the case where $a+b \in \{n-1,n\}$.
The second part can be used further to reduce to $a-b \in \{1,2\}$.
Thus it suffices to prove the following four instances of \cref{eq:general cyclic appendix} for arbitrary $k$:
\begin{equation}
\label{eq:final four}
  C_{2k+1}(k+1,k), \quad
  C_{2k+1}(k+1,k-1), \quad
  C_{2k}(k+1,k-1), \quad 
  C_{2k}(k,k-1).
\end{equation}
In fact, the following lemma allows us to focus on a single case.

\begin{lem}
\label{lem:reduction two}
  It suffices to prove the cyclic inequalities $C_{2k+1}(k+1,k)$ for all $k$.
\end{lem}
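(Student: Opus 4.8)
The plan is to reduce each of the three remaining instances in \cref{eq:final four} to the single family $C_{2k+1}(k+1,k)\ge 0$, which we take as given. The odd gap-two case is immediate: since $k-1<\tfrac{2k+1}{2}$, the third part of \cref{lem:reduction} gives $C_{2k+1}(k)\ge C_{2k+1}(k-1)$, and therefore
\[
  C_{2k+1}(k+1,k-1)=C_{2k+1}(k+1)-C_{2k+1}(k-1)-S(A_1\dots A_{2k+1})\ \ge\ C_{2k+1}(k+1,k)\ \ge\ 0 .
\]
So everything hinges on passing from $2k+1$ to $2k$ regions, for which I would use the empty-region embedding of \cref{subsec:symmetries}. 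Adjoining a trivial region $A_{2k+1}=\emptyset$ realizes $\calC_{2k}$ inside $\calC_{2k+1}$, so restricting the valid inequality $C_{2k+1}(k+1,k)\ge 0$ to this locus yields a valid inequality among the $2k$-region entropies; concretely, one evaluates $C_{2k+1}(k+1,k)$ after inserting the empty slot at one of the $2k$ cyclic ``seams'' between consecutive regions.

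A single seam does not reproduce a clean cyclic expression, but each seam-insertion is individually valid, so I would sum the $2k$ resulting inequalities to symmetrize. Writing $S_{\mathrm{tot}}:=S(A_1\dots A_{2k})$, the bookkeeping is to track, once the empty slot is deleted, how often each length-$(k+1)$ and length-$k$ window survives or is shortened to a $k$-body term. Carrying this out, the $2k$ seam-contributions add up to $k\big(C_{2k}(k+1)-C_{2k}(k-1)-2S_{\mathrm{tot}}\big)$, so dividing by $k$ gives the key bound
\begin{equation*}
  C_{2k}(k+1)-C_{2k}(k-1)\ \ge\ 2\,S_{\mathrm{tot}} .
\end{equation*}
This settles the even gap-two case at once, since $C_{2k}(k+1,k-1)=C_{2k}(k+1)-C_{2k}(k-1)-S_{\mathrm{tot}}\ge S_{\mathrm{tot}}\ge 0$.

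For the even gap-one case $C_{2k}(k,k-1)$ I would combine the key bound with the concavity of the cyclic sum, $2\,C_{2k}(k)\ge C_{2k}(k+1)+C_{2k}(k-1)$. This concavity is a clean consequence of strong subadditivity: applying $S(XY)+S(YZ)\ge S(Y)+S(XYZ)$ with $X=A_{i+1}$, $Y=A_{i+2}\dots A_{i+k}$, $Z=A_{i+k+1}$, and summing over all cyclic shifts, the left-hand side becomes $2\,C_{2k}(k)$ and the right-hand side $C_{2k}(k-1)+C_{2k}(k+1)$. Chaining the two estimates then gives
\[
  C_{2k}(k)-C_{2k}(k-1)\ \ge\ \tfrac12\big(C_{2k}(k+1)-C_{2k}(k-1)\big)\ \ge\ S_{\mathrm{tot}},
\]
so that $C_{2k}(k,k-1)=C_{2k}(k)-C_{2k}(k-1)-S_{\mathrm{tot}}\ge 0$, as required.

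I expect the main obstacle to be the combinatorial counting in the summation-over-seams step: one must verify that each $(k+1)$-body window reappears with the correct multiplicity and that each window crossing the deleted slot contributes exactly the intended lower-order term, so that the seam-sum collapses precisely to $k\big(C_{2k}(k+1)-C_{2k}(k-1)-2S_{\mathrm{tot}}\big)$. The other ingredients---the one-line reduction for the odd gap-two case, and the strong-subadditivity derivation of concavity---are routine once this bound is in hand.
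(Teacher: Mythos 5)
Your proof is correct, and its overall skeleton matches the paper's: the same monotonicity argument disposes of $C_{2k+1}(k+1,k-1)$, the same intermediate bound $C_{2k}(k+1)-C_{2k}(k-1)\geq 2S(A_1\dots A_{2k})$ (this is precisely \cref{eq:strong even} in the paper) handles the even cases, and your final two deductions coincide with the paper's --- your ``concavity'' statement is literally the first reduction of \cref{lem:reduction} with $(a,b)=(k+1,k)$. Where you genuinely diverge is in how you obtain the key bound. The paper goes \emph{down} in the number of regions: for each of the $2k$ adjacent pairs it merges $A_i$ with $A_{i+1}$ (the coarse-graining projection of \cref{subsec:symmetries}) and applies the assumed inequality $C_{2k-1}(k,k-1)$ for $2k-1$ regions, then averages over the $2k$ mergings. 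You instead go \emph{up}: you insert an empty region at each of the $2k$ seams (the embedding of \cref{subsec:symmetries}) and apply the assumed inequality $C_{2k+1}(k+1,k)$ for $2k+1$ regions, then average. Your counting, which you rightly flag as the crux, does go through: a $(k{+}1)$-window of the original regions avoids exactly $k$ of the $2k$ seams, each original $k$-window is produced $k+1$ times by slot-containing $(k{+}1)$-windows (positive sign) and $k+1$ times by slot-avoiding $k$-windows (negative sign), so these cancel, and each $(k{-}1)$-window occurs $k$ times; hence the seam-sum collapses to $k\bigl(C_{2k}(k+1)-C_{2k}(k-1)-2S(A_1\dots A_{2k})\bigr)$ exactly as you claim. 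The two routes are dual and equally rigorous; yours controls $2k$ regions using the $k$-th member of the assumed family while the paper uses the $(k{-}1)$-th member, a distinction with no practical consequence for the lemma. (Incidentally, your independent derivation also confirms that the displayed intermediate formula in the paper's proof contains a typo --- the subtracted bracket should read $(k-1)C_{2k}(k)+k\,C_{2k}(k-1)$ rather than $(k-1)C_{2k}(k)-k\,C_{2k}(k-2)$ --- while its conclusion \cref{eq:strong even} is correct, since both derivations reproduce it.)
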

\begin{proof}
  We show that all other instances in \cref{eq:final four} can be reduced to instances of the first one.
  For the second instance, this follows simply by applying the third reduction in \cref{lem:reduction}:
  \begin{align*}
    &C_{2k+1}(k+1,k-1)
    = C_{2k+1}(k+1) - C_{2k+1}(k-1) - S(A_1 \dots A_n) \\
    \geq\;&C_{2k+1}(k+1) - C_{2k+1}(k) - S(A_1 \dots A_n)
    = C_{2k+1}(k+1,k)
  \end{align*}

  For the remaining two cases, where the number $n=2k$ of regions is even, this is somewhat more involved and we will use the following trick:
  For each $i \in [2k]$, consider the entropy inequality that is obtained by applying $C_{2k-1}(k,k-1)$ to the
  $n-1$ regions obtained by combining region $A_i$ and $A_{i+1}$.
  Let us denote this inequality by $C^{(i)}_{2k}$. By summing over all $i$, we find that
  \begin{align*}
    \sum_{i=1}^{2k} C^{(i)}_{2k} = &\big( k \, C_{2k}(k+1) + (k-1) C_{2k}(k) \big) - \big( (k-1) C_{2k}(k) - k \, C_{2k}(k-2) \big) \\
    -\;&2k \, S(A_1 \dots A_{2k}) \geq 0,
  \end{align*}
  or
  \begin{equation}
  \label{eq:strong even}
    \frac {C_{2k}(k+1) - C_{2k}(k-1)} 2 \geq S(A_1 \dots A_{2k}).
  \end{equation}
  This immediately implies the third inequality $C_{2k}(k+1,k-1) \geq 0$.
  The fourth inequality, $C_{2k}(k,k-1) \geq 0$, can likewise be obtained from \cref{eq:strong even} by using
  \[ C_{2k}(k) - C_{2k}(k-1) \geq \frac {C_{2k}(k+1) - C_{2k}(k-1)} 2, \]
  which is a reformulation of the first reduction in \cref{lem:reduction}.
\end{proof}

\paragraph{Proof by contraction.}
We now construct contractions $f_k \colon \{0,1\}^{2k+1} \rightarrow \{0,1\}^{2k+2}$ for each of the strongest cyclic inequalities $C_{2k+1}(k+1,k)$ that we had reduced to in \cref{lem:reduction two}.

We first define a map $g_k \colon \{0,1\}^{2k+1} \setminus \{0\} \rightarrow \{0,1\}^{2k+1}$ on all non-zero bitstrings in the following way:
The map $g_k$ preserves all ones and acts on each block of consecutive zeros in the following way:
\begin{itemize}
  \item If the length of the block of zeros is even, it is mapped to all ones:
  \[ \cdots\underline{00\cdots0}\cdots \mapsto \cdots\underline{11\cdots1}\cdots \]
  \item If the length of the block of zeros is odd, all digits except for the first are mapped to ones:
  \[ \cdots\underline{00\cdots0}\cdots \mapsto \cdots\underline{01\cdots1}\cdots \]
\end{itemize}
We allow for blocks of zeros to cyclically wrap around.
As we only consider non-zero bitstrings, the first digit in a block of zeros is always well-defined.
The following lemma is key to our argument:

\begin{lem}
\label{lem:cyclic contraction}
  The map $g_k$ is a contraction.
\end{lem}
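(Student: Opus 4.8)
The plan is to show that $g_k$ is $1$-Lipschitz for the ordinary Hamming distance $d_H$. Since all coefficients in the strongest cyclic inequality $C_{2k+1}(k+1,k)$ equal one, the weighted norms coincide with $d_H$, so ``contraction'' means exactly $d_H(g_k(x),g_k(x'))\le d_H(x,x')$. As is standard for maps of hypercubes, I would first reduce this global estimate to the case of neighbouring bitstrings. The only subtlety is that $g_k$ is defined only on the punctured cube $\{0,1\}^{2k+1}\setminus\{0\}$, so I cannot blindly use an arbitrary geodesic. Given two nonzero bitstrings $x,x'$ with supports $S,S'$, I would route a monotone path by first turning on all coordinates of $S'\setminus S$ and then turning off all coordinates of $S\setminus S'$. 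Its length is $d_H(x,x')$, and because $S'$ is nonempty and remains contained in the running support throughout, this geodesic never passes through the all-zeros string. Applying the triangle inequality along this path then reduces everything to the edge estimate $d_H(g_k(z),g_k(z'))\le 1$ for adjacent nonzero $z,z'$.

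For the edge estimate, suppose $z$ and $z'$ differ in a single coordinate $j$; by symmetry I take $z_j=0$ and $z'_j=1$. Flipping $j$ affects only the maximal block of consecutive zeros of $z$ containing $j$ (blocks are well-defined and bounded by ones on both sides because $z\neq 0$, and may wrap around cyclically). If this block has length $m$ with $p$ zeros before $j$ and $q$ zeros after it, then $p+q=m-1$, and in $z'$ the block is replaced by a zero-block of length $p$, the fixed one at $j$, and a zero-block of length $q$. Since $g_k$ fixes every one and only rewrites the block positions, $g_k(z)$ and $g_k(z')$ can differ only within these $m$ coordinates. The key bookkeeping is that $g_k$ maps a block of even length to all ones (contributing no output zero) and a block of odd length to $0\,1\cdots1$ (contributing a single output zero, located at the block's first position).

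It then remains to compare the number and location of output zeros in the two images. When $m$ is even, the splitting forces $p$ and $q$ to have opposite parity, so $z'$ produces exactly one output zero (from the unique odd sub-block) while $z$ produces none. When $m$ is odd, $p$ and $q$ share the same parity, and a short check of the two sub-cases shows that the single output zero of $z$ at the block's first position either disappears entirely (both sub-blocks even) or is accompanied by exactly one additional output zero (both sub-blocks odd). In every case the two images differ in exactly one coordinate, so $d_H(g_k(z),g_k(z'))=1$, and the degenerate situations $p=0$, $q=0$, or $m=1$ are subsumed by the same count.

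I expect the main obstacle to be organizational rather than conceptual: one must treat the cyclic wrap-around of zero-blocks and the possibly empty sub-blocks uniformly, so that the parity count applies without special-casing, and one must verify that the geodesic-avoidance argument genuinely keeps every intermediate vertex nonzero. Once the single edge estimate $d_H(g_k(z),g_k(z'))\le 1$ is established, the contraction property of $g_k$ follows immediately from the reduction in the first paragraph.
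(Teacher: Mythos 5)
Your proposal is correct and takes essentially the same route as the paper: reduce the contraction property to Hamming-adjacent nonzero bitstrings via the triangle inequality along a path that avoids the all-zeros string, then verify the edge estimate by a parity analysis of how the flipped coordinate splits a block of zeros (your zero-counting bookkeeping reproduces exactly the paper's four cases). Your explicit monotone routing of the path (turning on $S'\setminus S$ before turning off $S\setminus S'$) even makes precise a step the paper merely asserts, namely that a geodesic consisting of nonzero bitstrings exists.
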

\begin{proof}
  We need to show that $\norm{g_k(x) - g_k(x')}_1 \leq \norm{x - x'}_1$ for all non-zero bitstrings $0 \neq x, x' \in \{0,1\}^{2k+1}$.
  We first consider the case where $\norm{x-x'}_1 = 1$.
  Without loss of generality, assume that $\norm{x'}_1 = \norm{x} + 1$.
  That is, $x'$ can be obtained from $x$ by flipping a single zero to one.
  Let $L$ denote the number of zeros on its left and $R$ the number of zeros on its right, so that $N=L+R+1$ is the total number of zeros in the block:
  \begin{equation}
  \label{eq:blocks of zeros}
     x  = \cdots \underline{0^L 0 0^R} \cdots \quad \text{ and } \quad
     x' = \cdots \underline{0^L 1 0^R} \cdots .
  \end{equation}
  Thus the block of zeros in $x$ is split into (generically) two blocks of zeros in $x'$; all other blocks of zeros are untouched.
  As the map $g_k$ acts independently on each block of zeros, it therefore suffices to consider its action on the blocks of zeros displayed in \cref{eq:blocks of zeros}.
  We now show that $\norm{g_k(x) - g_k(x')}_1 \leq 1$ by distinguishing four cases:
  \begin{itemize}
    \item The flipped zero is contained in a block of even length $N$:
    \begin{itemize}
      \item $L$ is even and $R$ is odd:
      \[
        \norm{g_k(\cdots \underline{0^L 0 0^R} \cdots) - g_k(\cdots \underline{0^L 1 0^R} \cdots)}_1 =
        \norm{\cdots \underline{1^L 1 1^R} \cdots - \cdots \underline{1^L 1 01^{R-1}} \cdots}_1 =
        1
      \]
      \item $L$ is odd and $R$ is even:
      \[
        \norm{g_k(\cdots \underline{0^L 0 0^R} \cdots) - g_k(\cdots \underline{0^L 1 0^R} \cdots)}_1 =
        \norm{\cdots \underline{1^L 1 1^R} \cdots - \cdots \underline{01^{L-1} 1 1^R} \cdots}_1 =
        1
      \]
    \end{itemize}

    \item The flipped zero is contained in a block of odd length $N$:
    \begin{itemize}
      \item $L$ and $R$ are both odd:
      \[
        \norm{g_k(\cdots \underline{0^L 0 0^R} \cdots) - g_k(\cdots \underline{0^L 1 0^R} \cdots)}_1 =
        \norm{\cdots \underline{01^{L-1} 1 1^R} \cdots - \cdots \underline{01^{L-1} 1 01^{R-1}} \cdots}_1 =
        1
      \]
      \item $L$ and $R$ are both even:
      \[
        \norm{g_k(\cdots \underline{0^L 0 0^R} \cdots) - g_k(\cdots \underline{0^L 1 0^R} \cdots)}_1 =
        \norm{\cdots \underline{01^L 1^R} \cdots - \cdots \underline{1^L 1 1^R} \cdots}_1 =
        1
      \]
    \end{itemize}
  \end{itemize}
  To conclude the proof, observe that the general case where $\norm{x-x'}_1 = d$ can always be reduced to the above.
  For this, choose a sequence of $d+1$ non-zero bitstrings $x_0, \dots, x_d$ such that $x_0 = x$, $x_d = x'$ and each $\norm{x_i - x_{i+1}}_1 = 1$.
  Then the triangle inequality and the above show that, indeed,
  \[
    \norm{g_k(x) - g_k(x')}_1
    \leq \sum_{i=0}^{d-1} \norm{g_k(x_i) - g_k(x_{i+1})}_1
    \leq \sum_{i=0}^{d-1} \norm{x_i - x_{i+1}}_1
    = \norm{x - x'}_1.
    \qedhere
  \]
\end{proof}

We now define the contraction that will prove the extremal cyclic inequality:
\begin{equation}
\label{eq:cyclic contraction}
  f_k \colon \begin{cases}
    \{0,1\}^{2k+1} &\rightarrow \{0,1\}^{2k+1} \times \{0,1\} = \{0,1\}^{2k+2} \\
    0 &\mapsto (0, 0) \\
    x &\mapsto (\overline{\pi_k(g_k(\pi_k^{-1}(x)))}, 1)
  \end{cases}.
\end{equation}
Here, $\overline{\cdots}$ denotes the bitwise inversion and $\pi_k$ denotes the permutation operator that acts on bitstrings of length $2k+1$ in the following way:
\[ \pi_k (v_0, v_1, \dots, v_{2k}) = (v_0, v_2, \dots, v_{2k}, v_1, v_3, \dots, v_{2k-1}). \]
The following theorem and corollary conclude our proof of the cyclic entropy inequalities:

\begin{thm}
\label{thm:strongest cyclic}
  The map $f_k$ is a `proof by contraction' of the cyclic entropy inequality $C_{2k+1}(k+1,k) \geq 0$.
\end{thm}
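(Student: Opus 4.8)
The plan is to establish the two hypotheses of \cref{thm:proof by contraction} for the map $f_k$: that it is a $\norm{\cdot}_1$-$\norm{\cdot}_1$-contraction (all coefficients of $C_{2k+1}(k+1,k)$ equal one), and that it carries each occurrence vector $x_i$ to the corresponding $y_i$. Writing the inequality in the form $\sum_l S(I_l) \ge \sum_r S(J_r)$, the left-hand side consists of $L = 2k+1$ cyclic blocks $I_l$ of length $k+1$, while the right-hand side has $R = 2k+2$ terms (the $2k+1$ length-$k$ blocks $J_r$ together with the full system $A_1\cdots A_{2k+1}$); these are exactly the domain and codomain dimensions of $f_k$. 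Once both conditions are verified, \cref{thm:proof by contraction} yields the inequality immediately.

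For the contraction property I would exploit that, on nonzero inputs, $f_k$ is the composition of the permutation $\pi_k^{-1}$, the block map $g_k$, the permutation $\pi_k$, and bitwise complementation, followed by appending a bit. Permutations and complementation preserve the Hamming norm, so for $x,x'\neq 0$ one gets $\norm{f_k(x)-f_k(x')}_1 = \norm{g_k(\pi_k^{-1}x)-g_k(\pi_k^{-1}x')}_1 \le \norm{\pi_k^{-1}x-\pi_k^{-1}x'}_1 = \norm{x-x'}_1$ directly from \cref{lem:cyclic contraction}, the appended bits agreeing. The only case requiring separate treatment is $x=0$, where $g_k$ is undefined. Setting $w=\pi_k^{-1}(x')$, a short computation gives $\norm{f_k(0)-f_k(x')}_1 = p_{\mathrm{odd}}(w)+1$, where $p_{\mathrm{odd}}(w)$ counts the odd-length cyclic runs of zeros in $w$ (each odd run leaving one zero under $g_k$, each even run none). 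If $w$ is all ones this bound is trivial; otherwise one-runs and zero-runs alternate cyclically and are equinumerous, so $\norm{w}_1 \ge p_{\mathrm{odd}}(w)$, and equality would force every one-run to have length one and every zero-run odd length, making the total length even — impossible for length $2k+1$. Hence $\norm{f_k(0)-f_k(x')}_1 = p_{\mathrm{odd}}(w)+1 \le \norm{w}_1 = \norm{x'}_1$.

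For the boundary conditions, the appended bit already encodes the full-system term: it is $1$ exactly when $x\neq 0$, matching $y_i$ for the $2k+1$ genuine regions, while the purifier gives $x_{2k+2}=0\mapsto(0,0)=y_{2k+2}$. It then remains to check that the first $2k+1$ coordinates $\overline{\pi_k(g_k(\pi_k^{-1}(x_i)))}$ reproduce the length-$k$ cyclic block recorded by $y_i$. The key observation is that $\pi_k$ is the \emph{doubling} permutation of indices, $(\pi_k v)_m = v_{2m \bmod (2k+1)}$, a bijection of $\mathbb{Z}_{2k+1}$ precisely because $2k+1$ is odd. Under $\pi_k^{-1}$ the $k+1$ consecutive ones of $x_i$ become an arithmetic progression of step two, on which $g_k$ sees isolated single zeros and acts in a controlled manner; reapplying $\pi_k$ and complementing then recovers a block of $k$ consecutive ones. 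I would confirm the alignment by tracking the endpoints of the block explicitly through the doubling map.

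The main obstacle is this last piece of index bookkeeping: showing that the doubling permutation sends the specific $(k+1)$-block of each $x_i$, after passing through $g_k$ and complementation, onto \emph{exactly} the $k$-block prescribed by $y_i$, and not merely onto some translate of it. The interaction between the step-two pattern produced by $\pi_k^{-1}$ and the parity-sensitive action of $g_k$ is delicate, and it is here that the particular choice of $\pi_k$ is essential; everything else reduces either to the distance-preservation of permutations and complementation or to the contraction of $g_k$ already established in \cref{lem:cyclic contraction}.
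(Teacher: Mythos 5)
Your contraction half is correct and complete, and it follows the paper's route for nonzero pairs (isometry of $\pi_k$ and bitwise inversion, then \cref{lem:cyclic contraction}); your treatment of the $x=0$ case is a valid variant of the paper's. The exact formula $\norm{f_k(0)-f_k(x')}_1 = p_{\mathrm{odd}}(w)+1$, with $p_{\mathrm{odd}}(w)$ the number of odd-length cyclic zero-runs of $w=\pi_k^{-1}(x')$, is right, and your parity argument ruling out $\norm{w}_1 = p_{\mathrm{odd}}(w)$ (all one-runs of length one and all zero-runs odd would force even total length, contradicting $2k+1$ odd) is sound. The paper gets the same conclusion more cheaply by reducing to Hamming-distance-one pairs, where the only case not covered by isometries is $x=0$ against a single-one string $x'$, for which $g_k(\pi_k^{-1}(x'))$ is all ones and hence $f_k(x')=(0,1)$.

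The genuine gap is the second hypothesis of \cref{thm:proof by contraction}: you never verify $f_k(x_i)=y_i$, explicitly deferring it as ``the main obstacle,'' so the proposal as written does not prove the theorem. Moreover, the obstacle you anticipate is illusory, and the idea you are missing is precisely the one the paper's proof turns on: each $\pi_k^{-1}(x_i)$ is a \emph{fixed point} of $g_k$. Indeed, $\pi_k^{-1}$ interleaves the $k+1$ ones of $x_i$ with its $k$ zeros, so every cyclic zero-run of $\pi_k^{-1}(x_i)$ has length exactly one; $g_k$ maps an odd-length run to a string whose first digit remains zero, so a length-one run is left unchanged and $g_k(\pi_k^{-1}(x_i)) = \pi_k^{-1}(x_i)$. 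Consequently $\pi_k \circ g_k \circ \pi_k^{-1}$ acts as the identity on every occurrence vector, and $f_k(x_i) = (\overline{x_i},1)$ with no index bookkeeping at all. The remaining identity $(\overline{x_i},1)=y_i$ is immediate from the definitions: the complement, modulo $2k+1$, of the $(k+1)$-block $\{i,\dots,i+k\}$ recorded by $x_i$ is exactly the $k$-block $\{i-k,\dots,i-1\}$ recorded by $y_i$, and since bitwise inversion acts positionwise, landing on ``some translate'' of the desired block is impossible --- the worry motivating your endpoint-tracking plan never arises. Adding this one observation would close the gap and recover the paper's proof.
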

\begin{proof}
  We first argue that $f_k$ is a contraction.
  As in the proof of \cref{lem:cyclic contraction}, it suffices to show that $\norm{f_k(x) - f_k(x')}_1 \leq 1$ for $\norm{x'} = \norm{x} + 1$.
  But since the permutation $\pi_k$ and the bitwise inversion are isometries, this is already proved in all cases except when $x = 0$ and $x'$ contains a single one.
  In the latter case, $\pi_k^{-1}(x')$ likewise contains only a single one.
  In other words, there is a single block of $2k$ zeros, so that $g_k(\pi_k^{-1}(x'))$ is the bitstring of all ones.
  It follows that $f_k(x') = (0,1)$, which has Hamming distance one from $f_k(x) = (0,0)$.
  Therefore $f_k$ is indeed a contraction.

  It remains to show that $f_k$ maps the occurrence vectors \cref{eq:occurrence vectors} onto each other.
  To fix our conventions, we shall write $C_{2k+1}(k+1,k) \geq 0$ in the form
  \begin{equation}
  \label{eq:cyclic convention}
    \sum_{i=1}^{2k+1} S(A_{i+k+1} \dots A_{i+2k+1}) \geq \sum_{i=1}^{2k+1} S(A_{i+1} \dots A_{i+k}) + S(A_1 \dots A_{2k+1}).
  \end{equation}
  Then the occurrence vectors of the $2k+1$ regions are given by $x_1 = 1^{k+1}0^k$, $y_1 = (0^{k+1}1^k, 1)$ and their respective cyclic permutations.
  One readily verifies that indeed $f_k(x_i) = y_i$ for all $i \in [2k+1]$.
  Indeed, each $\pi_k^{-1}(x_i)$ is a fixed point of $g_k$ and so $f_k$ acts as $f_k(x_i) = (\overline{x_i},1) = y_i$.
  On the other hand, for the purifying region $f_k(0) = (0,0)$ holds by definition.
\end{proof}

In view of \cref{lem:reduction two}, we obtain the following corollary as an immediate consequence.

\begin{cor}
  The cyclic entropy inequality $C_n(a,b) \geq 0$ is a valid holographic entropy inequality if and only if $a>b$ and $a+b \leq n$.
\end{cor}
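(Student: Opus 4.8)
The plan is to prove both directions of the equivalence by assembling the results established above. For the \emph{only if} direction I would simply invoke \cref{lem:cyclic necessary}, which already shows that evaluating $C_n(a,b) \geq 0$ on two explicit Bell-pair geometries (a pair shared between $A_1$ and the purifying region, and a pair shared between $A_1$ and $A_{\lfloor n/2\rfloor + 1}$) forces the conditions $a > b$ and $a + b \leq n$. Since these geometries realize genuine holographic entropy vectors, any valid inequality must hold for them, so the two conditions are necessary.

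For the \emph{if} direction, suppose $a > b$ and $a + b \leq n$; I would establish validity through the reduction chain already set up. Applying the first two parts of \cref{lem:reduction} — the monotonicity relations among the quantities $C_n(\cdot,\cdot)$ that themselves follow from strong subadditivity and weak monotonicity — lets me reduce the general instance to the boundary regime $a+b \in \{n-1,n\}$ and $a-b \in \{1,2\}$, i.e.\ the four cases listed in \cref{eq:final four}. Then \cref{lem:reduction two} collapses all four of these to the single strongest family $C_{2k+1}(k+1,k) \geq 0$. Finally, \cref{thm:strongest cyclic} supplies an explicit proof-by-contraction $f_k$ certifying exactly these inequalities, so the chain closes and sufficiency follows.

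Combining the two directions yields the stated equivalence. I expect the genuinely hard part to lie not in the corollary itself — which is a bookkeeping assembly of the preceding lemmas, as already flagged by the remark preceding the statement — but in the ingredients it rests on: constructing the explicit contraction $f_k$ of \cref{thm:strongest cyclic} and verifying the combinatorial contraction property of the auxiliary map $g_k$ acting on blocks of consecutive zeros. Once that certificate is in hand, the reductions are routine applications of strong subadditivity and weak monotonicity, and the necessity direction is a direct Bell-pair computation, so the corollary is immediate.
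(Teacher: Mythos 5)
Your proposal is correct and follows the paper's own proof exactly: necessity via the Bell-pair evaluations of \cref{lem:cyclic necessary}, and sufficiency by chaining \cref{lem:reduction} (reduce to $a+b\in\{n-1,n\}$, $a-b\in\{1,2\}$), \cref{lem:reduction two} (collapse to $C_{2k+1}(k+1,k)$), and the explicit contraction of \cref{thm:strongest cyclic}. Your assessment of where the real work lies -- in the contraction certificate $f_k$ and the combinatorial \cref{lem:cyclic contraction} for $g_k$, not in the corollary's assembly -- also matches the paper's presentation.
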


We illustrate the construction in the case of five regions ($k=2$). Here, the cyclic inequality \cref{eq:cyclic convention} reads
\begin{equation}
\label{eq:cyclic five appendix}
\begin{aligned}
  &S(DEA) + S(EAB) + S(ABC) + S(BCD) + S(CDE) \\
  \geq\; &S(BC) + S(CD) + S(DE) + S(EA) + S(AB) + S(ABCDE).
\end{aligned}
\end{equation}
The contraction $f_2$ as defined in \cref{eq:cyclic contraction} is displayed in \cref{tab:cyclic-contraction}.

\begin{table}
\centering
\begin{tabular}{c ccccc ccccc c}
  \toprule
  & \multicolumn{5}{c}{$x$} & \multicolumn{6}{c}{$y = f_2(x)$} \\
  \cmidrule(lr){2-6} \cmidrule(lr){7-12}
  & DEA & EAB & ABC & BCD & CDE & BC & CD & DE & EA & AB & ABCDE \\
O & 0 & 0 & 0 & 0 & 0 & 0 & 0 & 0 & 0 & 0 & 0 \\
  & 0 & 0 & 0 & 0 & 1 & 0 & 0 & 0 & 0 & 0 & 1 \\
  & 0 & 0 & 0 & 1 & 0 & 0 & 0 & 0 & 0 & 0 & 1 \\
  & 0 & 0 & 0 & 1 & 1 & 0 & 1 & 0 & 0 & 0 & 1 \\
  & 0 & 0 & 1 & 0 & 0 & 0 & 0 & 0 & 0 & 0 & 1 \\
  & 0 & 0 & 1 & 0 & 1 & 1 & 0 & 0 & 0 & 0 & 1 \\
  & 0 & 0 & 1 & 1 & 0 & 1 & 0 & 0 & 0 & 0 & 1 \\
C & 0 & 0 & 1 & 1 & 1 & 1 & 1 & 0 & 0 & 0 & 1 \\
  & 0 & 1 & 0 & 0 & 0 & 0 & 0 & 0 & 0 & 0 & 1 \\
  & 0 & 1 & 0 & 0 & 1 & 0 & 0 & 1 & 0 & 0 & 1 \\
  & 0 & 1 & 0 & 1 & 0 & 0 & 0 & 0 & 0 & 1 & 1 \\
  & 0 & 1 & 0 & 1 & 1 & 0 & 0 & 0 & 0 & 0 & 1 \\
  & 0 & 1 & 1 & 0 & 0 & 0 & 0 & 0 & 0 & 1 & 1 \\
  & 0 & 1 & 1 & 0 & 1 & 0 & 0 & 0 & 0 & 0 & 1 \\
B & 0 & 1 & 1 & 1 & 0 & 1 & 0 & 0 & 0 & 1 & 1 \\
  & 0 & 1 & 1 & 1 & 1 & 1 & 0 & 0 & 0 & 0 & 1 \\
  & 1 & 0 & 0 & 0 & 0 & 0 & 0 & 0 & 0 & 0 & 1 \\
  & 1 & 0 & 0 & 0 & 1 & 0 & 0 & 1 & 0 & 0 & 1 \\
  & 1 & 0 & 0 & 1 & 0 & 0 & 1 & 0 & 0 & 0 & 1 \\
D & 1 & 0 & 0 & 1 & 1 & 0 & 1 & 1 & 0 & 0 & 1 \\
  & 1 & 0 & 1 & 0 & 0 & 0 & 0 & 0 & 1 & 0 & 1 \\
  & 1 & 0 & 1 & 0 & 1 & 0 & 0 & 0 & 0 & 0 & 1 \\
  & 1 & 0 & 1 & 1 & 0 & 0 & 0 & 0 & 0 & 0 & 1 \\
  & 1 & 0 & 1 & 1 & 1 & 0 & 1 & 0 & 0 & 0 & 1 \\
  & 1 & 1 & 0 & 0 & 0 & 0 & 0 & 0 & 1 & 0 & 1 \\
E & 1 & 1 & 0 & 0 & 1 & 0 & 0 & 1 & 1 & 0 & 1 \\
  & 1 & 1 & 0 & 1 & 0 & 0 & 0 & 0 & 0 & 0 & 1 \\
  & 1 & 1 & 0 & 1 & 1 & 0 & 0 & 1 & 0 & 0 & 1 \\
A & 1 & 1 & 1 & 0 & 0 & 0 & 0 & 0 & 1 & 1 & 1 \\
  & 1 & 1 & 1 & 0 & 1 & 0 & 0 & 0 & 1 & 0 & 1 \\
  & 1 & 1 & 1 & 1 & 0 & 0 & 0 & 0 & 0 & 1 & 1 \\
  & 1 & 1 & 1 & 1 & 1 & 0 & 0 & 0 & 0 & 0 & 1 \\
\bottomrule
\end{tabular}
\captionof{table}{Proof by contraction of the five-party cyclic entropy inequality \cref{eq:cyclic five appendix}.}
\label{tab:cyclic-contraction}
\end{table}

The corresponding cuts for the right-hand side of the inequality formed according to $f_2$ from the minimal cuts $W_l$ for the left-hand side are given as follows:
\begin{align*}
U_{BC} = & W(00111) \cup W(00101) \cup W(01110) \cup W(00110) \cup W(01111) \\
= & W_{DEA} \cap W_{EAB} \cap W_{ABC}^c \cap W_{BCD}^c \cap W_{CDE}^c \\ \cup &W_{DEA} \cap W_{EAB} \cap W_{ABC}^c \cap W_{BCD} \cap W_{CDE}^c \\ \cup &W_{DEA} \cap W_{EAB}^c \cap W_{ABC}^c \cap W_{BCD}^c \cap W_{CDE} \\ \cup &W_{DEA} \cap W_{EAB} \cap W_{ABC}^c \cap W_{BCD}^c \cap W_{CDE} \\ \cup &W_{DEA} \cap W_{EAB}^c \cap W_{ABC}^c \cap W_{BCD}^c \cap W_{CDE}^c, \\
U_{CD} = & W(00111) \cup W(10010) \cup W(10111) \cup W(00011) \cup W(10011) \\
= & W_{DEA} \cap W_{EAB} \cap W_{ABC}^c \cap W_{BCD}^c \cap W_{CDE}^c \\ \cup &W_{DEA}^c \cap W_{EAB} \cap W_{ABC} \cap W_{BCD}^c \cap W_{CDE} \\ \cup &W_{DEA}^c \cap W_{EAB} \cap W_{ABC}^c \cap W_{BCD}^c \cap W_{CDE}^c \\ \cup &W_{DEA} \cap W_{EAB} \cap W_{ABC} \cap W_{BCD}^c \cap W_{CDE}^c \\ \cup &W_{DEA}^c \cap W_{EAB} \cap W_{ABC} \cap W_{BCD}^c \cap W_{CDE}^c, \\
U_{DE} = & W(11011) \cup W(10001) \cup W(10011) \cup W(01001) \cup W(11001) \\
= & W_{DEA}^c \cap W_{EAB}^c \cap W_{ABC} \cap W_{BCD}^c \cap W_{CDE}^c \\ \cup &W_{DEA}^c \cap W_{EAB} \cap W_{ABC} \cap W_{BCD} \cap W_{CDE}^c \\ \cup &W_{DEA}^c \cap W_{EAB} \cap W_{ABC} \cap W_{BCD}^c \cap W_{CDE}^c \\ \cup &W_{DEA} \cap W_{EAB}^c \cap W_{ABC} \cap W_{BCD} \cap W_{CDE}^c \\ \cup &W_{DEA}^c \cap W_{EAB}^c \cap W_{ABC} \cap W_{BCD} \cap W_{CDE}^c, \\
U_{EA} = & W(11100) \cup W(10100) \cup W(11101) \cup W(11000) \cup W(11001) \\
= & W_{DEA}^c \cap W_{EAB}^c \cap W_{ABC}^c \cap W_{BCD} \cap W_{CDE} \\ \cup &W_{DEA}^c \cap W_{EAB} \cap W_{ABC}^c \cap W_{BCD} \cap W_{CDE} \\ \cup &W_{DEA}^c \cap W_{EAB}^c \cap W_{ABC}^c \cap W_{BCD} \cap W_{CDE}^c \\ \cup &W_{DEA}^c \cap W_{EAB}^c \cap W_{ABC} \cap W_{BCD} \cap W_{CDE} \\ \cup &W_{DEA}^c \cap W_{EAB}^c \cap W_{ABC} \cap W_{BCD} \cap W_{CDE}^c, \\
U_{AB} = & W(11100) \cup W(11110) \cup W(01110) \cup W(01100) \cup W(01010) \\
= & W_{DEA}^c \cap W_{EAB}^c \cap W_{ABC}^c \cap W_{BCD} \cap W_{CDE} \\ \cup &W_{DEA}^c \cap W_{EAB}^c \cap W_{ABC}^c \cap W_{BCD}^c \cap W_{CDE} \\ \cup &W_{DEA} \cap W_{EAB}^c \cap W_{ABC}^c \cap W_{BCD}^c \cap W_{CDE} \\ \cup &W_{DEA} \cap W_{EAB}^c \cap W_{ABC}^c \cap W_{BCD} \cap W_{CDE} \\ \cup &W_{DEA} \cap W_{EAB}^c \cap W_{ABC} \cap W_{BCD}^c \cap W_{CDE}, \\
U_{ABCDE} = & W(00000)^c = W_{DEA} \cup W_{EAB} \cup W_{ABC} \cup W_{BCD} \cup W_{CDE}.
\end{align*}
We remark that this proof by contraction is essentially unique (up to trivial choices).

\paragraph{Independence.}
We finally show that each cyclic inequality $C_{2k+1}(k+1,k)$ is independent from all instances with smaller $k$.

\begin{prp}
\label{prp:independence}
  For each odd $n=2k+1$, there exists a quantum state $\rho_n$ whose von Neumann entropies violate $C_{2k+1}(k+1,k)$ but none of the cyclic entropy inequalities $C_{2l+1}(l+1,l)$ with $l<k$ (nor their permutations).

  Therefore, each of the cyclic entropy inequalities $C_{2k+1}(k+1,k)$ is independent from those with smaller $k$ (and their permutations).
\end{prp}
\begin{proof}
  Let $A$ be a matrix of size $(n+1) \times k$ with entries in a finite field $\mathbb F_p$ such that all subsets of rows have maximal rank.
  It is easy to see that such a matrix exists if $p$ is large enough.
  Now consider the pure state
  $\ket{\psi_{n+1}} = \frac 1 Z \sum_{x \in \mathbb F_p^k} \ket{A x} \in (\mathbb C^p)^{\otimes n+1}$,
  where we write $\ket y = \ket{y_1} \otimes \dots \otimes \ket{y_{n+1}}$ for any $y \in \mathbb F_p^{n+1}$ and where $Z$ is a suitable normalization constant.
  We remark that this family of states generalizes the four-party GHZ state.
  It is not hard to see that the von Neumann entropies associated with $\ket{\psi_{n+1}}$ are proportional to
  \begin{equation}
  \label{eq:special entropies}
    S(I) =  \min \{ \abs I, n+1 - \abs I, k \}
  \end{equation}
  by an overall factor of $\log p$.
  The same is true for its $n$-body reduced state $\rho_n$, which by construction is fully $S_{n+1}$-permutation symmetric. 

  We now evaluate the cyclic entropy inequalities.
  On the one hand, it is easy to see from \cref{eq:special entropies} that
  \[ C_{2k+1}(k+1,k) = (2k+1)(k-k) - 1 = -1 < 0.\]
  Therefore $\rho_n$ violates the cyclic inequality $C_{2k+1}(k+1,k)$.
  On the other hand, none of the cyclic inequalities $C_{2l+1}(l+1,l)$ with $l<k$ are violated by $\rho_n$.
  Indeed, we find that
  \[ C_{2l+1}(l+1,l) = (2l+1)(l+1-l) - S(A_1 \dots A_{2l+1}) \geq (2l + 1) - (2l + 1) = 0 \]
  when applied to the first $2l+1$ parties.
  While there are in general many more ways of applying an inequality for a fewer number of regions to a state of $n=2k+1$ regions (see \cref{subsec:symmetries}) they are in this case all equivalent as the entropies \cref{eq:special entropies} are fully $S_{n+1}$-symmetric.
\end{proof}

\bibliographystyle{JHEP}
\bibliography{holographic_cone}

\end{document}